\newtheorem{theorem}{Theorem}
\newtheorem{proposition}[theorem]{Proposition}
\newtheorem{corollary}[theorem]{Corollary}
\newtheorem{definition}[theorem]{Definition}
\newtheorem*{hyllprop}{Proposition}
\begin{document}

\title{A Logical Framework for Systems Biology}

\author{
Elisabetta de Maria \\
Univ. of Nice - Sophia-Antipolis  \\ 
edemaria@i3s.unice.fr
\and 
Jo{\"e}lle Despeyroux \\
INRIA and CNRS \\
joelle.despeyroux@inria.fr
\and
Amy P.~Felty \\
University of Ottawa  \\ 
afelty@eecs.uottawa.ca
}

\date{{\em Research Report} --- Version of \today} 

\maketitle

\begin{abstract}
We propose a novel approach for the formal verification of biological systems 
based on the use of a modal linear logic.
We show how such a logic can be used, with worlds as instants of time,
as an unified framework to encode both biological systems 
and 
temporal properties of their dynamic behaviour.
To illustrate our methodology, 
we consider a model of the P53/Mdm2 DNA-damage repair mechanism. 
We prove several properties that are important for such a model to satisfy 
and serve to illustrate the promise of our approach.  
We formalize the proofs of these properties in the Coq Proof Assistant, 
with the help of a Lambda Prolog prover for partial automation of the proofs. 
\end{abstract}


\section{Introduction}
\label{sec:intro}


In this paper, we consider the question of reasoning about 
biological systems in a modal linear logic.  We show that a new logic,
called Hybrid Linear Logic (HyLL) developed by the second author in
joint work with 
K. Chaudhuri~\cite{ChaudhuriDespeyroux:13tr,ChaudhuriDespeyroux:14}, 
is particularly well-suited to this purpose.  HyLL provides a unified
framework to encode biological systems, to express temporal properties
of their dynamic behaviour, and to prove these properties.  
By constructing proofs in the HyLL logic, 
we directly witness reachability as logical entailment.  
%
This approach is in contrast to
most current approaches to applying formal methods to systems biology,
which generally 
encode biological systems either in a dedicated programming language or in 
differential equations,
express properties in a temporal logic, and then
verify these properties against some form of traces built using an
external simulator.  In the next subsection, we review in some detail
the state of the art of such approaches, in order to further situate
and motivate our new approach.  In subsection~\ref{subsec:ll}, we
motivate our choice of linear logic in general and HyLL in
particular.  Then in subsection~\ref{subsec:contributions}, we further
outline our contributions as well as the overall organization of the
rest of the paper.

\subsection{Formal Methods for Systems Biology}


Computational systems biology provides a variety of methods for understanding
the structure of biological systems and for studying their dynamics, 
that is, the temporal evolution of the involved entities. 
%
According to the chosen abstraction level, several formalisms have been proposed 
in the literature to model biological networks
(e.g., gene regulatory networks, metabolic networks, or signal transduction networks).

%
%
To capture the qualitative nature of dynamics, Thomas introduced a Boolean approach for 
regulatory networks (an entity is present or absent) \cite{Thomas73jtb} and  
subsequently generalized it to multivalued levels of concentration \cite{TTK95bmb}. 
Contrary to Petri nets, which are based on synchronous updating techniques \cite{RML93ismb},
Thomas' discrete models are asynchronous. 
Other purely qualitative approaches are 
$\pi$-calculus \cite{RSS01psb},
bio-ambients \cite{RPSCS04tcs}, and reaction rules \cite{DanosLaneve:04-tcs,CCDFS04tcs}.

To describe the dynamics from a quantitative point of view, ordinary or
stochastic differential equations are heavily used. 
More recent approaches include hybrid Petri nets \cite{HT98isb} and automata \cite{ABIKMPRS01hscc}, 
stochastic $\pi$-calculus \cite{PC05tcsb}, and
rule-based languages with continuous/stochastic dynamics such as Kappa \cite{DanosLaneve:04-tcs},
%
Biocham \cite{FS08sfm}, or BioNetGen \cite{BFGH04bi}. 

The Biochemical Abstract Machine Biocham \cite{FSC04jbpc} is a
framework that allows the description of a biochemical system in terms
of reaction rules and the interpretion of it at different levels of abstraction, 
by either an asynchronous Boolean transition system (Boolean semantics), 
a continuous time Markov chain (stochastic semantics), 
or a system of ordinary differential equations over molecular concentrations (differential
semantics).
%
In this paper, taking inspiration from Biocham reaction rules to model regulatory networks,
we formalize the Biocham language (in the boolean case) in logic.

One of the most common approaches to the formal verification of biological systems is 
model checking \cite{C99mit}. 
Model checking allows one to verify desirable properties of a system by an exhaustive enumeration 
of all the states reachable by the system. 
In order to apply such a technique, the biological system should be encoded as a finite transition 
system and relevant system properties should be specified using propositional temporal logic. 
Formally, a transition system over a set $AP$ of atomic propositions is a tuple $M=(Q,T,L)$, 
where $Q$ is a finite set of states, $T \subseteq Q \times Q$ is a total transition relation (that
is, for every state $q \in Q$ there is a state $q' \in Q$ such that
$T(q,q')$), and $L: Q \rightarrow 2^{AP}$ is a labeling function that maps every state
into the set of atomic propositions that hold at that state.

Temporal logics are formalisms for describing sequences of
transitions between states \cite{E95else}.  
The computation tree logic
CTL$^\ast$ allows one to describe properties of computation trees. Its formulas are obtained by
(repeatedly) applying Boolean connectives, \emph{path quantifiers},
and \emph{state quantifiers} to atomic formulas. The path quantifier
$\tA$ (resp., $\tE$) can be used to state that all paths
(resp., some path) starting from a given state have some property.
The state quantifiers are 
the next time operator $\tX$, 
which can be used to impose that a property holds at the next state of a path, 
the operator $\tF$ (sometimes in the future), 
that requires that a property holds at some state on the path, 
the operator $\tG$ (always in the future), 
that specifies that a property is true at every state on the path, 
and the until binary operator $\tU$, 
which holds if there is a state on the path
where the second of its argument properties holds and, at every
preceding state on the path, the first of its two argument
properties holds. 
The branching time logic CTL is a fragment of CTL$^\ast$
that allows quantification over the paths starting from a given
state. Unlike CTL$^\ast$, it constrains every state quantifier to be
immediately preceded by a path quantifier. \
The linear time logic LTL is another known fragment of CTL$^\ast$ where one may only
describe events along a single computation path. 
Its formulas are of the form $\tA \varphi$, where $\varphi$ does not contain path
quantifiers, but it allows the nesting of state quantifiers. 
The Probabilistic Computation Tree Logic PCTL quantifies the different
paths by replacing the $\tE$ and $\tA$ modalities of CTL by probabilities.

In Biocham \cite{FS08sfm},
CTL, LTL, and a fragment of PCTL with numerical constraints are used in the three
semantics of reaction models, respectively, in the boolean semantics, in the differential
semantics and in the stochastic semantics.

Given a transition system $M=(Q,T,L)$, a state $q \in
Q$, and a temporal logic formula $\varphi$ expressing some desirable
property of the system, the \emph{model checking problem} consists
of establishing whether $\varphi$ holds at $q$ or not, namely,
whether $M, q \models \varphi$. Another formulation of the model checking problem
consists of finding all the states $q \in Q$ such that $M, q \models \varphi$. 
Observe that the second formulation is more general than the first one.

There exist several tools for checking if a finite state system verifies a given 
CTL, LTL, or PCTL formula, e.g., NuSMV \cite{CCG99cav}, SPIN \cite{HG03addison}, 
and PRISM \cite{HKNP06tacas}.

In contrast to the above approaches, in our new technique
we encode both biological systems and temporal properties in HyLL,
and prove that the properties can be derived from the system.
We focus on Boolean systems and in this case 
a time unit corresponds to a transition in the system.
We believe that discrete modeling is crucial in systems
biology because it allows taking into account some phenomena
that have a very low chance of happening (and could thus be neglected by
differential approaches), but which may have a strong impact on system
behavior.

\subsection{Linear Logic}
\label{subsec:ll}

Linear Logic (LL)~\cite{girard87tcs} is particularly well suited for describing 
state transition systems. 
LL  has been successfully used to model such diverse systems as: 
planning~\cite{saranli07icra}, 
Petri nets, CCS, the $\pi$-calculus~\cite{cervesato03tr,miller92welp}, 
concurrent ML~\cite{cervesato03tr}, security protocols~\cite{bozzano02phd}, 
multi-set rewriting, 
graph traversal algorithms~\cite{simmons08icalp}, 
and games.

In the area of biology, for example, a rule of activation 
(e.g., a protein activates a gene or the transcription of another protein)
can be modeled by the following LL axiom:
$$\cn{active}(a,b) \eqdef \cn{pres}(a) \limp (\cn{pres}(a) \otimes \cn{pres}(b)).$$
The formula $\cn{active}(a,b)$ describes the fact that a state where $\cn{pres}(a)$ is true
can evolve into a state where both  $\cn{pres}(a)$ and $\cn{pres}(b)$ are true.

Propositions such as $\cn{pres}(a)$ are called {\it resources}, and a rule in the logic 
can be viewed as a rewrite rule from a set of resources into another set of 
resources,
where a set of resources describes a state of the system.
Thus, a particular state transition system can be modeled by a set of rules of 
the above shape. The rules of the logic then allow us to prove some 
desired properties of the system, such as, for example, the existence of a stable state.

However, linear implication is
timeless: there is no way to correlate two concurrent transitions.
If resources have lifetimes and state changes have temporal, probabilistic or
stochastic \emph{constraints}, then the logic will allow inferences that may not
be realizable in the system being modeled. 
This was the motivation of 
the development of HyLL, which was designed
to represent constrained transition systems.

\subsection{Contributions and Organization}
\label{subsec:contributions}

In this work, we present some first applications of HyLL to systems
biology.  We present HyLL in Section~\ref{sec:hyll} and the overall
approach to the application domain in Section~\ref{sec:approach}.  
We choose a simple yet representative biological example 
concerning
the DNA-damage repair mechanism based on proteins p53 and Mdm2, 
and present and prove
several properties of this system (Section~\ref{sec:example}).  We
fully formalize these proofs in a theorem prover we have implemented
in the Coq Proof Assistant~\cite{BertotCasteran:2004} and
$\lambda$Prolog~\cite{MillerNadathur:2012}
(Section~\ref{sec:formal-proofs}). This prover is designed to both
reason in HyLL and to formalize meta-theoretic properties about it.

We discuss the merits
and eventual drawbacks of this new approach compared to approaches using 
temporal logic and model checking.
To better illustrate the correspondence with such approaches, 
which all use temporal logic to reason about 
(simulations of models of) the biological systems described, 
we also
present in some detail the encoding of temporal logic operators in HyLL 
(Section~\ref{sec:related-works}).
%

We conclude and discuss future work in Section~\ref{sec:conclusion}.
%
%

In appendix \ref{appendix:example.proofs}, we give the sequent proofs  
of the properties of our biological system;
the formalization of these proofs is available in our electronic 
appendix: \url{www.eecs.uottawa.ca/~afelty/fmmb14/}. 



\section{A Hybrid Linear Logic}
\label{sec:hyll}

HyLL is a conservative extension of intuitionistic
first-order linear logic (LL)~\cite{girard87tcs} where the truth judgements
are parameterized on a \emph{constraint domain}.
Instead of the ordinary judgement ``$A$ is true'', for a proposition $A$,
judgements of HyLL are of the form 
``$A$ is true under constraint $w$'', abbreviated as $A ~@~ w$. 
A typical example of such a judgement is ``$A$ is true at time $t$'', 
or ``with probability $p$.''


\subsection{HyLL Syntax}

Like in the linear logic LL, propositions are
interpreted as \emph{resources} which may be composed into a \emph{state} using
the usual linear connectives, and the linear implication ($\limp$) denotes a
transition between states. The world label $w$ of a judgement $A ~@~ w$ represents a
constraint on states and state transitions; particular choices for the worlds
produce particular instances of HyLL. 
The common component in all the instances of HyLL is the proof theory, 
which is fixed once and for all. 
%
The minimal requirement on the kinds of constraints that HyLL can deal with is defined as follows:

\begin{definition} \label{def:constraint-domain}
  A \emph{constraint domain} $\cal W$ is a monoid structure $\langle W, ., \iota\rangle$. 
  The elements of $W$ are called \emph{worlds}, 
  and the partial order $\preceq\ : W \times W$---defined as $u \preceq w$ if there exists 
  $v \in W$ such that $u . v = w$---is the \emph{reachability relation} in $\cal W$.
\end{definition}

\noindent
The identity world $\iota$ is $\preceq$-initial and is intended to represent the
lack of any constraints. Thus, the ordinary first-order linear logic is embeddable into any 
instance of HyLL by setting all world labels to the identity.
A typical and simple example of constraint domain is 
$\mathcal{T} = \langle \Re^+, +, 0\rangle$, or $\langle \N, +, 0\rangle$, 
representing instants of time. 


Atomic propositions are written using lowercase ($a, b, ...$) applied to
a sequence of \emph{terms} ($s, t, \ldots$), which are drawn from an untyped
term language containing term variables ($x, y, \ldots$) and function symbols
($f, g, ...$) applied to a list of terms. Non-atomic propositions are
constructed from the connectives of first-order intuitionistic linear logic and
the two hybrid connectives \emph{satisfaction} ($\texttt{at}$), which states that a
proposition is true at a given world ($w, \iota, u.v, \ldots$), and
\emph{localization} ($\downarrow$), which binds a name for the (current) world the 
proposition is true at. 
The following grammar summarizes the syntax of HyLL terms and propositions.

\smallskip
\bgroup
\begin{tabular}{l@{\ }r@{\ }l}
  $t$ & $::=$ & 
       $c ~|~ x ~|~ f(\vec t)$ \\
  $A, B$ & $::=$ & 
       $p(\vec t) ~|~ A \otimes B ~|~ \mathbf{1} ~|~ A \rightarrow B ~|~ 
                A \mathbin{\&} B ~|~ \top ~|~
       A \oplus B ~|~  \mathbf{0} ~|~ ! A ~|~ $ \\
       &  & $\forall x.~ A ~|~ \exists x.~ A ~|~
       (A ~\at~ w) ~|~ \downarrow u.~ A ~|~ \forall u.~ A ~|~ \exists u.~ A$ \\
\end{tabular}
\egroup

\smallskip
\noindent
Note that in the propositions $\downarrow u. A$, $\forall u.~ A$ and $\exists u.~ A$,
world $u$ is bound in $A$.
World variables cannot be used in terms, and neither can term variables occur in 
worlds; this restriction is important for the modular design of HyLL because it 
keeps purely logical truth separate from constraint truth.  
We let $\alpha$ range over variables of either kind.
%
Note that 
the $\downarrow$ connective commutes with every propositional connective, including 
itself \cite{ChaudhuriDespeyroux:13tr}.
That is, $\downarrow u.~ (A * B)$ is equivalent to 
$(\downarrow u.~ A) * (\downarrow u.~ B)$ for all binary connectives $*$, and
$\downarrow u.~ * A$ is equivalent to 
$* (\downarrow u.~ A)$ for every unary connective $*$, assuming the
commutation will not cause an unsound capture of $u$. 
It is purely a matter of taste where to place the $\downarrow$, and repetitions are harmless.
%
Note that $\downarrow$ and $\at$ commute freely with all non-hybrid 
connectives~\cite{ChaudhuriDespeyroux:13tr}.


\subsection{Sequent Calculus for HyLL}
\label{sec:hyll.sq}

We present the syntax of hybrid logic in a sequent calculus style, 
using Martin-L{\"o}f's principle of separating judgements (here: $A @ w$) 
and logical connectives (here: $\otimes, \limp$, ..., $\at$, ...) \cite{martin-lof96}.  
We use sequents of
%
the form $\Gamma; \Delta \vdashseq C ~@~ w$ where $\Gamma$ and $\Delta$ are 
sets of judgements of the form $A ~@~ w$, with $\Delta$ being moreover a \emph{multiset}.
$\Gamma$ is called the \emph{unrestricted context}: 
its hypotheses can be consumed any number of times.  
$\Delta$ is a \emph{linear context}: every hypothesis in it must be consumed singly in the proof.  
Note that in a judgement $A ~@~ w$ (as in a proposition $A ~\at~ w$), $w$ can be 
any expression in $\cal W$, not only a variable.
%
The notation $ A [ \tau / \alpha ]$ stands for the replacement of all free occurrences
of the variable $\alpha$ in $A$ with the expression $\tau$, avoiding capture. 
The expressions in the rules are to be read up to alpha-conversion.

\begin{figure*}[tp]   
\hspace{-0.5cm}
\framebox{\begin{minipage}{\textwidth}  

\input hyll-seq
\end{minipage}}
\caption{Sequent calculus for HyLL}
\label{fig:sq-rules}
\end{figure*}

The full collection of inference rules are in Fig.~\ref{fig:sq-rules}. 
The rules for the linear connectives are borrowed from \cite{ChangChaudhuriPfenning:03tr}
where they are discussed at length, so we omit a more thorough discussion here.
The rules for the first-order quantifiers are completely standard.  
%
%
A brief discussion of the hybrid rules follows. 
To introduce the \emph{satisfaction} proposition $(A ~\at~ u)$ (at any world $w$) 
on the right, the proposition $A$ must be true in the world $u$. 
The proposition $(A ~\at~ u)$ itself is then true at any world, not just
in the world $u$. In other words, $(A ~\at~ u)$ carries with it the world at which
it is true. Therefore, suppose we know that $(A ~\at~ u)$ is true (at any world $v$);
then, we also know that $A ~@~ u$, and we can use this hypothesis (rule ``at L''). 
These two introduction rules (on the right and on the lelft) match up precisely to 
(de)construct the information in the $A ~@~ w$ judgement.
%
The other hybrid connective of \emph{localisation}, $\downarrow$, is intended to be
able to name the current world. That is, if $\downarrow u.~ A$ is true at world $w$,
then the variable $u$ stands for $w$ in the body $A$. This interpretation is
reflected in its right introduction rule $\downarrow R$. For left introduction, 
suppose we have a proof of $\downarrow u.~ A ~@~ v$ for some world $v$. 
Then, we also know, and thus can use $A [v / u] ~@~ v$.

There are only two structural rules: the init rule infers an atomic initial sequent, 
and the copy rule introduces a contracted copy of an unrestricted assumption 
into the linear context (reading from conclusion to premise).
%
%
Weakening and contraction are admissible rules.
Their proofs are straightforward, by induction on the structure of the given 
derivations~\cite{ChaudhuriDespeyroux:13tr}.
%


\begin{theorem}[structural properties] \mbox{}
  \label{hyll-structural}
  \begin{itemize}
  \item If $\Gamma ; \Delta \vdashseq C ~@~ w$, then 
        $\Gamma, \Gamma' ; \Delta \vdashseq C ~@~ w$.  (weakening)
  \item If $\Gamma, A ~@~ u, A ~@~ u ; \Delta \vdashseq C ~@~ w$, then 
        $\Gamma, A ~@~ u ; \Delta \vdashseq C ~@~ w$. (contraction)
  \end{itemize}
\end{theorem}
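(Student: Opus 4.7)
The plan is to prove both statements by structural induction on the given derivation, exploiting the fact that the hybrid connectives $\at$ and $\downarrow$ act only on the world annotation of the conclusion's judgement and never modify the contexts in a structural way. In effect, all rules but the copy and init rules are uniform in the unrestricted context, so the hybrid cases present no special difficulties beyond those already familiar from first-order linear logic.

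For weakening, I would proceed by case analysis on the last rule of the derivation of $\Gamma;\Delta \vdashseq C~@~w$. The init rule inspects only the linear context, so its conclusion remains derivable after enlarging $\Gamma$ to $\Gamma,\Gamma'$. In every other rule, the unrestricted context is either propagated unchanged to the premises or augmented locally (as when a new hypothesis moves into $\Gamma$ under $!L$ or in the copy rule); applying the induction hypothesis to each premise with the additional context $\Gamma'$ and then reapplying the same rule yields $\Gamma,\Gamma';\Delta \vdashseq C~@~w$. For rules that introduce an eigenvariable, the parameter must be chosen fresh with respect to $\Gamma'$, which is handled by the standard implicit alpha-conversion convention and facilitated by the syntactic separation of world and term variables stipulated in the syntax.

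For contraction, I would proceed by induction on the derivation of $\Gamma, A~@~u, A~@~u;\Delta \vdashseq C~@~w$. In every rule whose conclusion and premises share the same unrestricted context (or extend it in a way that commutes with contraction), the two marked copies of $A~@~u$ appear in each premise; the induction hypothesis contracts them there, and a reapplication of the rule closes the case. The only case that is specific to the duplicated hypothesis is the copy rule. If it selects an unrestricted hypothesis other than $A~@~u$, the argument is mechanical. If it selects one of the two copies of $A~@~u$, then the premise has the form $\Gamma, A~@~u, A~@~u;\Delta, A~@~u \vdashseq C~@~w$; applying the induction hypothesis contracts the duplicate in $\Gamma$, after which a single application of copy using the surviving $A~@~u$ reconstructs the derivation.

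The only real point of care is the eigenvariable side condition in the rules for the first-order and world-level binders, where the freshly introduced parameter must remain fresh after the contexts are enlarged or shrunk. This is a routine bureaucratic matter resolved by the usual convention that bound variables are implicitly alpha-renamed; no issue specific to the hybrid connectives $\at$ and $\downarrow$ arises, so the proofs reduce to straightforward case analyses over the inference rules of Fig.~\ref{fig:sq-rules}.
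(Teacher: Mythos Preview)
Your proposal is correct and follows the same approach as the paper, which simply states that the proofs are straightforward by induction on the structure of the given derivations (citing \cite{ChaudhuriDespeyroux:13tr} for details). If anything, you supply more detail than the paper itself, and your handling of the copy rule and the eigenvariable side conditions is appropriate.
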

%



%
The most important structural properties are the admissibility of the identity and  
cut theorem.
Thanks to the cut-admissibility theorem, 
the proof of the consistency of the logic is reduced to
the observation that there is no cut-free derivation of 
$. ; . \vdashseq \mathop{0} ~@~ w$.

The identity theorem is the general case of the init rule.
Its proof is straightforward, by induction on the structure of $A$.

\begin{theorem}[identity]
  \label{thm:identity}
  $\Gamma, A ~@~ w \vdashseq A ~@~ w$.
\end{theorem}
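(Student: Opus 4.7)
The plan is to proceed by induction on the number of logical connectives in $A$. For the base case, when $A$ is an atomic proposition $p(\vec{t})$, the claim is established by the \emph{init} rule, preceded by a single application of \emph{copy} if the statement is read as having $A ~@~ w$ in the unrestricted context $\Gamma$ (and the linear context empty); otherwise \emph{init} applies directly.

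For the inductive step, I perform a case analysis on the outermost connective of $A$ and apply a uniform strategy: decompose the goal via the right-introduction rule for that connective and decompose the hypothesis $A ~@~ w$ via the matching left-introduction rule. The resulting premises are identity sequents on strictly smaller subformulas (or, in the $\downarrow$ case, on subformulas of equal size obtained via a world-for-world substitution), and each is discharged by the induction hypothesis. For the multiplicative cases such as $B \otimes C$ or $B \limp C$, some care is needed in splitting the linear context correctly between the two premises generated by the left rule, but the pattern is entirely standard. The additive and exponential cases are routine, and the first-order quantifier cases use a fresh eigenvariable on the right and instantiate with the same eigenvariable on the left.

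The hybrid cases are the only places where one must pause to check that the worlds line up. For $A = (B ~\at~ u)$, applying \emph{at R} to the goal $(B ~\at~ u) ~@~ w$ reduces it to $B ~@~ u$, and applying \emph{at L} to the hypothesis also yields $B ~@~ u$; the IH on $B$ then closes the branch, independent of the outer world $w$. For $A = \downarrow v.~ B$, applying $\downarrow R$ on the right and $\downarrow L$ on the left both produce $B[w/v] ~@~ w$, so the IH applies to $B[w/v]$. Since worlds are not themselves logical formulas, substitution of a world for a world-variable does not affect the connective count of $B$, so the induction measure still strictly decreases. This last observation---forcing the measure to be ``number of logical connectives'' rather than any raw term size---is the only mildly subtle point; once it is in place, each case is a mechanical matching of a right rule against its corresponding left rule, closely mirroring the pattern used in the proof of Theorem~\ref{hyll-structural}.
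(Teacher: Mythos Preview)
Your proposal is correct and follows exactly the approach the paper indicates: a straightforward induction on the structure of $A$, matching the right-introduction rule against the corresponding left-introduction rule in each case. The only minor inaccuracy is the closing remark that this mirrors the proof of Theorem~\ref{hyll-structural}; that theorem (weakening and contraction) is proved by induction on the given \emph{derivations}, not on the formula, so the analogy is loose---but this does not affect the substance of your argument.
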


\begin{theorem}[cut] 
  \label{thm:cut} 
  $
   \\
   1.~ {If}~ \Gamma ; \Delta \vdashseq A ~@~ u 
   ~{and}~ \Gamma ; \Delta', A ~@~ u \vdashseq C ~@~ w, 
   ~{then}~ \Gamma ; \Delta, \Delta' \vdashseq C ~@~ w \\
   2.~ {If}~ \Gamma ; . \vdashseq A ~@~ u 
   ~{and}~ \Gamma, A ~@~ u ; \Delta \vdashseq C ~@~ w, 
   ~{then}~ \Gamma ; \Delta \vdashseq C ~@~ w.
$
\end{theorem}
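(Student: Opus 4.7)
The plan is to prove the two statements of Theorem~\ref{thm:cut} simultaneously, by a nested (lexicographic) induction: the primary measure is the structure of the cut formula $A$, and the secondary measure is the multiset pair consisting of the heights of the two given derivations. Because cases may reduce the secondary measure while keeping the cut formula the same (the commutative cases), or may strictly reduce the cut formula while enlarging the derivations (the principal cases), this ordering is well-founded. The two statements are mutually needed because the left-commutative case on the \emph{copy} rule, and the principal case on $!$, are precisely where a linear cut on $!A ~@~ u$ must be replaced by an unrestricted cut on $A ~@~ u$, and symmetrically an unrestricted cut on $A ~@~ u$ is discharged by appealing to the linear cut on smaller arguments.

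The case analysis splits as usual into three families. First, the \emph{initial} cases where one of the two premises is an instance of \emph{init} or \emph{copy}: these are disposed of directly, producing the desired conclusion or invoking weakening (Theorem~\ref{hyll-structural}). Second, the \emph{principal} cases, where the last rule on the left introduces the cut formula on the right and the last rule on the right introduces it on the left; each such case reduces to one or two cuts on strictly smaller subformulas (possibly after a term or world substitution in a premise). Third, the \emph{commutative} cases, where the cut formula is a side formula in one of the last rules; here the cut is permuted above that rule, appealing to the induction hypothesis on a derivation of strictly smaller height.

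Most principal cases are entirely routine and mirror those for first-order intuitionistic linear logic \cite{ChangChaudhuriPfenning:03tr}. The genuinely new cases are the ones for the hybrid connectives. For $A \equiv (B \at v)$, when the left derivation ends in ``at R'' producing $B ~@~ v$ and the right derivation ends in ``at L'' consuming $B ~@~ v$, the cut on $(B \at v)$ at world $u$ reduces to a cut on the strictly smaller $B$ at world $v$. For $A \equiv \downarrow\! u'.~ B$, the right introduction binds $u'$ to the ambient world and the left introduction substitutes it, so the reduced cut is on $B[v/u']$ at world $v$; here one must invoke a substitution lemma for world variables in derivations, which is provable by a straightforward induction thanks to the syntactic separation of world variables and term variables. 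World quantifiers $\forall u,\,\exists u$ and term quantifiers behave in the same way, reducing the cut formula to a substitution instance whose size does not exceed that of the quantifier body (by the usual convention that substitution does not grow formula size modulo $\alpha$-equivalence).

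The main obstacle is the interaction of the exponential $!$ with the rest of the system. In the principal case where $A \equiv !B$, both the left derivation (ending in $!R$) and the right derivation (ending in the ``promotion/dereliction'' combination in its HyLL form) must have empty linear contexts on the relevant side, so the cut on $!B ~@~ u$ is eliminated by a cut of statement~2 on $B ~@~ u$, whose formula is strictly smaller. In the left-commutative case where the right derivation uses \emph{copy} to duplicate $!B ~@~ u$ into the linear context, appeal to statement~2 must be made with the same cut formula but a derivation of smaller height; it is precisely to make this bookkeeping go through that the two statements are proved together and the secondary induction on heights is needed. Once this case is correctly set up, the remaining cases follow the familiar pattern and the proof concludes.
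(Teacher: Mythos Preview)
Your outline is broadly right and matches the paper's approach, but the induction measure you propose does not actually carry the crucial case, and your description of that case is garbled.

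The problematic case is not a ``left-commutative case where the right derivation uses \emph{copy} to duplicate $!B~@~u$ into the linear context'' in statement~1; in statement~1 the cut formula sits in the linear context, and \emph{copy} acts only on the unrestricted one, so that situation never arises. The real difficulty is in statement~2: with $D :: \Gamma;\cdot\vdash A~@~u$ and $E :: \Gamma,A~@~u;\Delta\vdash C~@~w$ ending in \emph{copy} on $A~@~u$, the subderivation is $E' :: \Gamma,A~@~u;\Delta,A~@~u\vdash C~@~w$. Applying the inductive hypothesis for statement~2 to $D$ and $E'$ yields some $F :: \Gamma;\Delta,A~@~u\vdash C~@~w$, and you then need statement~1 on $D$ and $F$ with the \emph{same} cut formula $A$. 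But $F$ is the output of a previous appeal to the hypothesis, not a subderivation; its height is uncontrolled, so the multiset $\{h(D),h(F)\}$ need not be smaller than $\{h(D),h(E)\}$. Your secondary measure therefore does not license this call.

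The paper's one-line proof points at the fix: the lexicographic order must additionally rank kind~2 above kind~1, so that a cut of kind~2 may freely appeal to a cut of kind~1 on the same formula regardless of derivation size (this is exactly what ``cuts of kind~2 additionally allowed to justify cuts of kind~1'' is saying). Concretely, take the order to be (cut formula, then kind with $2>1$, then the derivation structures). With that amendment, every case in your outline goes through; the hybrid principal cases for $\at$ and $\downarrow$ are handled exactly as you describe (no separate world-substitution lemma is needed for $\downarrow$, since the rules themselves perform the substitution; it is needed for the world quantifiers).
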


\begin{proof} 
 By lexicographic structural induction on the given derivations, with cuts of
 kind 2 additionally allowed to justify cuts of kind 1. 
 See \cite{ChaudhuriDespeyroux:13tr} for the details.
\end{proof}

We can use the admissible cut rules to show that the following rules are
invertible: $\mathbf{1} L$, $\mathbf{0} L$, $\otimes L$, $\oplus L$, $! L$, $\exists L$, 
$\top R$, $\limp R$, $\mathbin{\&} R$, and $\forall R$.
In addition, the four hybrid rules, $\at R$, $\at L$, $\downarrow R$, and $\downarrow L$ are invertible
\cite{ChaudhuriDespeyroux:13tr}.

\begin{corollary}[consistency]
  \label{thm:consistency}
  There is no proof of  $ . ; . \vdashseq \mathop{0} ~@~ w$.
\end{corollary}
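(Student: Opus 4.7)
The plan is to derive consistency from cut-admissibility (Theorem~\ref{thm:cut}) together with a short case analysis on the shape of a hypothetical cut-free derivation. Concretely, suppose for contradiction that there is a proof $\pi$ of ${.}\,;\,{.} \vdashseq \mathbf{0} ~@~ w$. By the cut theorem, we can assume without loss of generality that $\pi$ is cut-free, so its last inference is one of the rules of Fig.~\ref{fig:sq-rules} other than cut.

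I would then inspect the final rule of $\pi$ and eliminate every possibility. The \emph{init} rule is ruled out because it requires an atomic formula on the right that is matched by the same atomic formula in the linear context, but here the succedent is $\mathbf{0}$ (not atomic) and $\Delta$ is empty. The \emph{copy} rule requires a non-empty unrestricted context $\Gamma$, which fails. Any left-introduction rule ($\otimes L$, $\limp L$, $\mathbin{\&} L$, $\oplus L$, $!L$, $\mathbf{1} L$, $\mathbf{0} L$, $\forall L$, $\exists L$, $\at L$, $\downarrow L$) must act on a principal formula drawn from $\Gamma$ or $\Delta$, both of which are empty, so none of these apply. Finally, any right-introduction rule would have to decompose the succedent $\mathbf{0}$; but $\mathbf{0}$, being the additive zero, has only a left rule ($\mathbf{0} L$) and no right introduction rule at all.

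Since no rule can conclude the sequent ${.}\,;\,{.} \vdashseq \mathbf{0} ~@~ w$, no cut-free proof exists, and hence (by cut-admissibility) no proof at all exists. The main ``obstacle'' is really just making sure the enumeration of last rules is exhaustive with respect to the presentation in Fig.~\ref{fig:sq-rules} and that the hybrid connectives do not introduce any surprise right rule targeting $\mathbf{0}$; since $\at R$ and $\downarrow R$ both require the succedent to have their respective shape $(A ~\at~ u)$ or $\downarrow u.~A$, they do not threaten the case analysis. Everything else reduces to the emptiness of $\Gamma$ and $\Delta$ and the absence of a right rule for $\mathbf{0}$.
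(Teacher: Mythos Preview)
Your proposal is correct and follows exactly the paper's approach: appeal to cut-admissibility to reduce to a cut-free derivation, then observe that no rule of the calculus can have ${.}\,;\,{.} \vdashseq \mathbf{0} ~@~ w$ as its conclusion. You simply spell out explicitly the ``simple inspection on the inference rules'' that the paper leaves to the reader.
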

 
\begin{proof}
  Suppose $ . ; . \vdashseq \mathop{0} ~@~ w$ is derivable. Then, by the cut-admissibility
  theorem (\ref{thm:cut}) on the sequent calculus, the sequent $ . ; . \vdashseq \mathop{0} ~@~ w$ 
  must have a cut-free proof. 
  However we can see by simple inspection on the inference rules that this cannot be the case,
  as this sequent cannot be the conclusion of any rule of inference in the calculus. 
  Therefore, $ . ; . \vdashseq \mathop{0} ~@~ w$ is not derivable.
\end{proof}

Note also that
HyLL is conservative with respect to intuitionistic linear logic: as long as
no hybrid connectives are used, the proofs in HyLL are identical to those in LL~\cite{ChaudhuriDespeyroux:13tr}.
%

An example of derived statements, true in every semantics for worlds, is the following:

\begin{proposition}[relocalisation]
\label{thm:relocalisation}
Any true judgement can be relocated at any time in the future:
$$\dfrac{\Gamma ; A_1 ~@~ w_1 \cdots  A_k ~@~ w_k \vdashseq B ~@~ v}
        {\Gamma ; A_1 ~@~ {u . w_1} \cdots A_k ~@~ {u . w_k} \vdashseq B ~@~ {u . v}}$$
\end{proposition}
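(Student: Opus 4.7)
The plan is to prove the proposition by structural induction on the given derivation of $\Gamma; A_1 ~@~ w_1, \ldots, A_k ~@~ w_k \vdashseq B ~@~ v$. The endofunction of interest left-multiplies by $u$ every world expression appearing in the sequent; intuitively, every reference to a world in the derivation acquires the common prefix $u$. Associativity of the monoid $\langle W,.,\iota\rangle$ ensures that every equational side condition a sequent rule imposes on its worlds is preserved under this shift, so the induction step consists in each case of applying the induction hypothesis to the premise(s) with shift $u$ and then reapplying the same rule to the shifted premises.

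The routine cases are the \textbf{init} rule (where the atom and its outer world are transformed identically on both sides, producing another init), the purely linear connective rules ($\otimes$, $\mathbin{\&}$, $\oplus$, $\limp$, the units $\mathbf{1}$, $\top$, $\mathbf{0}$, the exponential $!$) and the structural \textbf{copy} rule, all of which relate the outer worlds of premises and conclusion only by equality or by splitting of contexts; both are respected by a uniform shift. The first-order quantifier rules over terms are also routine because their eigenvariable conditions concern term variables, which by the syntactic stratification of HyLL cannot collide with the world $u$.

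The main obstacle is the hybrid fragment, namely $\at R$, $\at L$, $\downarrow R$, $\downarrow L$, and the $\forall/\exists$ rules over worlds. Consider $\at R$ for concreteness: the premise carries outer world $u'$ while the conclusion introduces the subformula $A ~\at~ u'$ with the same $u'$ appearing nested inside the formula, so a shift of only the outer $@$-worlds would decouple these two occurrences of $u'$. To keep the premise and conclusion compatible, the shift must be read uniformly, i.e.\ as acting on every world expression occurring in the derivation, including those nested inside formulas via $\at$ and $\downarrow$ and those sitting in $\Gamma$; with this reading the induction hypothesis delivers a premise whose inner and outer worlds are all prefixed by $u$, which is precisely what the shifted instance of the hybrid rule demands. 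For $\downarrow R$ and the world-quantifier rules, the eigenvariable must be chosen fresh from $u$, which is unproblematic by alpha-conversion since $u$ is fixed throughout the induction.
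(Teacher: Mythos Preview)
Your structural induction is fine on the non-hybrid fragment, and you are right that the $\at$ rules (and, for the same reason, \textbf{copy}) are where the naive argument breaks. But your repair --- extending the shift to world expressions inside formulas and to the labels in $\Gamma$ --- does not prove the proposition as stated: the rule keeps $\Gamma$ and the formulas $A_1,\dots,A_k,B$ literally fixed and shifts only the outer labels of the linear judgements and the goal. What a uniform shift establishes is a different (and easy) substitution lemma for derivations, not relocalisation. In fact the proposition taken literally is false: with $p$ atomic in $\langle\mathbb N,+,0\rangle$, the sequent $\cdot\,;\,(p~\at~0)~@~5 \vdashseq p~@~0$ is derivable ($\at L$ then \textbf{init}), whereas its shift by $u=1$, namely $\cdot\,;\,(p~\at~0)~@~6 \vdashseq p~@~1$, is not.

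The paper gives no proof here, deferring to its companion technical report; the statement is evidently meant under a side condition that you must make explicit, namely that no absolute world expression occurs inside the formulas of $\Gamma,\Delta$ or the goal --- every world under an $\at$ is built from variables bound by an enclosing $\downarrow$ or world quantifier. This is the situation throughout the paper's biological encoding, where $\Gamma$ holds only $\mathop{\dag}$-closed rules and the linear atoms carry no worlds. Under that hypothesis the $\downarrow$ and $\at$ cases go through not by shifting inside formulas but via the monoid associativity $(u.w).d = u.(w.d)$, which makes the substitution performed by $\downarrow R/\downarrow L$ at the shifted world agree with the $u$-shift of the original premise; that commutation is the argument you need, not a reinterpretation of the statement.
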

This property is particularly well suited to applications in biology.
The interested reader can find proofs and further meta-theoretical theorems about HyLL in \cite{ChaudhuriDespeyroux:13tr}.

\subsection{Some Definitions for Biology}  

We can define modal connectives in HyLL as follows:
%
\bgroup 
\begin{definition}[modal connectives] \label{def:modal-connectives} \mbox{} \\
\vspace{-1ex}
$$\begin{tabular}{l@{\ }l@{\ }l@{\ }l@{\ }l@{\ }l}
     $\Box A$ & $\eqdef$ ${\downarrow} u.~ \forall w.~ (A ~\at~ u . w)$ & \quad
     $\Diamond A$ & $\eqdef$ ${\downarrow} u.~ \exists w.~ (A ~\at~ u . w)$ & & \\
     $\delay{v} A$ & $\eqdef$ ${\downarrow} u.~ (A ~\at~ u . v)$ & \quad
     $\mathop{\dag} A$ & $\eqdef$ $\forall u.~ (A ~\at~ u)$ & & \\
\end{tabular}$$
\end{definition}
\egroup

\noindent
The connective $\delta$ represents a form of delay. Note its derived right rule:
$$\dfrac{\Gamma \vdashseq A ~@ ~ w . v} 
        {\Gamma \vdashseq \delay{v} A ~@~ w} ~[\delta R]$$

The proposition $\delay{v} A$ thus stands for an \emph{intermediate state} in a
transition to $A$. Informally it can be thought to be ``$v$ before $A$''.
%
\ednote[JD]{
 {\bf WAS}  ; thus,
$\Box A = \forall v. \delay{v} A$ represents \emph{all} intermediate states in the path to $A$,
and $\Diamond A = \exists v. \delay{v} A$  represents \emph{some} such state.  
\\ {\bf IS} $\Box A$ [resp. $\Diamond A$] represents all [resp. some] state(s) satisfying $A$ and reachable from now. }
The modally unrestricted proposition $\mathop{\dag} A$ represents a resource that is 
consumable in any world; it is mainly used to make transition rules applicable at all worlds.

\paragraph{}
It is worth remarking that HyLL proof theory can be seen as at least as
powerful as S5 \cite{ChaudhuriDespeyroux:13tr}.
Obviously HyLL is more expressive as it allows direct manipulation of the worlds using
the hybrid connectives: for example, the $\delta$ connective is not definable in S5.

\paragraph{} 
Oscillation is one of the typical properties of interest in biological systems
(illustrated here by Property 1 in Sect~\ref{sec:property1}).
%
%
%
%
In our logic, 
we can define one oscillation between $A$ and $B$, with respective delays $u$ and $v$, as follows:
\begin{definition}[one oscillation]
\label{def:one-oscillation} 

$\textrm{oscillate}_1\,(A,B,u,v) 
 \eqdef A ~\mathbin{\&}~ \delay{u} (B ~\mathbin{\&}~ \delay{v} A ) 
    ~\mathbin{\&}~ (A \mathbin{\&} B \limp \mathop{0}).$
\end{definition}
%

Note that the above HyLL proposition closely corresponds to the temporal formula
$A \land \tE \tF (B \land \tE \tF A)$.  

\paragraph{}
Oscillation can be more generally defined by the following proposition in HyLL:
\begin{definition}[oscillation]  

$\textrm{oscillate}_h\,(A,B,u,v) ~ \eqdef
\mathop{\dag} [ (A \limp \delay{u} B) ~\mathbin{\&}~ (B \limp \delay{v} A) ]
~\mathbin{\&}~ (A \mathbin{\&} B \limp \mathop{0})$.
\end{definition}
%

However, since oscillation can be considered a meta-level property of the biological systems
modeled in HyLL, this property is perhaps more naturally defined
as follows: 
\begin{definition}[oscillation] 
\label{def:meta-oscillation}
   $\textrm{oscillate}\,(A,B,u,v) ~ \eqdef{}$
   for any $w$, $(A ~@~ w \vdashseq B ~@~ w.u)$,
   $(B ~@~ w.u \vdashseq A ~@~ w.u.v)$,
   and $(\vdashseq  A \mathbin{\&} B \limp \mathop{0} ~@~ w)$.
\end{definition}

\subsection{Temporal Constraints}
\label{sec:hyllt}

In this paper, we only consider the constraint domain  
$\mathcal{T} = \langle \N, +, 0\rangle$ representing (discrete) instants of time,
and we write $\textrm{HyLL}[\mathcal{T}]$ for this instantiation of HyLL.
Delay (Definition~\ref{def:modal-connectives}) in $\textrm{HyLL}[\mathcal{T}]$ 
represents intervals of time;
 $\delay{d} A$ means ``$A$ will become available after delay $d$''.
This domain is very permissive because addition is commutative, 
resulting in the equivalence of $\delay{u} \delay{v} A$ and $\delay{v} \delay{u} A$.
The ``forward-looking'' connectives $\tG$ and $\tF$ of 
temporal logic
are precisely $\Box$ and $\Diamond$ of defn. \ref{def:modal-connectives}. 
For further discussion on the 
comparison with temporal logic and model checking, see Sec~\ref{sec:related-works}.
For our temporal specifications the notion of addition on times is fundamental.

Considering other constraint domains such as 
$\mathcal{T}_r = \langle \Re^+, +, 0\rangle$, or
probabilistic constraints (as discussed in \cite{ChaudhuriDespeyroux:13tr}) is left to future work.

\section{Approach}
\label{sec:approach}

In this work we take into consideration Boolean models consisting of 
(i) a set of Boolean variables, 
(ii) a (partially defined) initial state denoting the presence/absence of (some) variables, 
and (iii) a set of rules of the form
$L_i \Rightarrow R_i$, where the left (resp. right) hand side of the rule 
$L_i$ (resp. $R_i$) 
is the conjunction of a set of predicates concerning the presence/absence of variables. 
For example, $x_1 \wedge x_2 \Rightarrow \neg x_3$ is a valid rule. 
This kind of rule can be used to describe state transitions involving control variables or abstract processes. 
In Biocham \cite{FSC04jbpc}, they are mostly used to represent biochemical reactions 
(observe that Biocham rules are more restrictive, they do not 
express the absence of a variable). 
In this paper, we take advantage of these rules to express 
influence rules (e.g. activations and inhibitions) in a biological system.  
%

Observe that, given a Boolean model of this kind, although we do not do it,
it is always possible to build a transition system where 
the set of states is the set of all tuples of Boolean values denoting the presence/absence of the different variables,
and a pair of states ($s$,$s'$) belongs to the relation if and only if
there exists a rule $i$ such that $s$ satisfies $L_i$, $s'$ satisfies $R_i$, 
and all the variables not involved in rule $i$ have the same value in $s$ and $s'$. 

If $L_i$ speaks about the presence of a variable $x$, nothing can be said about the presence/absence of $x$ in $s'$.
In other words, nothing can be inferred about the consumption of variables appearing in the left hand side of rules. 
If we want to force consumption/not consumption, we have to specify it explicitly. 
Furthermore, our rules are asynchronous: one rule can be fired at a time.
If several rules can be taken from a given state, one of them is non-deterministically chosen.
As in Biocham, we choose an asynchronous semantics in order to eliminate the risk of affecting
fundamental biological phenomena such as the masking
of a relation by another one and the consequent inhibition/activation of biological processes.

To verify whether a Boolean model satisfies a given temporal property, 
our approach consists of encoding both the model and the property in the HyLL logic 
and producing a proof.
Observe that we do not explicitly build the transition system, 
we just give a set of variables and rules. 
Proving temporal properties can result in building a sub-part of the system. 
There is an analogy with on-the-fly model checking \cite{CVWY92FMSD}, 
a technique that in many cases avoids the construction of the entire state space of the system
(because the property to test guides the construction of the system). 
However, on-the-fly model checkers mainly deal with LTL formulas.  

\section{Example}
\label{sec:example}

%
In this section we focus on the P53/Mdm2 DNA-damage repair mechanism.
P53 is a tumor suppressor protein that is activated in reply to
DNA damage. In normal conditions, the concentration of p53 in the
nucleus of a cell is weak: its level is controlled by another
protein, Mdm2. These two proteins present a loop of negative
regulation. In fact, P53 activates the transcription of Mdm2 while the
latter accelerates the degradation of the former.

DNA damage increases the degradation rate of Mdm2 so
that the control of this protein on P53 becomes weaker and the concentration of p53 can
increase. P53 can thus exercise its functions, either stopping the cell cycle to allow DNA
repair, or provoking apoptosis, if damage is too heavy.
%
This is possible if the control of Mdm2 on p53 weakens. As written above, in
most of cases this happens for phosphorylation of p53 and Mdm2 through ATM.

When Mdm2 loosens its influence on P53, it is possible
to observe some oscillations of P53 and Mdm2 concentrations. The
answer to a stronger damage is a bigger number of oscillations.
In the literature, several models have been proposed to model the oscillatory behaviour of 
proteins P53 and Mdm2 
(see those introduced by
Chickermane et al. \cite{CRSN07jads}, by Ciliberto et al. \cite{CNT05cc}, 
and by Geva-Zatorsky et al. \cite{ZRIMSDYLPLA06msb}).
%
In the context of cancer therapies, in \cite{DFRS11tcs} it is shown
how a coupled model of the P53/Mdm2 DNA-damage repair mechanism, 
of the cell cycle, of the circadian clock, and of an anticancer drug 
can be a valuable tool to investigate the drug influence on the cell cycle. 

The concentration of P53 in healthy cells is weak  
because this protein is responsible for the activation of
many mechanisms that could be dangerous for cells. In an indirect
way, it stops the DNA synthesis process, it activates the production
of proteins charged with DNA reparation, and it can lead to apoptosis (cell death).

\subsection{Definition}
\label{sec:example.defn}

%
In the following we propose a simple Boolean model considering the presence/absence of the 
three variables DNAdam, P53, and MdM2. 
The corresponding transition system woud thus consist of eight states, 
each one associated to a 3-tuple of Boolean values denoting the presence/absence 
of the three variables.
Initial states are the ones where P53 is absent and Mdm2 is present.

The behaviour of the biological system is specified by the six following rules:
\begin{center}
\begin{tabular}{ll}
1) $\mbox{Dnadam}\Rightarrow\neg\mbox{Mdm2}$~~~~ &
  4) $\mbox{Mdm2}\Rightarrow\neg\mbox{P53}$ \\
2) $\neg\mbox{Mdm2}\Rightarrow\mbox{P53}$ &
  5) $\mbox{P53}\Rightarrow_C\neg\mbox{Dnadam}$ \\
3) $\mbox{P53}\Rightarrow\mbox{Mdm2}$ &
  6) $\neg\mbox{Dnadam}\Rightarrow\mbox{Mdm2}$
\end{tabular}
\end{center}
In rule 5, we use $\Rightarrow_C$ to force consumption, 
i.e., if P53 is present, after firing the rule both P53 and Dnadam are absent.
Note that, if Dnadam is present in the initial state, this rule (that refers to damage repair) 
can be non-deterministically fired after one, a few, or several P53/MdM2 oscillations. 
This is consistent with experiments, where the number of oscillations preceding damage repair 
depends on the damage entity.
In the other rules we assume there is no comsumption.
%

\subsection{Specification in HyLL}
\label{sec:example.specif}

The biological system is modeled in HyLL by a set of axioms of two kinds.
First, each rule of the biological system is modeled by a formula in HyLL, as it would be in ordinary linear logic,
with the additional use of the delay operator $\delay{v}$, making precise the delay taken by the corresponding transition. 
The domain of world is $\mathcal{T} = \langle \N, +, 0 \rangle$ and we fix all time delays to $1$.
Then we add a set of axioms stating some well-definedness conditions and the initial state.
To encode the basic Boolean model, we use two predicates:
$\cn{pres}(a)$ (seen earlier) and $\cn{abs}(a)$ to indicate the
presence or absence of variable $a$.
The full model is given in Fig.~\ref{fig:hyllsys}.

\subsubsection{Activation/Inhibition Rules}
\label{sec:activation-inhibition-rules} 

For the sake of clarity, we first define generic activation/inhibition actions.
These actions can be defined in various ways.
A first attempt at describing an activation rule without consumption, for example,
might be the following:
%
$$w\_\cn{active}(a,b) \eqdef \cn{pres}(a) \limp \delay{1}~ (\cn{pres}(a) \otimes \cn{pres}(b)).$$
%
\ednote[]{Weak rules.  Useful in principle, 
\\ althought the axiom 
   $\cn{well\_defined}_1(V) \eqdef \forall a \in V.~ [\cn{pres}(a) \oplus \cn{abs}(a)]$ 
   makes them not needed.
\\ None of our proofs makes use of them.
\\ We have no example yet which would need them.
}
%
This rule does not make any assumption on the value of $b$:
this kind of rule corresponds to the case of partial knowledge of the current state of the system.
Note however 
that in the case where $b$ is present, the above rule does not modify the state.
In order to avoid uninteresting proofs, we might alternatively define our activation rule in a more precise way,
as follows:
$$s\_\cn{active}(a,b) \eqdef \cn{pres}(a) \otimes \cn{abs}(b) \limp \delay{1} (\cn{pres}(a) \otimes \cn{pres}(b)).$$

We call the first kind of rules {\em weak} rules, and the second one {\em strong} rules.

\ednote[JD]{Find a better name for  {\em useless} rules. {\em looping} or {\em loop} rules?}
For the sake of completeness, let us mention a third kind of rules,
that we might call {\em useless},
although they are indeed needed to formalize loops in a state
(see our Property 3 below):
%
$$u\_\cn{active}(a,b) \eqdef \cn{pres}(a) \otimes \cn{pres}(b) \limp \delay{1} (\cn{pres}(a) \otimes \cn{pres}(b)).$$
The {\em general} form of an activation rule, taking in account the various possible values 
of $b$, and our eventual missing knowledge of this, 
is then the following: 
$$\begin{array}{rcl}
  \cn{active}(a,b) & \eqdef &
  (\cn{pres}(a) \oplus (\cn{pres}(a) \otimes \cn{pres}(b))
     \oplus (\cn{pres}(a) \otimes \cn{abs}(b)))
  \\ && \limp \delay{1}~ (\cn{pres}(a) \otimes \cn{pres}(b)).
\end{array}$$
The properties stated in the present paper will all use the general form of the biological rules, except 
Property 4, 
which is only valid for strong rules.
%

%
There are further alternatives for the  activation/\-inhibition rules.
Let us consider the above strong variant $s\_\cn{active}(a,b)$.
We can define an \emph{activation with consumption} (of the product $a$) as follows:
     $$s\_\cn{active}_c(a,b) \eqdef 
      \cn{pres}(a) \otimes \cn{abs}(b)  \limp \delay{1} (\cn{abs}(a) \otimes \cn{pres}(b)),$$
while a \emph{strong activation} will have an inhibitor effect, in case of absence of $a$:
    $$s\_\cn{active}_s(a,b) \eqdef 
      \cn{abs}(a) \otimes \cn{pres}(b) \limp \delay{1} (\cn{abs}(a) \otimes \cn{abs}(b)).$$

Our example also uses the corresponding three kinds of rules for the
inhibition actions (see Fig.~\ref{fig:hyllsys}).
Of course, we could also define activation/inhibition rules
accounting for a lack of information concerning consumption.

\subsubsection{The System}  

Before giving the complete definition of our system, 
we need to additionally specify, in each rule,  that if a variable is not touched, then its value remains the same
in the next state.
This is the purpose of the $\cn{unchanged}$ predicate,
which in turn leads us to introduce a parameter $\cn{vars}$, 
specifying the set of variables of the biological system.

Note that the definition of the $\cn{unchanged}$ predicate
relies on the hypothesis (discussed earlier, in Sec~\ref{sec:approach}) 
that only one rule of the biological system can fire at a time.
Note also the use of the $!$ and $\mathbin{\&}$ operators in its definition: this is an intuitionistic predicate.
%

We define our biological system (with rules in their general form) by
the set of HyLL axioms given in Fig.~\ref{fig:hyllsys}.
%
\begin{figure*}[tp]   
\hspace{-0.5cm}
\framebox{\begin{minipage}{\textwidth}
\begin{itemize}
\item{\emph{Variables}}:
  \begin{quote}
     $\cn{unchanged}(x,w)\eqdef 
           ~!~ [ ( \cn{pres}(x) ~\at~ w \limp \cn{pres}(x) ~\at~ w.1 ) ~ \mathbin{\&} ~
           ( \cn{abs}(x) ~\at~ w \limp \cn{abs}(x) ~\at~ w.1 ) ]$. \\
     $\cn{unchanged}(V,w) \eqdef \otimes_{x \in V} \cn{unchanged}(x,w)$.
\end{quote}
\item {\emph{Activation}}: 
  \begin{quote}
     $\cn{active}(V,a,b) \eqdef 
      (\cn{pres}(a) \oplus (\cn{pres}(a) \otimes \cn{pres}(b)) \oplus (\cn{pres}(a) \otimes \cn{abs}(b)))
      \\ \vphantom{.} \qquad \qquad \qquad \quad  ~~
      \limp \delay{1}~ (\cn{pres}(a) \otimes \cn{pres}(b)) ~ \otimes \downarrow u.~ \cn{unchanged}(V \setminus \{a,b\},u))$. 
\end{quote}
\item {\emph{Activation with consumption}}: 
  \begin{quote}
     $\begin{array}{ll}
     \cn{active}_c(V,a,b) \eqdef 
      & (\cn{pres}(a) \oplus (\cn{pres}(a) \otimes \cn{pres}(b)) \oplus (\cn{pres}(a) \otimes \cn{abs}(b))) \\
      & \limp \delay{1}~ (\cn{abs}(a) \otimes \cn{pres}(b)) ~ \otimes \downarrow u.~ \cn{unchanged}(V \setminus \{a,b\},u)). 
       \end{array}$
  \end{quote}
\item {\emph{Strong activation}}: 
  \begin{quote}
     $\begin{array}{ll}
    \cn{active}_s(V,a,b) \eqdef 
       & (\cn{abs}(a) \oplus (\cn{abs}(a) \otimes \cn{pres}(b)) \oplus (\cn{abs}(a) \otimes \cn{abs}(b))) \\
       & \limp \delay{1}~ (\cn{abs}(a) \otimes \cn{abs}(b)) ~ \otimes \downarrow u.~ \cn{unchanged}(V \setminus \{a,b\},u)).
       \end{array}$
\end{quote}
\item {\emph{Inhibition}}: 
  \begin{quote}
     $\begin{array}{ll}
      \cn{inhib}(V,a,b) \eqdef 
       & (\cn{pres}(a) \oplus (\cn{pres}(a) \otimes \cn{pres}(b)) \oplus (\cn{pres}(a) \otimes \cn{abs}(b))) \\
       &\limp \delay{1}~ (\cn{pres}(a) \otimes \cn{abs}(b)) ~ \otimes \downarrow u.~ \cn{unchanged}(V \setminus \{a,b\},u)). 
       \end{array}$
  \end{quote}
\item {\emph{Inhibition with consumption}}: 
  \begin{quote}
     $\begin{array}{ll}
    \cn{inhib}_c(V,a,b) \eqdef 
     & (\cn{pres}(a) \oplus (\cn{pres}(a) \otimes \cn{pres}(b)) \oplus (\cn{pres}(a) \otimes \cn{abs}(b))) \\
     & \limp \delay{1}~ (\cn{abs}(a) \otimes \cn{abs}(b)) ~ \otimes \downarrow u.~ \cn{unchanged}(V \setminus \{a,b\},u)).
     \end{array}$
 \end{quote}
\item {\emph{Strong inhibition}}: 
  \begin{quote}
     $\begin{array}{ll}
     \cn{inhib}_s(V,a,b) \eqdef 
     & (\cn{abs}(a) \oplus (\cn{abs}(a) \otimes \cn{pres}(b)) \oplus (\cn{abs}(a) \otimes \cn{abs}(b))) \\
     & \limp \delay{1}~ (\cn{abs}(a) \otimes \cn{pres}(b)) ~ \otimes \downarrow u.~ \cn{unchanged}(V \setminus \{a,b\},u)). 
     \end{array}$
 \end{quote}
\item{\emph{Well definedness}}: 
  \begin{quote}
  $\cn{well\_defined}_0(V) \eqdef 
        \forall a \in V.~ [\cn{pres}(a) \otimes \cn{abs}(a) \limp \mathop{0}]$. 
\\ 
  $\cn{well\_defined}_1(V) \eqdef 
       \forall a \in V.~ [\cn{pres}(a) \oplus \cn{abs}(a)]$. \\
   $\cn{well\_defined}(V) \eqdef 
        \cn{well\_defined}_0(V), \cn{well\_defined}_1(V)$.
  \end{quote}
\item{\emph{The system}}:
  \begin{quote}
     $\begin{array}{l@{\qquad}l}
      \cn{vars} \eqdef \{\cn{p53}, \cn{Mdm2},\cn{DNAdam} \}. & \\
      \cn{rule}(1) \eqdef \cn{inhib}(\cn{vars},\cn{DNAdam},\cn{Mdm2}). &
      \cn{rule}(4) \eqdef \cn{inhib}(\cn{vars},\cn{Mdm2},\cn{p53}). \\
      \cn{rule}(2) \eqdef \cn{inhib_s}(\cn{vars},\cn{Mdm2},\cn{p53}). &
      \cn{rule}(5) \eqdef \cn{inhib_c}(\cn{vars},\cn{p53},\cn{DNAdam}). \\
      \cn{rule}(3) \eqdef \cn{active}(\cn{vars},\cn{p53},\cn{Mdm2}). &
      \cn{rule}(6) \eqdef \cn{{inhib_s}(\cn{vars},\cn{DNAdam},\cn{Mdm2})}.
      \end{array}$
  \end{quote}
 \begin{quote}
     $\cn{system}\eqdef \cn{vars}, \cn{rule}(1), \cn{rule}(2), 
     \cn{rule}(3), \cn{rule}(4), \cn{rule}(5), \cn{rule}(6),
     \cn{well\_defined}(\cn{vars}).
     $
  \end{quote}
\item{\emph{Initial state}}:
  \begin{quote}
     $\cn{initial\_state}\eqdef \cn{abs}(\cn{p53}) \otimes \cn{pres}(\cn{Mdm2}),
  \quad 
     \cn{initial\_state} ~\at~ 0.
     $
    \end{quote}

\end{itemize} 
\end{minipage}}
\caption{Representation of the System in HyLL}
\label{fig:hyllsys}
\end{figure*}

\subsection{Proofs}
\label{sec:example.proofs}

Although linear logic is well suited to describing transition systems, as we do here in the area of  biology, 
this logic can sometimes be too precise in its resource management for our needs.
To solve this constraint, we sometimes make precise that we do not care about the value of some variables.
We define a $\cn{dont\_care}$  predicate for this purpose:
   \begin{quote} 
   $\cn{dont\_care}(x) \eqdef \cn{pres}(x) \oplus \cn{abs}(x)$ \qquad
   $\cn{dont\_care}(V) \eqdef  \otimes_{x \in V} \cn{dont\_care}(x)$. 
   \end{quote}

This predicate is used in the statement of two of the four properties we present in this paper
(Properties 1 and 4).

Additionally,
in some proofs, when we do not know the value of some variables of the system, 
we sometimes need to perform a case analysis on their two possible values, 
using the $\cn{well\_defined}_1$ predicate introduced for this purpose.

Finally let us give two definitions in order to further shorten propositions and proofs 
($state_0$ is a state equivalent to the initial state):
   \begin{quote}
   $state_0 \eqdef \cn{abs}(\cn{p53}) \otimes \cn{pres}(\cn{Mdm2})$ \\
   $state_1 \eqdef \cn{pres}(\cn{p53}) \otimes \cn{abs}(\cn{Mdm2}).$ 
   \end{quote}

\subsubsection{Property 1}
\label{sec:property1}
As long as there is DNA damage, the above system can oscillate (with a
short period) from $state_0$ to $state_1$ and back again.  
We outline the proof informally below.
The sequent proofs can be found in Appendix \ref{appendix:example.proofs}; 
we refer the reader to the electronic appendix for their formalization.

From $state_0$ and $\cn{pres}(\cn{DNAdam})$ 
we get $\cn{abs}(\cn{p53})$, $\cn{abs}(\cn{Mdm2})$, and $\cn{pres}(\cn{DNAdam})$ by rule $1$.
Then $\cn{pres}(\cn{p53})$, $\cn{abs}(\cn{Mdm2})$, and $\cn{pres}(\cn{DNAdam})$ ($state_1$) by rule $2$.
Then $\cn{pres}(\cn{p53})$, $\cn{pres}(\cn{Mdm2})$, and $\cn{pres}(\cn{DNAdam})$ by rule $3$,
and finally $\cn{abs}(\cn{p53})$, $\cn{pres}(\cn{Mdm2})$, and $\cn{pres}(\cn{DNAdam})$ ($state_0$) by rule $4$.

We define (and prove) our property in the two possible ways discussed earlier 
(Sec~\ref{sec:example.specif}),
roughly corresponding to Definitions~\ref{def:one-oscillation} 
and~\ref{def:meta-oscillation}, respectively.  
The difference here is that our initial state (the one from which the oscillation starts)
includes the presence of DNA damage.

\begin{hyllprop}[Property 1, Version 1]
For any world $w$, 
there exists two worlds $u$ and $v$ such that both $u$ and $v$ are less than $3$ and 
the following holds: 
%

$ \begin{array}{l}
   \mathop{\dag} system ~@~ 0 ~;~ state_0 ~\otimes~ \cn{pres}(\cn{DNAdam})~@~ w \\
   \quad \vdashseq  
      \delay{u}~ 
      [( state_1 ~\otimes~ \cn{dont\_care}(\cn{DNAdam}) ) ~\mathbin{\&}~
      ( \delay{v}~ (state_0 ~\otimes~ \cn{dont\_care}(\cn{DNAdam})) )] ~@~ w 
   \end{array}
$
\end{hyllprop}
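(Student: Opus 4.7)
The plan is to choose $u = v = 2$, which satisfies the bound (both less than $3$), and then mirror the four-step informal trace: rules $1,2$ take us from $state_0$ (with $\cn{DNAdam}$ present) to $state_1$ after delay $2$, and rules $3,4$ take us from $state_1$ back to $state_0$ after another delay $2$.

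First I would apply $\delta R$ on the right, reducing the goal to proving $(state_1 \otimes \cn{dont\_care}(\cn{DNAdam})) \mathbin{\&} \delay{2}(state_0 \otimes \cn{dont\_care}(\cn{DNAdam})) ~@~ w.2$. Then $\mathbin{\&} R$ splits this into two subgoals that share the same linear context $state_0 \otimes \cn{pres}(\cn{DNAdam}) ~@~ w$ and the same unrestricted hypothesis $\mathop{\dag} \cn{system} ~@~ 0$. In the second subgoal, one more $\delta R$ reduces the conclusion to $state_0 \otimes \cn{dont\_care}(\cn{DNAdam}) ~@~ w.4$.

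For each subgoal I would iteratively extract a biological rule from the unrestricted context. Concretely, use $\forall L$ on $\mathop{\dag}\cn{system}$ instantiated at the current world, then $\at L$, then a sequence of $\otimes L$ and $\&L$ steps to peel off the particular $\cn{rule}(i)$ I want. The rule itself is a linear implication of the form $(C_1 \oplus C_2 \oplus C_3) \limp \delay{1}(R \otimes {\downarrow} u.\cn{unchanged}(V\setminus\{a,b\}, u))$. I would select the appropriate disjunct on the left (for rule $1$, the middle $\cn{pres}(\cn{DNAdam}) \otimes \cn{pres}(\cn{Mdm2})$ branch), apply $\limp L$, discharge the premise using the current linear resources, then use $\delta L$ (i.e.\ $\downarrow L$ and $\at L$) on the result to obtain the updated resources at the successor time. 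The $\cn{unchanged}$ predicate is used via $!L$ and $\mathbin{\&} L$ to select the appropriate $\cn{pres}$-preservation or $\cn{abs}$-preservation implication for each untouched variable (here always $\cn{DNAdam}$ for rules $1$ through $4$, plus $\cn{p53}$ for rule $1$). Chaining four such transitions in the second subgoal and two in the first gives resources that match $state_1$ at $w.2$ and $state_0$ at $w.4$ respectively, together with $\cn{pres}(\cn{DNAdam})$. A final $\oplus R_1$ then weakens $\cn{pres}(\cn{DNAdam})$ to $\cn{dont\_care}(\cn{DNAdam})$, and $\otimes R$ closes each branch against $\mathbf{init}$ instances.

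The main obstacle is bookkeeping rather than conceptual depth. Because $\mathbin{\&} R$ forces the two conjuncts to share the same linear context at $w$, the first two transitions (rules $1$ and $2$) must be replayed inside the second subgoal, which roughly doubles the proof size. Second, each rule application is heavy: one must navigate the $\oplus$ on the left of each implication to choose the disjunct matching the currently known presence/absence pattern, and must carefully unfold $\cn{unchanged}(V\setminus\{a,b\},u)$ and instantiate the $\downarrow$-bound world to the appropriate successor time before feeding the right $\cn{pres}$- or $\cn{abs}$-hypothesis into the preserved implication. Keeping track of which resources are consumed versus duplicated through the $!$ in $\cn{unchanged}$, and maintaining the invariant that $\cn{pres}(\cn{DNAdam})$ is threaded through all four steps, is the delicate part; this is exactly where the $\Lambda$Prolog automation mentioned in the abstract is expected to pay off.
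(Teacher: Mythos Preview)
Your plan is correct and tracks the paper's own argument exactly: choose $u=v=2$ and fire rules $1,2,3,4$ in order, which is precisely the informal trace the paper gives before deferring the sequent derivation to the appendix.

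Two small remarks. First, a bookkeeping slip: for $\cn{rule}(1)=\cn{inhib}(\cn{vars},\cn{DNAdam},\cn{Mdm2})$ the untouched variable is $\cn{p53}$ alone, not $\cn{DNAdam}$; $\cn{DNAdam}$ is the untouched variable only for rules $2$--$4$. Second, the duplication you flag is avoidable: since $\limp L$, $\otimes L$, $\at L$, $\downarrow L$ are all applicable before the $\mathbin{\&} R$ split, you can fire rules $1$ and $2$ first (transforming the linear context into $state_1\otimes\cn{pres}(\cn{DNAdam})$ at $w.2$), then apply $\delta R$ and $\mathbin{\&} R$, so that the first conjunct closes immediately and only the second conjunct needs rules $3,4$. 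This halves the work but is otherwise the same proof.
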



Aternatively, our property can be defined:
%
\begin{hyllprop}[Property 1, Version 2]
\label{prop:property1}
For any world $w$, 
there exists two worlds $u$ and $v$ such that both $u$ and $v$ are less than $3$ and the following holds: 
%

$
\begin{array}{l}
   \mathop{\dag} system ~@~ 0 ~;~ state_0 ~\otimes~ \cn{pres}(\cn{DNAdam}) ~@~ w 
   \vdashseq  state_1 ~\otimes~ \cn{dont\_care}(\cn{DNAdam}) ~@~ w.u 
~~\textrm{and}~~ \\
   \mathop{\dag} system ~@~ 0 ~;~ state_1 ~@~ w.u
   \vdashseq state_0 ~@~ w.u.v
\end{array}
$
\end{hyllprop}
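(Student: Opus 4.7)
The plan is to pick $u = v = 2$ (both less than $3$) and follow the informal chain of rule applications already sketched: rule $1$ then rule $2$ for the first sequent, and rule $3$ then rule $4$ for the second. For each of the four steps, the recipe is the same: use $\cn{copy}$ to bring one instance of $\mathop{\dag}\!\cn{rule}(i) ~@~ 0$ out of the unrestricted context into the linear context, apply $\forall L$ to instantiate the outer quantifier with the current world, strip the outer $\at$ via $\at L$, and then apply $\limp L$ to advance the state by one time unit.

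For the first sequent, the first step applies $\cn{rule}(1) = \cn{inhib}(\cn{vars},\cn{DNAdam},\cn{Mdm2})$ at world $w$. The $\limp L$ splits the proof: on the premise branch I select the second disjunct of the three-way $\oplus$ (since both $\cn{DNAdam}$ and $\cn{Mdm2}$ are present) and close with $\otimes R$ plus init on the two atoms; on the continuation branch, $\otimes L$, $\delta L$, and $\downarrow L$ yield $\cn{pres}(\cn{DNAdam}) ~@~ w.1$, $\cn{abs}(\cn{Mdm2}) ~@~ w.1$, and a fresh hypothesis $\cn{unchanged}(\{\cn{p53}\},w)$. Combining the latter with $\cn{abs}(\cn{p53}) ~@~ w$ from $state_0$ via $!L$, $\mathbin{\&}L$, and $\limp L$ propagates $\cn{abs}(\cn{p53})$ forward to time $w.1$. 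Repeating the same pattern with $\cn{rule}(2)$ at world $w.1$ (selecting the third disjunct of the $\oplus$, since both $\cn{Mdm2}$ and $\cn{p53}$ are absent) advances the state to $\cn{pres}(\cn{p53}) \otimes \cn{abs}(\cn{Mdm2}) \otimes \cn{pres}(\cn{DNAdam}) ~@~ w.2$, and a final $\oplus R$ reinterprets $\cn{pres}(\cn{DNAdam})$ as $\cn{dont\_care}(\cn{DNAdam})$, matching the goal at $w.u = w.2$.

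For the second sequent, exactly the same recipe applies twice more: $\cn{rule}(3)$ at world $w.u = w.2$ (third disjunct, since $\cn{pres}(\cn{p53}) \otimes \cn{abs}(\cn{Mdm2})$ is available) advances the state to $\cn{pres}(\cn{p53}) \otimes \cn{pres}(\cn{Mdm2}) ~@~ w.u.1$, and $\cn{rule}(4)$ at world $w.u.1$ (second disjunct) then yields $state_0 ~@~ w.u.v = w.u.2$. Because $\cn{DNAdam}$ does not appear on the left of this sequent, the $\cn{unchanged}(\{\cn{DNAdam}\},\cdot)$ produced by each rule is simply discarded by weakening of the $!$-wrapped hypothesis.

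The only real obstacle is bookkeeping, not logical depth: at every step I must align worlds precisely (in particular, the $\downarrow z.~\cn{unchanged}(\cdots,z)$ inside each rule's conclusion must be unfolded so that $z$ becomes exactly the current world at the moment $\limp L$ is applied), and I must correctly dispatch the three-way $\oplus$ in each rule's premise by matching it against the atoms actually present in the context. The most delicate micro-step is the propagation through $\cn{unchanged}$, which requires destructuring a $!$-wrapped $\mathbin{\&}$ of linear implications and feeding the appropriate $\cn{pres}$ or $\cn{abs}$ atom from the current state into the correct branch; once this pattern is handled cleanly once, the remaining applications are mechanical.
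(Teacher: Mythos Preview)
Your proposal is correct and follows essentially the same route as the paper: choose $u=v=2$, apply rules $1$ and $2$ (selecting the appropriate disjuncts of each premise) to establish the first sequent, and rules $3$ and $4$ for the second, propagating the untouched variable through $\cn{unchanged}$ at each step and discarding the $!$-wrapped $\cn{unchanged}(\{\cn{DNAdam}\},\cdot)$ in the second sequent via weakening. The level of detail you give on the sequent-calculus mechanics ($\cn{copy}$, $\forall L$, $\at L$, $\limp L$, $\oplus R_i$, handling of $\downarrow$ and $\delta$) matches what the paper defers to its appendix and Coq formalization.
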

%
There are no $\cn{dont\_care}$'s needed in the conclusion of the second sequent 
because only rules 3 and 4 are used, which don't involve DNAdam.

\subsubsection{Property 2} 
\label{sec:property2}
DNA damage can be quickly recovered. 

From $state_0$ and $\cn{pres}(\cn{DNAdam})$ 
we get $\cn{abs}(\cn{p53})$, $\cn{abs}(\cn{Mdm2})$, and $\cn{pres}(\cn{DNAdam})$ by rule $1$.
Then $\cn{pres}(\cn{p53})$ and $\cn{abs}(\cn{Mdm2})$ ($state_1$) and $\cn{pres}(\cn{DNAdam})$ by rule $2$.
Then $\cn{abs}(\cn{p53})$, $\cn{abs}(\cn{Mdm2})$, and $\cn{abs}(\cn{DNAdam})$ by rule $5$,
and finally $\cn{abs}(\cn{p53})$ and $\cn{pres}(\cn{Mdm2})$ ($state_0$) and $\cn{abs}(\cn{DNAdam})$ 
by rule $6$.

Our property can be stated directly as follows.

\begin{hyllprop}[Property 2]
\label{prop:prop2}
For any world $w$, 
there exists a world $u$ such that $u$ is less than $5$ and the following holds:

$\begin{array}{l}
\mathop{\dag} system ~@~ 0;~ 
    state_0 ~\otimes~ \cn{pres}(\cn{DNAdam}) ~@~ w
    \vdashseq state_0 \otimes \cn{abs}(\cn{DNAdam}) ~@~ w.u
\end{array}$
\end{hyllprop}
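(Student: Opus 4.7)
The plan is to take the witness $u = 4$, which satisfies $u < 5$, and to build a four-step derivation that fires rules $1, 2, 5, 6$ in succession, each advancing the world by one unit and exactly matching the informal trace given just before the statement. Since all delays in the system are fixed to $1$, the target state is reached at world $w.4$.

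The key primitive is a single transition step. To fire $\cn{rule}(i)$ at world $v$, I first use copy on the unrestricted hypothesis $\mathop{\dag}\,\cn{rule}(i) ~@~ 0$ coming from $\mathop{\dag} system ~@~ 0$; unfolding $\mathop{\dag} A = \forall u.~ (A ~\at~ u)$ and instantiating $u := v$ by $\forall L$ followed by $\at L$ places $\cn{rule}(i) ~@~ v$ in the linear context. Then $\limp L$ splits off the premise (a three-way $\oplus$) from the succedent of the form $\delay{1}(\ldots) \otimes \downarrow u.~ \cn{unchanged}(V \setminus \{a,b\}, u)$. On the premise branch I apply $\oplus R$ to pick the disjunct whose two presence/absence atoms exactly match the affected variables in the current linear state, and close with init. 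On the succedent branch I use $\delta R$ to move to world $v.1$ and, via $\downarrow L$, unfold $\cn{unchanged}$ for the single unaffected variable $x$; the resulting $!$-hypothesis provides both a $\cn{pres}(x)~@~v \limp \cn{pres}(x)~@~v.1$ and a $\cn{abs}(x)~@~v \limp \cn{abs}(x)~@~v.1$ implication through its $\mathbin{\&}$, and I select whichever branch matches the current status of $x$ to transport it to $v.1$.

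Iterating this pattern four times concludes the proof. Step $1$ uses $\cn{rule}(1) = \cn{inhib}(\cn{vars},\cn{DNAdam},\cn{Mdm2})$ at $w$, selects the $\oplus$-disjunct $\cn{pres}(\cn{DNAdam}) \otimes \cn{pres}(\cn{Mdm2})$, and threads $\cn{abs}(\cn{p53})$ through $\cn{unchanged}$, producing $\cn{abs}(\cn{p53}) \otimes \cn{abs}(\cn{Mdm2}) \otimes \cn{pres}(\cn{DNAdam}) ~@~ w.1$. Step $2$ uses $\cn{rule}(2) = \cn{inhib}_s(\cn{vars},\cn{Mdm2},\cn{p53})$ at $w.1$, selects $\cn{abs}(\cn{Mdm2}) \otimes \cn{abs}(\cn{p53})$, and threads $\cn{pres}(\cn{DNAdam})$, yielding $\cn{pres}(\cn{p53}) \otimes \cn{abs}(\cn{Mdm2}) \otimes \cn{pres}(\cn{DNAdam}) ~@~ w.2$. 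Step $3$ uses $\cn{rule}(5) = \cn{inhib}_c(\cn{vars},\cn{p53},\cn{DNAdam})$ at $w.2$, selects $\cn{pres}(\cn{p53}) \otimes \cn{pres}(\cn{DNAdam})$, and threads $\cn{abs}(\cn{Mdm2})$, yielding $\cn{abs}(\cn{p53}) \otimes \cn{abs}(\cn{Mdm2}) \otimes \cn{abs}(\cn{DNAdam}) ~@~ w.3$. Step $4$ uses $\cn{rule}(6) = \cn{inhib}_s(\cn{vars},\cn{DNAdam},\cn{Mdm2})$ at $w.3$, selects $\cn{abs}(\cn{DNAdam}) \otimes \cn{abs}(\cn{Mdm2})$, and threads $\cn{abs}(\cn{p53})$, producing exactly $state_0 \otimes \cn{abs}(\cn{DNAdam}) ~@~ w.4$, and the sequent closes with $\otimes R$ and three init leaves.

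The main obstacle is purely bookkeeping: for each step I must pick the correct $\oplus$-branch on the premise side, the correct $\mathbin{\&}$-branch of $\cn{unchanged}$ for the one untouched variable, and the correct witness for the $\forall u$ in $\mathop{\dag}$. No case analysis through $\cn{well\_defined}_1$ is required, since the chosen $\oplus$-disjunct at every step already pins down the values of both affected variables, and the state remains fully determined along the whole four-step trace; weakening on the many unused rule instances in $\Gamma$ is free because they live in the unrestricted context.
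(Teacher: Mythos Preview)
Your proposal is correct and follows exactly the paper's own argument: take $u=4$ and fire rules $1,2,5,6$ in sequence, threading the single untouched variable through $\cn{unchanged}$ at each step to reach $state_0 \otimes \cn{abs}(\cn{DNAdam})$ at $w.4$. One small slip: on the succedent branch of $\limp L$ the delay $\delay{1}(\ldots)$ appears as a \emph{hypothesis}, so you unfold it with $\downarrow L$ and $\at L$ (the derived $\delta L$), not $\delta R$; the rest of your bookkeeping is accurate.
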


\subsubsection{Induction/Case Analysis}
\label{sec:induction}

Most of interesting proofs require case analysis or induction; this is the case for Properties
3 and 4 below.
More precisely, we need here \emph{case analysis on the set of fireable rules}.
We implement this by a case analysis on the interval $[1..6]$ of our six rules, 
together with a new predicate  $\cn{fireable}$ defining 
the necessary conditions for each rule to fire.
We shall also need the negation of this predicate: 
$\cn{not\_fireable}$\footnote{The definition of $\cn{not\_fireable}$ relies on the
$\cn{well\_defined}_1$ hypothesis  $\forall a.~ \cn{pres}(a) \oplus
\cn{abs}(a)$ for all the variables not involved in the hypothesis of the rules.}.
We give here the definitions of these predicates for the first rule of our system,
for both  (strong and general) styles of the rules used in the present paper
(the complete definitions can be found in Appendix \ref{appendix:example.specif}):
\begin{quote}
  $\begin{array}{ll}
   \cn{fireable_s}(1) & \eqdef \cn{pres}(\cn{DNAdam}) \otimes \cn{pres}(\cn{Mdm2}) 
       \otimes \cn{dont\_care}(\cn{p53}) \\
   \cn{not\_fireable_s}(1) & \eqdef 
        ( (\cn{abs}(\cn{DNAdam}) \otimes \cn{pres}(\cn{Mdm2})) 
          ~\oplus~  (\cn{pres}(\cn{DNAdam}) \otimes \cn{abs}(\cn{Mdm2})) \\
       & \quad ~ \oplus~  (\cn{abs}(\cn{DNAdam}) \otimes \cn{abs}(\cn{Mdm2})) )
        ~\otimes~ \cn{dont\_care}(\cn{p53}) \\\\
  \cn{fireable}(1) & \eqdef 
     ( \cn{pres}(\cn{DNAdam}) 
       ~\oplus~ (\cn{pres}(\cn{DNAdam})  \otimes \cn{pres}(\cn{Mdm2})) \\
     & \quad ~ \oplus~ (\cn{pres}(\cn{DNAdam})  \otimes \cn{abs}(\cn{Mdm2}))) 
       ~\otimes~ \cn{dont\_care}(\cn{p53}) \\
  \cn{not\_fireable}(1) & \eqdef \cn{abs}(\cn{DNAdam}) ~\otimes~ 
     \cn{dont\_care}(\{\cn{Mdm2},\cn{p53}\})
  \end{array}$
\end{quote}

An (informal) formula like 
``for any \textit{fireable rule} $r, P$''   
will be written as 
%
``for any rule $r$ in the interval $[1..6]$,  
  $(\cn{fireable}(r) ~\&~ P) \oplus \cn{not\_fireable}(r)$'',
which can be expressed fairly directly in the Coq formalization.
Note that both definitions of $\cn{fireable}$ and $\cn{not\_fireable}$
could be generated from the definitions of the biological rules, 
as the rules we consider in the present paper always have the same shape.

\subsubsection{Property 3} 
\label{sec:property3}
If there is no DNA damage, the system remains in the initial state.
A first attempt at formalizing this property might be:

For any world $w$, the following holds:
$\mathop{\dag} system ~@~ 0;~ \cn{abs}(\cn{DNAdam}) ~@~ 0 
 \vdashseq state_0 \otimes \cn{abs}(\cn{DNAdam}) ~@~ w . 
$

However, the above statement does not model our property.
We want to prove that 
if $\cn{abs}(\cn{DNAdam}) ~@~ 0$ 
then
$state_0 \otimes \cn{abs}(\cn{DNAdam}) ~@~ w$
holds, for all worlds $w$, 
\emph{no matter which rule is fired} to get to $w$. 
Thus our property requires a \emph{case analysis} on the rules of the biological system.

\begin{hyllprop}[Property 3]
\label{prop:property3}
Let $\mathcal{P}$ denote the formula $state_0  ~\otimes~ \cn{abs}(\cn{DNAdam})$.
For any world $w$, the following holds:
$\mathop{\dag} system ~@~ 0,~ \mathcal{P} ~@~ 0~  
  \vdashseq \mathcal{P} ~\at~ 0  ~@~ w;$

\noindent
and for any world $w$, for any rule $r$ in the interval $[1..6]$, 
the following holds: \\

$\begin{array}{l}
\mathop{\dag} system ~@~ 0 \vdashseq 
   \mathcal{P} \limp 
   (\cn{fireable}(r) \mathbin{\&} \delay{1} \mathcal{P})  ~\oplus~
\cn{not\_fireable}(r) ~@~ w
\end{array}$
\end{hyllprop}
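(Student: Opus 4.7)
The plan is to handle the two parts separately. The first sequent is immediate: a single application of $\at R$ reduces the goal $\mathcal{P} ~\at~ 0 ~@~ w$ to $\mathcal{P} ~@~ 0$, which matches the linear hypothesis by identity (Theorem \ref{thm:identity}); the unrestricted $\mathop{\dag} system ~@~ 0$ is simply discarded by weakening.

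For Part 2 I proceed by meta-level case analysis on $r \in [1..6]$, following the scheme of Sec~\ref{sec:induction}. Every case begins identically: apply $\limp R$ to move $\mathcal{P}$ into the linear context, then two $\otimes L$ steps to expose the three atomic hypotheses $\cn{abs}(\cn{p53})$, $\cn{pres}(\cn{Mdm2})$, $\cn{abs}(\cn{DNAdam})$ at world $w$. Because $\mathcal{P}$ pins down the current state completely, I can decide in advance which side of the top-level $\oplus$ to take. For rules $1$, $2$, $3$, and $5$ the required input is unavailable (rule $1$ needs $\cn{pres}(\cn{DNAdam})$, rule $2$ needs $\cn{abs}(\cn{Mdm2})$, and rules $3$, $5$ both need $\cn{pres}(\cn{p53})$), so I select the right disjunct and assemble $\cn{not\_fireable}(r)$, resolving each occurrence of $\cn{dont\_care}(x) = \cn{pres}(x) \oplus \cn{abs}(x)$ by picking the side that matches the known value of $x$ in $\mathcal{P}$.

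For rules $4$ and $6$ the rule is fireable and applying it returns to $\mathcal{P}$ at world $w.1$. After choosing the left disjunct and applying $\mathbin{\&} R$, the $\cn{fireable}(r)$ conjunct follows directly from the atomic hypotheses. For $\delay{1} \mathcal{P} ~@~ w$ I use the copy rule to draw $\cn{rule}(r)$ out of $\mathop{\dag} system ~@~ 0$, instantiate the outer $\forall$ in $\mathop{\dag}$ to $w$, and apply $\limp L$. On the left premise I pick the disjunct of the rule's LHS matching the fully-known inputs: the third disjunct $\cn{pres}(\cn{Mdm2}) \otimes \cn{abs}(\cn{p53})$ for rule $4$, and the second disjunct $\cn{abs}(\cn{DNAdam}) \otimes \cn{pres}(\cn{Mdm2})$ for rule $6$. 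On the right premise, the derived $\delta R$ rule shifts the goal to world $w.1$, the rule supplies two of the three required atoms, and the $\cn{unchanged}$ predicate (after decomposing $!$ and $\mathbin{\&}$) transports the value of the one variable untouched by the rule from $w$ to $w.1$, completing $\mathcal{P} ~@~ w.1$.

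The main obstacle is administrative rather than conceptual: navigating the triple $\oplus$ on each rule's LHS (which encodes the partial-knowledge disjunction over the second variable), correctly driving the copy/$\forall L$ machinery to obtain the instantiated rule at world $w$, and threading the $\cn{unchanged}$ predicate through its $!$ and $\mathbin{\&}$ to deploy it on precisely the untouched variable. Since the same skeleton is played out six times with only cosmetic changes, the natural implementation is a uniform Coq tactic or partial automation via the Lambda Prolog prover alluded to in the abstract.
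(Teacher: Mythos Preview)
Your proposal is correct and follows the same approach as the paper: case analysis on the six rules, with rules $1,2,3,5$ dispatched via the $\cn{not\_fireable}$ disjunct and rules $4,6$ via the fireable branch, where firing the rule leaves the state unchanged. Your account is in fact more explicit than the paper's about the sequent-level bookkeeping (the $\at R$/identity argument for Part~1, the choice of disjunct in the rule's LHS, and the use of $\cn{unchanged}$ to transport the untouched variable), but the underlying argument is identical.
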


The proof of the second statement proceeds by case analysis on the rules ($r$) of the biological system.
%

To prove that
  $\mathcal{P} ~@~ n \vdashseq  \mathcal{P} ~@~ n+1$, for each $\cn{fireable}$ rule $r$,
we have to prove that, if  
$\cn{abs}(\cn{p53}) \otimes \cn{pres}(\cn{Mdm2}) \otimes \cn{abs}(\cn{DNAdam})  ~@~ n$ 
then 
$\cn{abs}(\cn{p53}) \otimes \cn{pres}(\cn{Mdm2}) \otimes \cn{abs}(\cn{DNAdam}) ~@~ (n+1)$,
no matter which rule is fired.

At state $n$, $\cn{DNAdam}$ is absent, thus the only effet it can have is to 
strongly inhibate $\cn{Mdm2}$ ($\cn{rule}(6)$), making it present in the next state. 
However, $\cn{Mdm2}$ being already initialy present, this first applicable rule has no effect.

At state $n$, $\cn{p53}$ is also absent, thus no rule involving $\cn{p53}$ can apply.

At state $n$, the only present product is $\cn{Mdm2}$. The only (second) applicable rule is  
$\cn{inhib}(\cn{Mdm2},\cn{p53})$ ($\cn{rule}(4)$), which, after an interval of time $1$,
makes $\cn{p53}$ absent and leaves $\cn{Mdm2}$ present.
As $\cn{p53}$ was already absent, and $\cn{Mdm2}$ present, the next state is, here again,
equivalent to the previous state. 

Thus, there are only two fireable rules: $\cn{rule}(4)$ and $\cn{rule}(6)$,
and none of them modifies the current state of the system, which thus remains in its initial state.

\subsubsection{Property 4}
\label{sec:property4}
There is no path with two consecutive states where $\cn{p53}$ and $\cn{Mdm2}$ 
are both present or both absent.
In other words: 
from any state where $\cn{p53}$ and $\cn{Mdm2}$ are both present or both absent, 
we can only go to a state where either $\cn{p53}$ is present and $\cn{Mdm2}$ is absent or 
$\cn{p53}$ is absent and $\cn{Mdm2}$ is present.

This requires a stronger (natural) hypothesis: 
we need the property that each rule modifies at least one entity in the system.
In order to achieve this, we shall use the strong style of definitions for our inhibition and 
activation rules discussed earlier (sec.~\ref{sec:activation-inhibition-rules}).
For example, the activation rule will be defined as follows:
$$\begin{array}{l} 
s\_\cn{active}(V,a,b) \eqdef {} 
   \cn{pres}(a) \otimes \cn{abs}(b) 
   \limp {} 
   \delay{1} (\cn{pres}(a) \otimes \cn{pres}(b)) ~\otimes
      \downarrow u.~ \cn{unchanged}(V\setminus\{a,b\},u)).
\end{array}$$
The complete set of strong rules can be found in Appendix \ref{appendix:example.specif}.

Let $\mathcal{L}$ and $\mathcal{R}$ denote the following two formulas:
$$\begin{array}{l}
\mathcal{L} :=  (\cn{pres}(\cn{p53}) ~\otimes~ \cn{pres}(\cn{Mdm2})) ~\oplus~ {} 
   (\cn{abs}(\cn{p53}) ~ \otimes~ \cn{abs}(\cn{Mdm2}))\\
\mathcal{R} :=  ( (\cn{pres}(\cn{p53}) ~\otimes~ \cn{abs}(\cn{Mdm2})) ~\oplus~ {} 
   (\cn{abs}(\cn{p53}) ~\otimes~ \cn{pres}(\cn{Mdm2})) )
   \otimes {}  \cn{dont\_care}(\cn{DNAdam})
\end{array}$$
We want to prove that 
from state $\mathcal{L}$ we can only   
go to state $\mathcal{R}$, \emph{no matter which rule is fired}.
Here again, we need \emph{case analysis on the set of fireable rules}:
%
\begin{hyllprop}[Property 4]
\label{prop:property4}
For any world $w$, for any rule $r$ in the interval $[1..6]$, the following holds:
$$\begin{array}{l}
\mathop{\dag} system ~@~ 0 ;~ . \vdashseq \mathcal{L} \limp 
  (s\_\cn{fireable}(r) \mathbin{\&} \delay{1}
  \mathcal{R}) ~\oplus~ s\_\cn{not\_fireable}(r) ~@~ w
\end{array}$$
\end{hyllprop}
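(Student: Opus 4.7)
The plan is to fix arbitrary $w$ and $r \in [1..6]$, then reduce to a finite family of leaf cases by systematic case analysis. I would first apply $\limp R$ to move $\mathcal{L}$ into the linear context, then $\oplus L$ to separate $\mathcal{L}$ into Case~A ($\cn{pres}(\cn{p53}) \otimes \cn{pres}(\cn{Mdm2})$) and Case~B ($\cn{abs}(\cn{p53}) \otimes \cn{abs}(\cn{Mdm2})$). For rules whose strong fireable condition mentions $\cn{DNAdam}$ (namely rules~1, 5, 6), I would further case-split on the current value of $\cn{DNAdam}$ by cut against $\cn{pres}(\cn{DNAdam}) \oplus \cn{abs}(\cn{DNAdam})$, which is available inside $\mathop{\dag}\, system$ via $\cn{well\_defined}_1(\cn{DNAdam})$. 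Rules~2, 3, and~4 involve only $\cn{p53}$ and $\cn{Mdm2}$, so no split on $\cn{DNAdam}$ is needed there.

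At each leaf I inspect whether $s\_\cn{fireable}(r)$ is met by the current atoms at $w$. If not, I pick the right disjunct $s\_\cn{not\_fireable}(r)$ via $\oplus R$ and discharge it directly by $\oplus R$ and $\otimes R$ on the visible atoms. If it is met, I pick the left disjunct and must prove $s\_\cn{fireable}(r) \mathbin{\&} \delay{1} \mathcal{R}$ from the same linear context. The left conjunct is immediate by reassembling the visible atoms. For the right conjunct I retrieve the strong form of rule $r$ from $\mathop{\dag}\, system$ by the copy rule plus $\forall L$ (instantiating the world at $w$) and $\at L$, then cut it (Theorem~\ref{thm:cut}) against the atoms representing the current state: its $\limp L$ consumes exactly the fireable premise and produces, at world $w.1$, the rewritten atoms for $a$ and $b$ tensored with $\downarrow u.~ \cn{unchanged}(V\setminus\{a,b\},u)$. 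Using the derived rule for $\delay{1}$ and the two $\cn{unchanged}$ implications inside it, I transport the remaining, unchanged atoms from $w$ to $w.1$, thereby assembling the complete state at $w.1$. A final inspection confirms this state satisfies one of the disjuncts of $\mathcal{R}$: $\cn{p53}$ and $\cn{Mdm2}$ are guaranteed to end up in opposite states, precisely because we are using the strong rules, each of which flips at least one of these two atoms from its pre-state. The $\cn{dont\_care}(\cn{DNAdam})$ factor is discharged by $\oplus R$ using whichever value $\cn{DNAdam}$ holds at $w.1$ (inherited via $\cn{unchanged}$, or $\cn{abs}$ after rule~5 consumes it).

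The main obstacle is the bookkeeping imposed by the additive $\mathbin{\&}$: both the conjunct $s\_\cn{fireable}(r)$ and the application of rule $r$ need the fireable atoms, and since $\mathbin{\&} R$ forces identical linear contexts on both premises, the proof must be organized so that these atoms remain simultaneously visible to both branches. This is not a real difficulty, because $s\_\cn{fireable}(r)$ is built solely from $\cn{pres}$ and $\cn{abs}$ atoms already present at $w$, so the same linear context supports both premises. A secondary annoyance is that Cases~A and~B are not symmetric across rules --- for example rule~4 fires in Case~A but not Case~B, whereas rule~2 is the opposite --- so each of the six rules has to be handled individually. Each leaf itself is mechanical and finite, which is exactly the sort of exhaustive, rule-indexed case split that the Coq/$\lambda$Prolog infrastructure mentioned in the introduction is well suited to automate.
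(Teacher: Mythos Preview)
Your proposal is correct and matches the paper's approach: the paper explicitly says Property~4 ``require[s] case analysis on the set of fireable rules'' and that the strong rules are needed so that ``each rule modifies at least one entity in the system,'' which is exactly the skeleton you describe (split on $\mathcal{L}$, then per rule decide $s\_\cn{fireable}$ vs.\ $s\_\cn{not\_fireable}$, and in the fireable branches fire the rule and observe that one of $\cn{p53},\cn{Mdm2}$ has flipped). One small imprecision: you say no split on $\cn{DNAdam}$ is needed for rules~2, 3, 4, but in the fireable subcases of rules~2 and~4 you still have to transport the $\cn{DNAdam}$ atom to $w.1$ via $\cn{unchanged}$ to discharge the $\cn{dont\_care}(\cn{DNAdam})$ factor of $\mathcal{R}$, and that transport forces a (local) $\oplus L$ on $\cn{pres}(\cn{DNAdam})\oplus\cn{abs}(\cn{DNAdam})$; this does not affect the fireable/not-fireable choice, so your high-level plan stands.
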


Property 4 could be written as the CTL formula
$\tA \tG(\mathcal{L} \rightarrow \tA \tX \mathcal{R})$ 
(see Sec \ref{sec:related-works} for the encoding of such a formula in HyLL).
Nevertheless, to simplify the proof we can observe that 
the argument property of $\tA \tG$ contains an implication and 
thus all the possible states verifying the left hand side of the implication 
should be taken into account in the proof.
As a matter of fact, at each step we do not make assumptions on the state where $\mathcal{L}$ holds,
we suppose to be in a generic state. 
The system satisfies Property 4 if all its states satisfy $\mathcal{L} \rightarrow \tA \tX \mathcal{R}$.
Since in our transition system all the states are reachable from the initial states, 
this corresponds to requiring the satisfaction of Property 4 at (that is, from) the initial states.
In case we want to test such a property at a state $S_i$ that is not connected to all the other states,  
we only need to prove the property in the subtree of the transition system rooted in $S_i$. 
Thus, we are only interested in the set of states reachable from $S_i$. 
In HyLL, we should prove the following theorem:
$if~ reachable(S,S_i)~ then~ S \vdash \mathcal{L} \rightarrow \forall r \in R~ \delta_1 \mathcal{R}$, 
  where $reachable(S,S_i)  \eqdef \exists u.~ S_i \rightarrow \delta_u S$.

\section{Formal Proofs}
\label{sec:formal-proofs}

As mentioned, we use a combination of the Coq Proof
Assistant and a theorem prover written in
the higher-order logic programming language
$\lambda$Prolog to formalize the properties
presented in the previous section.  We first describe the overall
approach (Sect.~\ref{subsec:combined}), and then describe the encoding
of the biological system in Coq (Sect.~\ref{subsec:biocoq}).

\subsection{A Combined Theorem Prover}
\label{subsec:combined}

Our approach is to fully formalize
proofs in Coq, using the $\lambda$Prolog prover to help with
partial automation of the proofs.  The $\lambda$Prolog prover is a
tactic-style interactive theorem prover implemented in the manner
described in~\cite{FeltyJAR:1993}, also given as an example in Sect.
9.4 of~\cite{MillerNadathur:2012}.  The code described there can be
viewed as a logical framework, implementing a tactic-style
architecture.  We use this code directly, and instantiate the
framework with tactics implementing the basic inference rules of HyLL.

We use Coq for two reasons.  First, we can build libraries in Coq that
allow us to reason at two levels.  We can prove meta-level properties of HyLL 
(for example, we have formalized
Theorem~\ref{hyll-structural} -weakening),   
and we can reason at the object-level,
which in this case means that we can prove HyLL sequents directly.  To
do so, we adopt the two-level style of reasoning used
in Hybrid~\cite{FeltyMomigliano:JAR10} where the logic we want to reason in
and about is called the \emph{specification logic} and is implemented
as an inductive predicate in Coq.  The inductive predicate in this
case defines HyLL sequents, and its definition is a fairly direct
modification of the ordered linear logic
in~\cite{FeltyMomigliano:JAR10}.  Second, once a proof is complete,
Coq provides a \emph{proof certificate}.  In particular, Coq
implements the calculus of inductive contstructions (CIC), where a
property is stated as a type in CIC and a proof is a $\lambda$-term
inhabiting that type.  This $\lambda$-term serves as a certificate,
which can be stored and checked independently.

It is, of course, possible to prove the properties in
Sect.~\ref{sec:example} only using Coq, but in general, proofs in Coq
of HyLL sequents quickly become cumbersome because of the amount of
detail required to apply each inference rule of HyLL.  The
$\lambda$Prolog prover is used to automatically infer much of this
detail.  For example, because we implement HyLL directly as an
inductive predicate in Coq, in order to apply the $\otimes L$ rule
(see Figure~\ref{fig:sq-rules}), Coq's \texttt{apply} tactic requires
arguments to be given explicitly for the instantiation of formulas $A$
and $B$, world $u$, and multisets $\Delta$ and $\Delta, A~@~u, B~@~u$.
As a result, the Coq proofs are verbose and often contain redundant
information.  In $\lambda$Prolog on the other hand, our primitive
tactics for applying HyLL inference rules use unification to infer
these arguments.  We could instead program a more automated version of
the \texttt{apply} tactic in Coq, tailored to applying HyLL rules
using Coq's Ltac facility, but this task would likely be more complex
and adhoc, since unification is not one of the primitive operations of
Coq.

It is also straightforward to represent HyLL proof terms in
$\lambda$Prolog, and implement the construction of these proof terms
as part of the implementation of the primitive tactics.  In our
$\lambda$Prolog prover, we construct such proof terms and then
translate them to strings representing Coq proof script.  This
translation is also implemented in $\lambda$Prolog.  These
automatically generated proof scripts are imported into Coq,
and then after some fine-tuning of the script, we obtain a proof
certificate for the entire proof.

To illustrate the approach, we briefly summarize our proof of
Proposition~\ref{prop:prop2} (Property 2) using this combined prover.
The statement of the property is expressed fairly directly in Coq, and
then after a few steps in Coq (very few in this case), the HyLL
sequent:
$$\mathop{\dag} system ~@~ 0;~ 
    state_0 ~\otimes~ \cn{pres}(\cn{DNAdam}) ~@~ w \\
\vdashseq state_0 \otimes \cn{abs}(\cn{DNAdam}) ~@~ w.4$$
is given to the $\lambda$Prolog prover.  (Currently this step is done
by hand, but we could of course automate the translation of a Coq
subgoal to a $\lambda$Prolog one.)  In the $\lambda$Prolog version $w$
is an arbitrary constant that does not appear anywhere else in the
sequent, generated by the meta-level universal quantifier of
$\lambda$Prolog.  After the proof is completed, the $\lambda$Prolog
prover generates and then outputs the Coq script to a file.  This file
is imported by hand into the Coq proof and modified slightly using a
small number of emacs macros as well as some fine tuning of proofs,
mainly about multiset equality.  Automating these last details is left
for future work.

Note that we do not use $\lambda$Prolog to fully automate proofs; the
$\lambda$Prolog prover is also interactive.  The user indicates what
HyLL rule to apply at each step, possibly with some other basic
information such as the position in $\Delta$ of the formula to which
the rule is to be applied.  The arguments needed to apply the rule are
then inferred automatically.  We could build more automation into the
prover, possibly in the style of the linear logic programming language
presented in~\cite{HodasMiller:I&C94} for general automation, with the
addition of heuristics tailored to our application.  
Again, this is left for future work.


\subsection{Representing the System in Coq}
\label{subsec:biocoq}

We summarize the encoding of our biological system in Coq.  This
section fills in some of the details of our approach to formalizing
proofs for readers familiar with interactive proof assistants and
inductive definitions.  Due to space restrictions, we do not describe
in detail the formalization of HyLL itself, but instead refer the
reader to~\cite{FeltyMomigliano:JAR10}, which uses
a \emph{higher-order abstract syntax} approach to representing
formulas and a specification logic for implementing inference rules,
as we do here.  We just note here that formulas of HyLL are encoded as
an inductive definition called \texttt{oo} and \texttt{seq} is the
inductive predicate defining the inference rules of HyLL.  Sequents
have the form \texttt{(seq Gamma Delta (A @ W))}.  In this
sequent, \texttt{Gamma} is a list of formulas of type \texttt{oo}
(using the built-in lists of Coq).  \texttt{Delta} is a multiset of
elements of type \texttt{oo}, where we build our own custom multiset
library.  Finally, \texttt{A} is a formula (type \texttt{oo})
and \texttt{W} is a world, where worlds are encoded using Coq's
built-in type for natural numbers.

In order to instantiate our encoding of HyLL to the example model we
consider here, we must introduce Coq sets representing the variables
and the biological system and the predicates expressing presence and
absence of these variables.  We define both as inductive types, and
also add some definitions for some convenient abbreviations.  Note
that constant names ending in an underscore in the Coq code below
represent direct instantiations (whose definitions we omit) of the
constants used to define the general version of HyLL.  For example,
the \texttt{atom\_} constant is the instantiated version of the part
of the HyLL encoding used to coerce atoms (type \texttt{atm}) to the
instantiated version of HyLL formulas (type \texttt{oo\_}).
\small
\begin{verbatim}
Inductive tm:Set := p53 | mdm2 | dNAdam.
Inductive atm:Set := pres_atm: tm -> atm | abs_atm: tm -> atm.
Definition vars := (p53::mdm2::dNAdam::nil).
Definition pres (x:tm): oo_ := (atom_ (pres_atm x)).
Definition abs (x:tm): oo_ := (atom_ (abs_atm x)).
\end{verbatim}
\normalsize
Definitions such as those in Sect.~\ref{sec:example.proofs} and
Figure~\ref{fig:hyllsys} are straightforward to encode in Coq.  As an
example, we show the definition of $\cn{dont\_care}$, both the direct
version and the version with multiple variable arguments.  The latter
is defined recursively using Coq's fixpoint operator.  The
symbols \texttt{+o} and \texttt{*o} represent the HyLL connectives
$\oplus$ and $\otimes$, respectively.
\small
\begin{verbatim}
Definition dont_care (t:tm): oo_ := (pres t +o abs t).
Fixpoint dont_cares (V:list tm): oo_ :=
  match V with
  | nil => One
  | (t::nil) => (dont_care t)
  | (t::ts) => ((dont_care t) *o (dont_cares ts))
  end.
\end{verbatim}
\normalsize
The definition of $\cn{unchanged}$ is similarly encoded, with
\texttt{unchanged1} and \texttt{unchanged} as the names of the direct
and multiple-argument versions, respectively.

We show one example each of an activation/inhibition predicate and
rule in Coq:
\small
\begin{verbatim}
Definition inhib (V:list tm) (a b:tm): oo_ :=
 ((pres a +o (pres a *o pres b) +o (pres a *o abs b)) ->>
  ((step (pres a *o abs b)) *o (Down (fun u => unchanged (minustm_ V (a::b::nil)) u)))).
Definition rule1 := inhib vars dNAdam mdm2.
\end{verbatim}
\normalsize
The symbols \texttt{->>}, \texttt{step}, and \texttt{Down} represent
the HyLL operators $\limp$, $\delay{1}$, and $\downarrow$,
respectively.  The $\downarrow$ operator binds a variable, and thus
its higher-order abstract syntax representation \texttt{Down} takes a
functional argument.  The function \texttt{minustm\_} implements
set difference for type \texttt{tm}.

In order to do induction and case analysis on rules, we define
$\cn{fireable}$ and $\cn{not\_fireable}$ as inductive predicates whose
first argument is the rule number.  The first clause of the general
versions of each is shown below.
\small
\begin{verbatim}
Inductive fireable: nat -> oo_ -> Prop :=
| f1: fireable 1 ((pres dNAdam +o (pres dNAdam *o pres mdm2) +o (pres dNAdam *o abs mdm2)) *o
                  (dont_care p53))
| ...

Inductive not_fireable: nat -> oo_ -> Prop :=
| nf1: not_fireable 1 (abs dNAdam *o (dont_cares (mdm2::p53::nil)))
| ...
\end{verbatim}
\normalsize
To illustrate their use, we state the Coq version of
\texttt{Gamma}
and \texttt{PP} are the Coq encodings of (resp.) $\mathop{\dag} system
~@~ 0$ and $(state_0  ~\otimes~ \cn{abs}(\cn{DNAdam}))$.
\small
\begin{verbatim}
Theorem Property3 : forall w:world, 
  seq Gamma ((PP @ 0)::nil) ((PP at 0) @ w) /\
  forall (n:nat) (A B:oo_), fireable n A -> not_fireable n B ->
  seq Gamma nil ((PP ->> ((A &a step PP) +o B)) @ w).
\end{verbatim}
\normalsize
The Coq proof of the second conjunct proceeds by case analysis
on \texttt{n}, followed by inversion on \texttt{(fireable n A)}
and \texttt{(not\_fireable n B)}, which provides instantiations
for \texttt{A} and \texttt{B}
(the conditions that express whether the rule is fireable or not).  
The resulting 6 subgoals are sent to
the $\lambda$Prolog prover, whose output is imported back into Coq as
described above.

\section{Comparison with model checking}
\label{sec:related-works}

%
While temporal logics such as LTL, CTL, or CTL$^\ast$ have been very successful in practice
with efficient model checking tools, the proof theory of these logics is very complex.
In contrast, HyLL has a very traditional proof theoretic pedigree:
it is presented in  the sequent calculus and enjoys cut-elimination and focusing~\cite{ChaudhuriDespeyroux:13tr}.
A further advantage of our approach with respect to model checking is that
it provides an unified framework to encode both transition rules  
and (both statements and proofs of)
temporal properties. 

Let us examine both approaches in more details.

\subsection{Temporal Operators}  

%
We propose the following encoding of temporal logic operators in $\textrm{HyLL}[\mathcal{T}]$, 
where $\mathcal{T} = \langle \N, +, 0\rangle$, representing instants of time.
While this domain does not have any branching structure like CTL, it is expressive enough 
for many common idioms because of the branching structure of derivations involving $\oplus$.

%
State quantifiers can be easily mapped.
There is a clear correspondence between $\tF$ (resp. $\tG$) and
$\Diamond$ (resp. $\Box$), see Definition \ref{def:modal-connectives}. 
The encodings of $\tX$ and $\tU$ are the following ones:
$\tX P \equiv \delta_1 P$ and $P_1 \tU P_2 \equiv 
\downarrow u.~ \exists v.~ P_2 ~\at~ u.v \otimes \forall w < v.~ P_1 ~\at~ u.w$.
where $P$, $P_1$, and $P_2$ are some propositions  (not necessarily
atomic ones).\footnote{Observe that a proposition characterizes a set of states.}   
%

As for path quantifiers, the question is more subtle.
The idea is that $\tE$ corresponds to the existence of a proof,
while for $\tA$ it is necessary to look at a proof considering all the possible rules 
to be applied at each step
(at each step of the proof, the chosen rule should not influence the property satisfaction).
%

We came to the conclusion that the encoding of $\tA$ in HyLL depends on the state quantifier following it.
Let R be the set of rules of our transition system. The mapping we propose is the following one:

\begin{itemize}
 \item $\tA \tX P$. In HyLL, we write $\forall r \in R~ \delta_1 P$.
More precisely, the encoding contemplating fireable rules is
$\forall r \in R~ (\cn{fireable}(r)~ \&~ \delta_1 P) \oplus \cn{not\_fireable}(r)$ 
(see Sect. \ref{sec:induction}).
For the sake of simplicity, in the following we omit such details concerning fireable rules.
\item $\tA \tG P$. 
It is equivalent to $P \wedge \tA \tG (P \rightarrow \tA \tX (P))$. In HyLL, 
we write $P \otimes \forall n (P~ at~ n) \rightarrow  \forall r \in R (P$ at $n+1)$.
\item $\tA \tF P$. 
It is equivalent to $P \vee \tA \tX (\tA \tF P)$. 
If we have a bound k on the number of steps needed to satisfy the property, 
we can expand this formula by obtaining: $P \vee \tA \tX (P \vee \tA \tX( \ldots \tA \tX P))$, 
with k nested occurrences of $\tA \tX$. 
In HyLL, we write 
$P \oplus \forall r \in R (\delta_1 P \oplus (\forall r \in R (\ldots \delta_{k} P)))$. 
Notice that another alternative is to express the $\tF$ operator by using 
$\tU$ ($\tF P \equiv \textrm{true}~ \tU P$).
\item $\tA(P_1 \tU P_2)$. 
It is equivalent to $P_2 \vee (P_1 \wedge \tA \tX (P_1 \tU P_2)$. 
If we have a bound k on the number of steps needed to satisfy the property, 
we can expand this formula by obtaining: 
$P_2 \vee (P_1 \wedge \tA \tX (P_2 \vee (P_1 \wedge \tA \tX (\ldots \tA \tX P_2))))$, 
with k nested occurrences of $\tA \tX$. 
In HyLL, we write 
$P_2 \oplus (P_1 \otimes \forall r \in R (\delta_1 P_2 
\oplus (\delta_1 P_1 \otimes \forall r \in R ( \ldots \delta_{k} P_2))))$.
\end{itemize}
%
%
In addition to the future connectives, 
the domain $\mathcal{T}$ also admits past connectives if we add
saturating subtraction ({\it i. e.}, $a - b = 0$ if $b \ge a$) to the language of worlds. 
We can then define the duals $\tH$ (historically) and $\tO$ (once) of $\tG$ and $\tF$ as:
$$\tH~P \eqdef  \downarrow u. \forall w. (P ~\at~ u - w) \qquad
    \tO~P \eqdef  \downarrow u. \exists w. (P ~\at~ u - w)
$$
%

\subsection{Model Checking}
%

%
A strength of our approach with respect to model checking
is that, when we prove an existential property using certain rules of a model,
we have the guarantee that \emph{all} the models containing such rules satisfy the
property. This is important because in biology we often deal with incomplete information.
It is also worth noting that in model checking, all objects are finite:
both the number of states, and the number of transitions in the state graph.
In HyLL, objects can potentially be infinite; in particular, we can have an infinite number of states.
%
Let us point out further advantages of our approach with respect to model checking. First of all,
when proving a given property  we do not need to blindly try all possible rules at each step
but we can guide the proof (see section \ref{sec:conclusion}).
Observe that a successful proof of a given property can be exploited to prove similar properties.
Furthermore, suppose we are able to prove a property of the system which is not desirable.
In this case the proof we get can help us in understanding what should be modified in the
system so that the property is not satisfied.
More precisely,
we can look for the rules to be removed/modified among those that have been used in the proof. 
%
%
In model checking, when a property turns out to be true, the reason is not investigated.
Finally, in~\cite{W83IC}, temporal logic is extended to allow the expression of
properties such as ``P is true at every even state of an infinite path.''  
A decision procedure for this extended logic is also defined,
but to the best of our knowledge, there is no model checking tool for it.  
In HyLL, if we add equality on worlds, we can write $\forall n=2k.~ P~ at~ n$.

Note that   
in some of the temporal properties we test, there is a bound on the number of time units,
and thus on the length of the proof. In this particular case,
there is a strong analogy with ``bounded model checking'' \cite{BCCSZ03academic},
where the user can limit the length of paths leading from the initial state to states 
which satisfy the property to be tested.
%
Observe that we could easily couple our model with other models sharing some variables.
As an example, we could consider a Boolean model of the cell cycle and add the following linking rules:
p53 activates p21 and p21 inhibits CycA and CycE, where the last three compounds belong to the cell cycle
model \cite{DFRS11tcs}.
%

A drawback of theorem proving with respect to model checking is that this method can be 
time consuming and needs an expert.
Recent advances in both proof theory and systems  
however provide us with at least partial, and sometimes complete, automation of the proofs.

\section{Conclusion and Future Work}
\label{sec:conclusion}


In this paper we argued that the HyLL logic can be successfully
exploited for formally verifying Boolean biological systems. 
This work is a first experiment along this new line of research 
(although we already provide fully mechanized proofs).
We focussed on a simple regulatory  network 
but our framework could be adopted to model several other kinds of
biological networks (e.g., neuronal, predator-prey, or ecological networks).

A natural extension of this work consists of applying our methodology to 
multivalued, continuous, and stochastic biological models. 
As far as the first case is concerned, the extension is straightforward, 
we just need to replace present/absent predicates by predicates indicating 
the discrete values of variables: $\cn{C}(A,x)$. 
The last two cases are more involved. 
We could try to use the domain of world 
${\mathcal W} =\langle \Re^+, +, 0\rangle$,  
use predicates $\cn{C}$ to represent variable concentrations 
and express the evolution of
each variable in terms of functions involving kinetic expressions 
such us the mass action law, Hill, or Michaelis-Menten kinetics. 
\cite{ChaudhuriDespeyroux:13tr} also discuss several alternatives 
for the worlds in the probabilistic case.
In any case, 
the challenge would consist of being able to perform symbolic calculus 
as much as possible without evaluating functions.

With regard to formal proofs, an alternative to using Coq with 
HyLL implemented as the specification logic 
is to use Abella~\cite{gacek09phd}, which also is a two-level system,
with $\mathcal{G}$ as the meta-logic in which one can define a
specification logic.  
This requires replacing Abella's current
higher-order intuitionistic specification logic with HyLL and
implementing support for building proofs using this logic.
%

Proofs of properties such as 1 and 2 require finding a path through
the system, which here means specifying a series of rules that can be
applied in a particular order.  At each step, there may be a choice
between several potential fireable rules.  In the interactive
proofs, such choices were made by hand. Our future work includes
building automated procedures (e.g., Coq tactics) to guide the proof.
%
%
%
We would also like to extend our model to include axioms for events
such as those considered in Biocham, which make it possible to
change the value of some variables under certain special conditions.
Such events often correspond to external inputs and have priority over the ordinary rules of a model.  
%

We were looking for the logical essence of biochemical reactions. 
What we envision for the domain of ``biological computation'' 
is a resource-aware stochastic or probabilistic $\lambda$-calculus 
that has HyLL propositions as (behavioral) types.
%

\section*{Acknowledgment}

This work was initially partially supported 
by the European TYPES project.
The second author thanks 
Fran\c{c}ois Fages, Sylvain Soliman, Alessandra Carbone, Vincent Danos, and 
Jean Krivine for fruitful discussions on various preliminary versions of 
the HyLL logic
in view of its potential applications to biology. 


\bibliographystyle{alpha}  
\bibliography{hyll-bio,betty,markov}


\clearpage
\appendix

\section{Example Specification in HyLL}
\label{appendix:example.specif}

%

\subsection{Strong Rules}

\begin{itemize}
\item{\emph{Variables}}:
  \begin{quote} 
     $\cn{unchanged}(x,w)\eqdef 
           ~!~ [ ( \cn{pres}(x) ~\at~ w \limp \cn{pres}(x) ~\at~ w.1 ) ~ \mathbin{\&} ~
           ( \cn{abs}(x) ~\at~ w \limp \cn{abs}(x) ~\at~ w.1 ) ]$. \\
      \hspace{-1.1cm}
     $\cn{unchanged}(V,w) \eqdef \otimes_{x \in V} \cn{unchanged}(x,w)$.
\end{quote}
\item {\emph{Activation}}: 
  \begin{quote}
    $s\_\cn{active}(V,a,b) \eqdef 
      \cn{pres}(a) \otimes \cn{abs}(b) 
      \limp \delay{1} (\cn{pres}(a) \otimes \cn{pres}(b)) \otimes \downarrow u.~ \cn{unchanged}(V \setminus \{a,b\},u))$. 
\end{quote}
\item {\emph{Activation with consumption}}: 
  \begin{quote}
     $s\_\cn{active}_c(V,a,b) \eqdef 
      \cn{pres}(a) \otimes \cn{abs}(b) 
      \limp \delay{1} (\cn{abs}(a) \otimes \cn{pres}(b))) \otimes \downarrow u.~ \cn{unchanged}(V \setminus \{a,b\},u))$. 
 \end{quote}
\item {\emph{Strong activation}}: 
  \begin{quote}
     $s\_\cn{active}_s(V,a,b) \eqdef 
      \cn{abs}(a) \otimes \cn{pres}(b) 
      \limp \delay{1} (\cn{abs}(a) \otimes \cn{abs}(b)) \otimes \downarrow u.~ \cn{unchanged}(V \setminus \{a,b\},u))$. 
 \end{quote}
\item {\emph{Inhibition}}: 
  \begin{quote}
     $s\_\cn{inhib}(V,a,b) \eqdef 
      \cn{pres}(a) \otimes \cn{pres}(b) 
      \limp \delay{1} (\cn{pres}(a) \otimes \cn{abs}(b)) \otimes \downarrow u.~ \cn{unchanged}(V \setminus \{a,b\},u))$. 
 \end{quote}
\item {\emph{Inhibition with consumption}}: 
  \begin{quote}
     $s\_\cn{inhib}_c(V,a,b) \eqdef 
      \cn{pres}(a) \otimes \cn{pres}(b) 
      \limp \delay{1} (\cn{abs}(a) \otimes \cn{abs}(b)) \otimes \downarrow u.~ \cn{unchanged}(V \setminus \{a,b\},u))$. 
 \end{quote}
\item {\emph{Strong inhibition}}: 
  \begin{quote}
     $s\_\cn{inhib}_s(V,a,b) \eqdef 
      \cn{abs}(a) \otimes \cn{abs}(b) 
      \limp \delay{1} (\cn{abs}(a) \otimes \cn{pres}(b)) \otimes \downarrow u.~ \cn{unchanged}(V \setminus \{a,b\},u))$. 
 \end{quote}
\item{\emph{Well definedness}}: 
  \begin{quote}
   $\cn{well\_defined}_0(V) \eqdef 
         \forall a \in V.~ [\cn{pres}(a) \otimes \cn{abs}(a) \limp \mathop{0}]$. \\
   $\cn{well\_defined}_1(V) \eqdef 
        \forall a \in V.~ [\cn{pres}(a) \oplus \cn{abs}(a)]$. \\
   $\cn{well\_defined}(V) ~ \eqdef 
        \cn{well\_defined}_0(V), \cn{well\_defined}_1(V)$.
  \end{quote}
\item{\emph{The system}} 
  \begin{quote}
  $\begin{array}{ll}
   \cn{vars}  & \eqdef \{\cn{p53}, \cn{Mdm2},\cn{DNAdam} \} \\
   s\_\cn{rule}(1) &  \eqdef s\_\cn{inhib}(\cn{vars},\cn{DNAdam},\cn{Mdm2}) \\
         &  \eqdef \cn{pres}(\cn{DNAdam}) \otimes \cn{pres}(\cn{Mdm2}) 
             \limp \delay{1} (\cn{pres}(\cn{DNAdam}) \otimes \cn{abs}(\cn{Mdm2})) \otimes \downarrow u.~ \cn{unchanged}(\cn{p53},u) \\
     s\_\cn{rule}(2) &  \eqdef \cn{Inhib_s}(\cn{vars},\cn{Mdm2},\cn{p53}) \\
           &  \eqdef \cn{abs}(\cn{Mdm2}) \otimes \cn{abs}(\cn{p53}) 
               \limp \delay{1} (\cn{abs}(\cn{Mdm2}) \otimes \cn{pres}(\cn{p53})) \otimes \downarrow u.~ \cn{unchanged}(\cn{DNAdam},u)) \\
     s\_\cn{rule}(3) &  \eqdef s\_\cn{active}(\cn{vars},\cn{p53},\cn{Mdm2}) \\
          &  \eqdef \cn{pres}(\cn{p53}) \otimes \cn{abs}(\cn{Mdm2}) 
              \limp \delay{1} (\cn{pres}(\cn{p53}) \otimes \cn{pres}(\cn{Mdm2})) \otimes \downarrow u.~ \cn{unchanged}(\cn{DNAdam},u)) \\
    s\_\cn{rule}(4) &  \eqdef s\_\cn{inhib}(\cn{vars},\cn{Mdm2},\cn{p53}) \\
           &  \eqdef \cn{pres}(\cn{Mdm2}) \otimes \cn{pres}(\cn{p53}) 
                \limp \delay{1} (\cn{pres}(\cn{Mdm2}) \otimes \cn{abs}(\cn{p53})) \otimes \downarrow u.~ \cn{unchanged}(\cn{DNAdam},u)) \\
     s\_\cn{rule}(5) &  \eqdef \cn{Inhib_c}(\cn{vars},\cn{p53},\cn{DNAdam}) \\
           &  \eqdef \cn{pres}(\cn{p53}) \otimes \cn{pres}(\cn{DNAdam}) 
               \limp \delay{1} (\cn{abs}(\cn{p53}) \otimes \cn{abs}(\cn{DNAdam})) \otimes \downarrow u.~ \cn{unchanged}(\cn{Mdm2},u)) \\
   \end{array}$
  $\begin{array}{ll}
    s\_\cn{rule}(6) &  \eqdef \cn{Inhib_s}(\cn{vars},\cn{DNAdam},\cn{Mdm2}) \\
          &  \eqdef \cn{abs}(\cn{DNAdam}) \otimes \cn{abs}(\cn{Mdm2}) 
              \limp \delay{1} (\cn{abs}(\cn{DNAdam}) \otimes \cn{pres}(\cn{Mdm2})) \otimes \downarrow u.~ \cn{unchanged}(\cn{p53},u)) 
   \end{array}$
\end{quote}
 \begin{quote}
     $\cn{system}\eqdef \cn{vars},s\_\cn{rule}(1),s\_\cn{rule}(2), 
    s\_\cn{rule}(3),s\_\cn{rule}(4),s\_\cn{rule}(5),s\_\cn{rule}(6),
     \cn{well\_defined}(\cn{vars}).
     $
  \end{quote}
\item{\emph{Initial state}}:
  \begin{quote}
     $\cn{initial\_state}\eqdef \cn{abs}(\cn{p53}) \otimes \cn{pres}(\cn{Mdm2}),\\
     \cn{initial\_state} ~\at~ 0.
     $
    \end{quote}
%
\item {\emph{Hypothesis (with strong rules)}}: 
  \begin{quote}
    $\cn{dont\_care}(x) ~ \eqdef \cn{pres}(x) \oplus \cn{abs}(x)$ \\
    $\cn{dont\_care}(V) \eqdef  \otimes_{x \in V} \cn{dont\_care}(x)$ 
 \end{quote}
  \begin{quote}
     $s\_\cn{fireable}(1) \eqdef \cn{pres}(\cn{DNAdam}) \otimes \cn{pres}(\cn{Mdm2}) 
        \otimes \cn{dont\_care}(\cn{p53}) \\
     s\_\cn{fireable}(2) \eqdef  \cn{abs}(\cn{Mdm2})  \otimes \cn{abs}(\cn{p53}) 
        \otimes \cn{dont\_care}(\cn{DNAdam}) \\
     s\_\cn{fireable}(3) \eqdef  \cn{pres}(\cn{p53}) \otimes \cn{abs}(\cn{Mdm2}) 
        \otimes \cn{dont\_care}(\cn{DNAdam}) \\
     s\_\cn{fireable}(4) \eqdef  \cn{pres}(\cn{Mdm2}) \otimes \cn{pres}(\cn{p53}) 
        \otimes \cn{dont\_care}(\cn{DNAdam}) \\
     s\_\cn{fireable}(5) \eqdef  \cn{pres}(\cn{p53}) \otimes \cn{pres}(\cn{DNAdam}) 
        \otimes \cn{dont\_care}(\cn{Mdm2}) \\
     s\_\cn{fireable}(6) \eqdef  \cn{abs}(\cn{DNAdam}) \otimes \cn{abs}(\cn{Mdm2}) 
        \otimes \cn{dont\_care}(\cn{p53}) \\
     $
  \end{quote}
     $s\_\cn{not\_fireable}(1) \eqdef \\ 
         ( (\cn{abs}(\cn{DNAdam}) \otimes \cn{pres}(\cn{Mdm2})) 
           ~\oplus~  (\cn{pres}(\cn{DNAdam}) \otimes \cn{abs}(\cn{Mdm2}))
           ~\oplus~  (\cn{abs}(\cn{DNAdam}) \otimes \cn{abs}(\cn{Mdm2})) )
         ~\otimes~ \cn{dont\_care}(\cn{p53}) \\
     s\_\cn{not\_fireable}(2) \eqdef \\
        ( (\cn{pres}(\cn{Mdm2})  \otimes \cn{abs}(\cn{p53})) 
          ~\oplus~ (\cn{abs}(\cn{Mdm2})  \otimes \cn{pres}(\cn{p53}))
          ~\oplus~ (\cn{pres}(\cn{Mdm2})  \otimes \cn{pres}(\cn{p53})) 
        ~\otimes~ \cn{dont\_care}(\cn{DNAdam}) \\
     s\_\cn{not\_fireable}(3) \eqdef \\
        ( (\cn{abs}(\cn{p53}) \otimes \cn{abs}(\cn{Mdm2}))
          ~\oplus~ (\cn{pres}(\cn{p53}) \otimes \cn{pres}(\cn{Mdm2}))
          ~\oplus~ (\cn{abs}(\cn{p53}) \otimes \cn{pres}(\cn{Mdm2})) )
        ~\otimes~ \cn{dont\_care}(\cn{DNAdam}) \\
     s\_\cn{not\_fireable}(4) \eqdef \\
        ( (\cn{abs}(\cn{Mdm2}) \otimes \cn{pres}(\cn{p53})) 
          ~\oplus~ (\cn{pres}(\cn{Mdm2}) \otimes \cn{abs}(\cn{p53})) 
          ~\oplus~ (\cn{abs}(\cn{Mdm2}) \otimes \cn{abs}(\cn{p53})) )
        ~\otimes~ \cn{dont\_care}(\cn{DNAdam}) \\
     s\_\cn{not\_fireable}(5) \eqdef \\
        ( (\cn{abs}(\cn{p53}) \otimes \cn{pres}(\cn{DNAdam}))
          ~\oplus~ (\cn{pres}(\cn{p53}) \otimes \cn{abs}(\cn{DNAdam}))
          ~\oplus~ (\cn{abs}(\cn{p53}) \otimes \cn{abs}(\cn{DNAdam})) )
        ~\otimes~ \cn{dont\_care}(\cn{Mdm2}) \\
     s\_\cn{not\_fireable}(6) \eqdef \\
        ( (\cn{pres}(\cn{DNAdam}) \otimes \cn{abs}(\cn{Mdm2}))
          ~\oplus~  (\cn{abs}(\cn{DNAdam}) \otimes \cn{pres}(\cn{Mdm2}))
          ~\oplus~  (\cn{pres}(\cn{DNAdam}) \otimes \cn{pres}(\cn{Mdm2})) )
        ~\otimes~ \cn{dont\_care}(\cn{p53}) 
     $
\end{itemize}

%

\newpage

\subsection{General Rules}

\begin{itemize}
\item{\emph{Variables}}:
  \begin{quote} 
     $\cn{unchanged}(x,w)\eqdef 
           ~!~ [ ( \cn{pres}(x) ~\at~ w \limp \cn{pres}(x) ~\at~ w.1 ) ~ \mathbin{\&} ~
           ( \cn{abs}(x) ~\at~ w \limp \cn{abs}(x) ~\at~ w.1 ) ]$. \\
      \hspace{-1.1cm}
     $\cn{unchanged}(V,w) \eqdef \otimes_{x \in V} \cn{unchanged}(x,w)$.
\end{quote}
\item {\emph{Activation}}: 
  \begin{quote}
     $\cn{active}(V,a,b) \eqdef 
      (\cn{pres}(a) \oplus (\cn{pres}(a) \otimes \cn{pres}(b)) \oplus (\cn{pres}(a) \otimes \cn{abs}(b)))
      \\ \vphantom{.} \qquad \qquad \qquad \quad  ~~
      \limp \delay{1}~ (\cn{pres}(a) \otimes \cn{pres}(b)) ~ \otimes \downarrow u.~ \cn{unchanged}(V \setminus \{a,b\},u))$. 
\end{quote}
\item {\emph{Activation with consumption}}: 
  \begin{quote}
     $\begin{array}{ll}
     \cn{active}_c(V,a,b) \eqdef 
      & (\cn{pres}(a) \oplus (\cn{pres}(a) \otimes \cn{pres}(b)) \oplus (\cn{pres}(a) \otimes \cn{abs}(b))) \\
      & \limp \delay{1}~ (\cn{abs}(a) \otimes \cn{pres}(b)) ~ \otimes \downarrow u.~ \cn{unchanged}(V \setminus \{a,b\},u)). 
       \end{array}$
  \end{quote}
\item {\emph{Strong activation}}: 
  \begin{quote}
     $\begin{array}{ll}
    \cn{active}_s(V,a,b) \eqdef 
       & (\cn{abs}(a) \oplus (\cn{abs}(a) \otimes \cn{pres}(b)) \oplus (\cn{abs}(a) \otimes \cn{abs}(b))) \\
       & \limp \delay{1}~ (\cn{abs}(a) \otimes \cn{abs}(b)) ~ \otimes \downarrow u.~ \cn{unchanged}(V \setminus \{a,b\},u)).
       \end{array}$
\end{quote}
\item {\emph{Inhibition}}: 
  \begin{quote}
     $\begin{array}{ll}
      \cn{inhib}(V,a,b) \eqdef 
       & (\cn{pres}(a) \oplus (\cn{pres}(a) \otimes \cn{pres}(b)) \oplus (\cn{pres}(a) \otimes \cn{abs}(b))) \\
       &\limp \delay{1}~ (\cn{pres}(a) \otimes \cn{abs}(b)) ~ \otimes \downarrow u.~ \cn{unchanged}(V \setminus \{a,b\},u)). 
       \end{array}$
  \end{quote}
\item {\emph{Inhibition with consumption}}: 
  \begin{quote}
     $\begin{array}{ll}
    \cn{inhib}_c(V,a,b) \eqdef 
     & (\cn{pres}(a) \oplus (\cn{pres}(a) \otimes \cn{pres}(b)) \oplus (\cn{pres}(a) \otimes \cn{abs}(b))) \\
     & \limp \delay{1}~ (\cn{abs}(a) \otimes \cn{abs}(b)) ~ \otimes \downarrow u.~ \cn{unchanged}(V \setminus \{a,b\},u)).
     \end{array}$
 \end{quote}
\item {\emph{Strong inhibition}}: 
  \begin{quote}
     $\begin{array}{ll}
     \cn{inhib}_s(V,a,b) \eqdef 
     & (\cn{abs}(a) \oplus (\cn{abs}(a) \otimes \cn{pres}(b)) \oplus (\cn{abs}(a) \otimes \cn{abs}(b))) \\
     & \limp \delay{1}~ (\cn{abs}(a) \otimes \cn{pres}(b)) ~ \otimes \downarrow u.~ \cn{unchanged}(V \setminus \{a,b\},u)). 
     \end{array}$
 \end{quote}
\item{\emph{Well definedness}}: 
  \begin{quote}
    $\cn{well\_defined}_0(V) \eqdef 
         \forall a \in V.~ [\cn{pres}(a) \otimes \cn{abs}(a) \limp \mathop{0}]$. \\
   $\cn{well\_defined}_1(V) \eqdef 
        \forall a \in V.~ [\cn{pres}(a) \oplus \cn{abs}(a)]$. \\
   $\cn{well\_defined}(V) ~ \eqdef 
        \cn{well\_defined}_0(V), \cn{well\_defined}_1(V)$.
  \end{quote}
\item{\emph{The system}} 
  \begin{quote}
   $\begin{array}{ll}
   \cn{vars} & \eqdef \{\cn{p53}, \cn{Mdm2},\cn{DNAdam} \} \\
   \cn{rule}(1) & \eqdef \cn{inhib}(\cn{vars},\cn{DNAdam},\cn{Mdm2}) \\
         & \eqdef (\cn{pres}(\cn{DNAdam}) 
             \oplus (\cn{pres}(\cn{DNAdam})  \otimes \cn{pres}(\cn{Mdm2})) 
             \oplus (\cn{pres}(\cn{DNAdam})  \otimes \cn{abs}(\cn{Mdm2}))) \\
          & \quad ~ \limp \delay{1} (\cn{pres}(\cn{DNAdam}) \otimes \cn{abs}(\cn{Mdm2})) ~ \otimes \downarrow u.~ 
             \cn{unchanged}(\cn{p53},u) \\
      \cn{rule}(2) & \eqdef \cn{inhib_s}(\cn{vars},\cn{Mdm2},\cn{p53}) \\
         & \eqdef (\cn{abs}(\cn{Mdm2}) 
             \oplus (\cn{abs}(\cn{Mdm2}) \otimes \cn{pres}(\cn{p53})) 
             \oplus (\cn{abs}(\cn{Mdm2}) \otimes \cn{abs}(\cn{p53}))) \\
         & \quad ~   \limp \delay{1} (\cn{abs}(\cn{Mdm2}) \otimes \cn{pres}(\cn{p53})) ~ \otimes \downarrow u.~ 
             \cn{unchanged}(\cn{DNAdam},u)) \\
  \end{array}$

  $\begin{array}{ll}
      \cn{rule}(3) & \eqdef \cn{active}(\cn{vars},\cn{p53},\cn{Mdm2}) \\
        & \eqdef (\cn{pres}(\cn{p53}) 
            \oplus (\cn{pres}(\cn{p53}) \otimes \cn{pres}(\cn{Mdm2})) 
            \oplus (\cn{pres}(\cn{p53}) \otimes \cn{abs}(\cn{Mdm2}))) \\
        & \quad ~   \limp \delay{1} (\cn{pres}(\cn{p53}) \otimes \cn{pres}(\cn{Mdm2})) ~ \otimes \downarrow u.~ 
           \cn{unchanged}(\cn{DNAdam},u)) \\
      \cn{rule}(4) & \eqdef \cn{inhib}(\cn{vars},\cn{Mdm2},\cn{p53}) \\
         & \eqdef (\cn{pres}(\cn{Mdm2})  
            \oplus (\cn{pres}(\cn{Mdm2}) \otimes \cn{pres}(\cn{p53})) 
            \oplus (\cn{pres}(\cn{Mdm2}) \otimes \cn{abs}(\cn{p53}))) \\
         & \quad ~   \limp \delay{1} (\cn{pres}(\cn{Mdm2}) \otimes \cn{abs}(\cn{p53})) ~ \otimes \downarrow u.~ 
             \cn{unchanged}(\cn{DNAdam},u)) \\
      \cn{rule}(5) & \eqdef \cn{inhib_c}(\cn{vars},\cn{p53},\cn{DNAdam}) \\
        & \eqdef (\cn{pres}(\cn{p53})  
            \oplus (\cn{pres}(\cn{p53}) \otimes \cn{pres}(\cn{DNAdam})) 
            \oplus (\cn{pres}(\cn{p53}) \otimes \cn{abs}(\cn{DNAdam}))) \\
         & \quad ~   \limp \delay{1} (\cn{abs}(\cn{p53}) \otimes \cn{abs}(\cn{DNAdam})) ~ \otimes \downarrow u.~ 
             \cn{unchanged}(\cn{Mdm2},u)) \\
      \cn{rule}(6) & \eqdef \cn{inhib_s}(\cn{vars},\cn{DNAdam},\cn{Mdm2}) \\
         & \eqdef (\cn{abs}(\cn{DNAdam})  
             \oplus (\cn{abs}(\cn{DNAdam}) \otimes \cn{pres}(\cn{Mdm2})) 
             \oplus (\cn{abs}(\cn{DNAdam}) \otimes \cn{abs}(\cn{Mdm2}))) \\
          & \quad ~   \limp \delay{1} (\cn{abs}(\cn{DNAdam}) \otimes \cn{pres}(\cn{Mdm2})) ~ \otimes \downarrow u.~ 
             \cn{unchanged}(\cn{p53},u))
   \end{array}$
 \end{quote}
\begin{quote} 
     $\cn{system}\eqdef \cn{vars}, \cn{rule}(1), \cn{rule}(2), 
     \cn{rule}(3), \cn{rule}(4), \cn{rule}(5), \cn{rule}(6),
     \cn{well\_defined}(\cn{vars}).
     $
  \end{quote}
\item{\emph{Initial state}}:
  \begin{quote}
     $\cn{initial\_state}\eqdef \cn{abs}(\cn{p53}) \otimes \cn{pres}(\cn{Mdm2}),\\
     \cn{initial\_state} ~\at~ 0.
     $
    \end{quote}
\item {\emph{Hypothesis}}: 
 \begin{quote}
     $\cn{dont\_care}(x) ~ \eqdef \cn{pres}(x) \oplus \cn{abs}(x)$ \\
     $\cn{dont\_care}(V) \eqdef  \otimes_{x \in V} \cn{dont\_care}(x)$ 
 \end{quote}
     $\cn{fireable}(1) \eqdef ( \cn{pres}(\cn{DNAdam}) 
             \oplus (\cn{pres}(\cn{DNAdam})  \otimes \cn{pres}(\cn{Mdm2}))
             \oplus (\cn{pres}(\cn{DNAdam})  \otimes \cn{abs}(\cn{Mdm2}))) 
             ~\otimes~ \cn{dont\_care}(\cn{p53}) \\
     \cn{fireable}(2) \eqdef  ( \cn{abs}(\cn{Mdm2}) 
             \oplus (\cn{abs}(\cn{Mdm2}) \otimes \cn{pres}(\cn{p53})) 
             \oplus (\cn{abs}(\cn{Mdm2}) \otimes \cn{abs}(\cn{p53})) )
             ~\otimes~ \cn{dont\_care}(\cn{DNAdam}) \\
     \cn{fireable}(3) \eqdef  ( \cn{pres}(\cn{p53}) 
            \oplus (\cn{pres}(\cn{p53}) \otimes \cn{pres}(\cn{Mdm2})) 
            \oplus (\cn{pres}(\cn{p53}) \otimes \cn{abs}(\cn{Mdm2})) ) 
             ~\otimes~ \cn{dont\_care}(\cn{DNAdam}) \\
     \cn{fireable}(4) \eqdef  ( \cn{pres}(\cn{Mdm2})  
            \oplus (\cn{pres}(\cn{Mdm2}) \otimes \cn{pres}(\cn{p53})) 
            \oplus (\cn{pres}(\cn{Mdm2}) \otimes \cn{abs}(\cn{p53})) )
             ~\otimes~ \cn{dont\_care}(\cn{DNAdam}) \\
     \cn{fireable}(5) \eqdef ( \cn{pres}(\cn{p53})  
            \oplus (\cn{pres}(\cn{p53}) \otimes \cn{pres}(\cn{DNAdam})) 
            \oplus (\cn{pres}(\cn{p53}) \otimes \cn{abs}(\cn{DNAdam})) )
             ~\otimes~ \cn{dont\_care}(\cn{Mdm2}) \\
     \cn{fireable}(6) \eqdef ( \cn{abs}(\cn{DNAdam})  
             \oplus (\cn{abs}(\cn{DNAdam}) \otimes \cn{pres}(\cn{Mdm2})) 
             \oplus (\cn{abs}(\cn{DNAdam}) \otimes \cn{abs}(\cn{Mdm2})) )
             ~\otimes~ \cn{dont\_care}(\cn{p53}) 
     $
\begin{quote}
  $\cn{not\_fireable}(1) \eqdef \cn{abs}(\cn{DNAdam}) ~\otimes~ 
      \cn{dont\_care}(\{\cn{Mdm2},\cn{p53}\}) \\
   \cn{not\_fireable}(2) \eqdef \cn{pres}(\cn{Mdm2})~\otimes~ 
      \cn{dont\_care}(\{\cn{p53},\cn{DNAdam}\}) \\
   \cn{not\_fireable}(3) \eqdef \cn{abs}(\cn{p53}) ~\otimes~ 
      \cn{dont\_care}(\{\cn{Mdm2},\cn{DNAdam}\}) \\
   \cn{not\_fireable}(4) \eqdef \cn{abs}(\cn{Mdm2}) ~\otimes~ 
      \cn{dont\_care}(\cn{\{p53,DNAdam\}}) \\
   \cn{not\_fireable}(5) \eqdef \cn{abs}(\cn{p53}) ~\otimes~ 
      \cn{dont\_care}(\{\cn{DNAdam},\cn{Mdm2}\}) \\
   \cn{not\_fireable}(6) \eqdef \cn{pres}(\cn{DNAdam}) ~\otimes~ 
      \cn{dont\_care}(\{\cn{Mdm2},\cn{p53}\}) 
$
\end{quote}
\end{itemize}

\newpage
\section{Example Proofs}
\label{appendix:example.proofs}
%
%

In most of the following proofs,
we omit the intuitionistic constant environment  $\mathop{\dag} system ~@~ 0$
(from which we can deduce $system ~@~ w$ at any state $w$).
We also omit trivial proofs (called ``hyp'') consisting in using hypothesis from the 
(intuitionistic or linear) context.

Note that several proofs implicitly use the $\cn{unchanged}$ predicate, 
which, for example, enable to pr`ove
$\cn{abs}(\cn{p53}) ~@~ w.1$ from $\cn{abs}(\cn{p53}) ~@~ w$ in the $P_1$ 
part of the proof of Property 1.  
%
We shall also sometimes need (and implicitly use) the $\cn{well\_defined}$ hypothesis 
$\forall a.~ \cn{pres}(a) \oplus \cn{abs}(a)$ 
(see the proof of Property 4).  


\subsection{Property 1}
\label{appendix:sec.property1}
As long as there is DNA damage, the above system can oscillate (with a
short period) from $state_0$ to $state_1$ and back again.  

From $state_0$ and $\cn{pres}(\cn{DNAdam})$ 
we get $\cn{abs}(\cn{p53})$, $\cn{abs}(\cn{Mdm2})$, and $\cn{pres}(\cn{DNAdam})$ by rule $1$.
Then $\cn{pres}(\cn{p53})$, $\cn{abs}(\cn{Mdm2})$, and $\cn{pres}(\cn{DNAdam})$ ($state_1$) by rule $2$.
Then $\cn{pres}(\cn{p53})$, $\cn{pres}(\cn{Mdm2})$, and $\cn{pres}(\cn{DNAdam})$ by rule $3$,
and finally $\cn{abs}(\cn{p53})$, $\cn{pres}(\cn{Mdm2})$, and $\cn{pres}(\cn{DNAdam})$ ($state_0$) by rule $4$.

\begin{hyllprop}[Property 1, Version 1]
\label{appendix:prop:property1.1}  
For any world $w$, 
there exists two worlds $u$ and $v$ such that both $u$ and $v$ are less than $3$ and 
the following holds: 
$$ \begin{array}{l}
   \mathop{\dag} system ~@~ 0 ~;~ state_0 ~\otimes~ \cn{pres}(\cn{DNAdam}) ~@~ w \\
   \quad \vdashseq  
      \delay{u}~ 
      [( state_1 ~\otimes~ \cn{dont\_care}(\cn{DNAdam}) ) ~\mathbin{\&}~ 
       ( \delay{v}~ (state_0 ~\otimes~ \cn{dont\_care}(\cn{DNAdam})) )] ~@~ w 
   \end{array}
$$
\end{hyllprop}
%
\begin{proof} 

Let $S_0$ denote $state_0 ~\otimes~ \cn{pres}(\cn{DNAdam})$ 
where $state_0$ is $\cn{abs}(\cn{p53}) \otimes \cn{pres}(\cn{Mdm2})$

in
$$\begin{array}{c}
\hspace{-0.5cm}
\dfrac
  {\dfrac
     {\dfrac
         {P_1}
         {S_0 ~@~ w \vdashseq state_1 ~\otimes~ \cn{dont\_care}(\cn{DNAdam})  ~@~ w.u}
     \qquad
     \dfrac
        {\dfrac
           {P_2}
           {S_0 ~@~ w \vdashseq state_0 ~\otimes~ \cn{dont\_care}(\cn{DNAdam})  ~@~ w.u.v}
        }
        {S_0 ~@~ w \vdashseq \delay{v}~ (state_0 ~\otimes~ \cn{dont\_care}(\cn{DNAdam}))  ~@~ w.u}
    }
     {S_0 ~@~ w \vdashseq 
        ( state_1 ~\otimes~ \cn{dont\_care}(\cn{DNAdam}) ) ~\mathbin{\&}~ 
        \delay{v}~ (state_0 ~\otimes~ \cn{dont\_care}(\cn{DNAdam})) ~@~ w.u } ~\mathbin{\&} R
  }
  {S_0 ~@~ w \vdashseq \delay{u}~ 
     [( state_1 ~\otimes~ \cn{dont\_care}(\cn{DNAdam}) ) ~\mathbin{\&}~ 
      ( \delay{v}~ (state_0 ~\otimes~ \cn{dont\_care}(\cn{DNAdam})) )] ~@~ w }
\end{array}$$
where $P_1$ is

let us remind that

 $\begin{array}{ll}
  \cn{rule}(1) & \eqdef \cn{inhib}(\cn{vars},\cn{DNAdam},\cn{Mdm2}) \\
         & \eqdef (\cn{pres}(\cn{DNAdam}) 
             \oplus (\cn{pres}(\cn{DNAdam})  \otimes \cn{pres}(\cn{Mdm2})) 
             \oplus (\cn{pres}(\cn{DNAdam})  \otimes \cn{abs}(\cn{Mdm2}))) \\
          & \quad ~ \limp \delay{1} (\cn{pres}(\cn{DNAdam}) \otimes \cn{abs}(\cn{Mdm2})) ~ \otimes \downarrow u.~ 
             \cn{unchanged}(\cn{p53},u),
  \end{array}$

let $H_1$ denote 
   $\cn{pres}(\cn{DNAdam}) 
      \oplus (\cn{pres}(\cn{DNAdam})  \otimes \cn{pres}(\cn{Mdm2})) 
      \oplus (\cn{pres}(\cn{DNAdam})  \otimes \cn{abs}(\cn{Mdm2}))$ (hyp. of $\cn{rule}(1)$),  

$C_1$ denote
   $\delay{1} (\cn{pres}(\cn{DNAdam}) \otimes \cn{abs}(\cn{Mdm2})) ~ 
      \otimes \downarrow u.~ \cn{unchanged}(\cn{p53},u)$ (conclusion of $\cn{rule}(1)$),

and 
$R_1$ denote $state_1 ~\otimes~ \cn{dont\_care}(\cn{DNAdam})$
where $state_1$ is $\cn{pres}(\cn{p53})$, $\cn{abs}(\cn{Mdm2})$

in
$$\begin{array}{c}
\dfrac
  {\dfrac
      { ... \vdashseq \cn{pres}(\cn{DNAdam}) \otimes \cn{pres}(\cn{Mdm2}) ~@~ w}
      {\cn{pres}(\cn{Mdm2}) \otimes \cn{pres}(\cn{DNAdam}) ~@~ w \vdashseq H_1 ~@~ w} ~ \oplus R_1
   \qquad
   \dfrac
      {P_{11}}
      {\cn{abs}(\cn{p53})  ~@~w,~ C_1 ~@~ w \vdashseq R_1 ~@~ w.u}
 }
  {\cn{abs}(\cn{p53}) \otimes \cn{pres}(\cn{Mdm2}) \otimes \cn{pres}(\cn{DNAdam}) ~@~ w, \cn{rule}(1) ~@~ w 
     \vdashseq state_1 ~\otimes~ \cn{dont\_care}(\cn{DNAdam}) ~@~ w.u} ~\limp L
\end{array}$$
with $P_{11}$ 
$$\begin{array}{c}
\hspace{-0.5cm}
\dfrac
  {\dfrac
      { P_{12} }
      {\cn{abs}(\cn{p53}) \otimes \cn{abs}(\cn{Mdm2}) \otimes \cn{pres}(\cn{DNAdam}) ~@~ w.1  \vdashseq R_1 ~@~ w.u}
  }
  {\cn{abs}(\cn{p53}) ~@~w,~ \cn{pres}(\cn{DNAdam}) \otimes \cn{abs}(\cn{Mdm2}) ~@~ w.1,~  \cn{unchanged}(\cn{p53},w) ~@~ w 
      \vdashseq R_1 ~@~ w.u} 
\end{array}$$

\newpage

$P_{12}$: 

let us remind that

 $\begin{array}{ll}
      \cn{rule}(2) & \eqdef \cn{inhib_s}(\cn{vars},\cn{Mdm2},\cn{p53}) \\
         & \eqdef (\cn{abs}(\cn{Mdm2}) 
             \oplus (\cn{abs}(\cn{Mdm2}) \otimes \cn{pres}(\cn{p53})) 
             \oplus (\cn{abs}(\cn{Mdm2}) \otimes \cn{abs}(\cn{p53}))) \\
         & \quad ~   \limp \delay{1} (\cn{abs}(\cn{Mdm2}) \otimes \cn{pres}(\cn{p53})) ~ \otimes \downarrow u.~ 
             \cn{unchanged}(\cn{DNAdam},u)),
  \end{array}$

let $H_2$ denote 
   $\cn{abs}(\cn{Mdm2}) 
             \oplus (\cn{abs}(\cn{Mdm2}) \otimes \cn{pres}(\cn{p53})) 
             \oplus (\cn{abs}(\cn{Mdm2}) \otimes \cn{abs}(\cn{p53}))$ (hypothesis of $\cn{rule}(2)$),

and $C_2$ denote
   $\delay{1} (\cn{abs}(\cn{Mdm2}) \otimes \cn{pres}(\cn{p53})) ~ \otimes \downarrow u.~ 
             \cn{unchanged}(\cn{DNAdam},u))$ (conclusion of $\cn{rule}(2)$),

in
$$\begin{array}{c}
\dfrac
  {\dfrac
      { ... \vdashseq \cn{abs}(\cn{p53}) \otimes \cn{abs}(\cn{Mdm2}) ~@~ w.1} 
      {\cn{abs}(\cn{p53}) \otimes \cn{abs}(\cn{Mdm2}) ~@~ w.1 \vdashseq H_2 ~@~ w.1}
   \quad
   P_{12c}
  }
  {\cn{abs}(\cn{p53}) \otimes \cn{abs}(\cn{Mdm2}) \otimes \cn{pres}(\cn{DNAdam}) ~@~ w.1,~ \cn{rule}(2) ~@~ w.1  
    \vdashseq R_1 ~@~ w.u} ~\limp L
\end{array}$$

with $P_{12c}$:
$$\begin{array}{c}
   \dfrac
      {\dfrac
         { \dfrac
           { state_1 ~@~ w.2 \vdashseq state_1 ~@~ w.2
             \quad
            \cn{pres}(\cn{DNAdam})  ~@~ w.2 \vdashseq \cn{dont\_care}(\cn{DNAdam}) ~@~ w.2 ~[u=2]
           }
           { \cn{abs}(\cn{Mdm2}) \otimes \cn{pres}(\cn{p53}) \otimes \cn{pres}(\cn{DNAdam})  ~@~ w.2 \vdashseq R_1 ~@~ w.u  } ~ \otimes R
         } 
        {\cn{abs}(\cn{Mdm2}) \otimes \cn{pres}(\cn{p53}) ~@~ w.2,~  
          \cn{pres}(\cn{DNAdam}) \otimes \cn{unchanged}(\cn{DNAdam},w.1) ~@~ w.1 
            \vdashseq R_1 ~@~ w.u }
      }
      {\cn{pres}(\cn{DNAdam}) \otimes C_2 ~@~ w.1 \vdashseq R_1 ~@~ w.u} 
\end{array}$$
%

and $P_2$ is

($P_2$ is $P_1$ where $R_1 ~@~ w.u$ is replaced by $R_2 ~@~ w.u.v$ and $P_{12c}$ is replaced by $P_{23}$.)

let $H_1$ denote 
   $\cn{pres}(\cn{DNAdam}) 
      \oplus (\cn{pres}(\cn{DNAdam})  \otimes \cn{pres}(\cn{Mdm2})) 
      \oplus (\cn{pres}(\cn{DNAdam})  \otimes \cn{abs}(\cn{Mdm2}))$ (hyp. of $\cn{rule}(1)$),  

$C_1$ denote
   $\delay{1} (\cn{pres}(\cn{DNAdam}) \otimes \cn{abs}(\cn{Mdm2})) ~ 
      \otimes \downarrow u.~ \cn{unchanged}(\cn{p53},u)$ (conclusion of $\cn{rule}(1)$),

and 
$R_2$ denote $state_0 ~\otimes~ \cn{dont\_care}(\cn{DNAdam})$
where $state_p$ is $\cn{abs}(\cn{p53}) \otimes \cn{pres}(\cn{Mdm2})$

in

$$\begin{array}{c}
\dfrac
  {\dfrac
      { ... \vdashseq \cn{pres}(\cn{DNAdam}) \otimes \cn{pres}(\cn{Mdm2}) ~@~ w}
      {\cn{pres}(\cn{Mdm2}) \otimes \cn{pres}(\cn{DNAdam}) ~@~ w \vdashseq H_1 ~@~ w} ~ \oplus R_1
   \qquad
   \dfrac
      {P_{21}}
      {\cn{abs}(\cn{p53})  ~@~w,~ C_1 ~@~ w \vdashseq R_2 ~@~ w.u.v}
 }
  {\cn{abs}(\cn{p53}) \otimes \cn{pres}(\cn{Mdm2}) \otimes \cn{pres}(\cn{DNAdam}) ~@~ w, \cn{rule}(1) ~@~ w 
     \vdashseq state_0 ~\otimes~ \cn{dont\_care}(\cn{DNAdam}) ~@~ w.u} ~\limp L
\end{array}$$
with $P_{21}$ 
$$\begin{array}{c}
\hspace{-0.5cm}
\dfrac
  {\dfrac
      { P_{22} }
      {\cn{abs}(\cn{p53}) \otimes \cn{abs}(\cn{Mdm2}) \otimes \cn{pres}(\cn{DNAdam}) ~@~ w.1  \vdashseq R_2 ~@~ w.u.v}
  }
  {\cn{abs}(\cn{p53}) ~@~w,~ \cn{pres}(\cn{DNAdam}) \otimes \cn{abs}(\cn{Mdm2}) ~@~ w.1,~  \cn{unchanged}(\cn{p53},w) ~@~ w 
      \vdashseq R_2 ~@~ w.u.v} 
\end{array}$$
$P_{22}$: 

let $H_2$ denote 
   $(\cn{abs}(\cn{Mdm2}) 
             \oplus (\cn{abs}(\cn{Mdm2}) \otimes \cn{pres}(\cn{p53})) 
             \oplus (\cn{abs}(\cn{Mdm2}) \otimes \cn{abs}(\cn{p53})))$ (hypothesis of $\cn{rule}(2)$),

and $C_2$ denote
   $\delay{1} (\cn{abs}(\cn{Mdm2}) \otimes \cn{pres}(\cn{p53})) ~ \otimes \downarrow u.~ 
             \cn{unchanged}(\cn{DNAdam},u))$ (conclusion of $\cn{rule}(2)$),

in
$$\begin{array}{c}
\dfrac
  {\dfrac
      { ... \vdashseq \cn{abs}(\cn{p53}) \otimes \cn{abs}(\cn{Mdm2}) ~@~ w.1} 
      {\cn{abs}(\cn{p53}) \otimes \cn{abs}(\cn{Mdm2}) ~@~ w.1 \vdashseq H_2 ~@~ w.1}
   \quad
   P_{22c}
  }
  {\cn{abs}(\cn{p53}) \otimes \cn{abs}(\cn{Mdm2}) \otimes \cn{pres}(\cn{DNAdam}) ~@~ w.1,~ \cn{rule}(2) ~@~ w.1  
    \vdashseq R_2 ~@~ w.u.v} ~\limp L
\end{array}$$

with $P_{22c}$:
$$\begin{array}{c}
   \dfrac
      {\dfrac
         { \dfrac
           { P_{23}
           }
           {\cn{abs}(\cn{Mdm2}) \otimes \cn{pres}(\cn{p53}) \otimes \cn{pres}(\cn{DNAdam})  ~@~ w.2 \vdashseq R_2 ~@~ w.u.v} 
         } 
        {\cn{abs}(\cn{Mdm2}) \otimes \cn{pres}(\cn{p53}) ~@~ w.2,~  
          \cn{pres}(\cn{DNAdam}) \otimes \cn{unchanged}(\cn{DNAdam},w.1) ~@~ w.1 
            \vdashseq R_2 ~@~ w.u.v}
      }
      {\cn{pres}(\cn{DNAdam}) \otimes C_2 ~@~ w.1 \vdashseq R_2 ~@~ w.u} 
\end{array}$$

\newpage

$P_{23}$:

let us remind that

  $\begin{array}{ll}
     \cn{rule}(3) & \eqdef \cn{active}(\cn{vars},\cn{p53},\cn{Mdm2}) \\
        & \eqdef (\cn{pres}(\cn{p53}) 
            \oplus (\cn{pres}(\cn{p53}) \otimes \cn{pres}(\cn{Mdm2})) 
            \oplus (\cn{pres}(\cn{p53}) \otimes \cn{abs}(\cn{Mdm2}))) \\
        & \quad ~   \limp \delay{1} (\cn{pres}(\cn{p53}) \otimes \cn{pres}(\cn{Mdm2})) ~ \otimes \downarrow u.~ 
           \cn{unchanged}(\cn{DNAdam},u)),
  \end{array}$

let $H_3$ denote $\cn{pres}(\cn{p53}) 
            \oplus (\cn{pres}(\cn{p53}) \otimes \cn{pres}(\cn{Mdm2})) 
            \oplus (\cn{pres}(\cn{p53}) \otimes \cn{abs}(\cn{Mdm2}))$ (hyp. of $\cn{rule}(3)$),

and $C_3$ denote 
   $\delay{1} (\cn{pres}(\cn{p53}) \otimes \cn{pres}(\cn{Mdm2})) ~ \otimes 
    \downarrow u.~ \cn{unchanged}(\cn{DNAdam},u)) $ (conclusion of $\cn{rule}(3)$) 

in
$$\begin{array}{c}
\hspace{-0.5cm}
\dfrac
  {\dfrac
      { ... \vdashseq \cn{pres}(\cn{p53}) \otimes \cn{abs}(\cn{Mdm2})  ~@~ w.2 } 
      { \cn{abs}(\cn{Mdm2}) \otimes \cn{pres}(\cn{p53}) ~@~ w.2 \vdashseq H_3 ~@~ w.2}
     \qquad
   \dfrac
      {\dfrac
        {P_{24}}
        { \cn{pres}(\cn{p53}) \otimes \cn{pres}(\cn{Mdm2}) 
         \otimes  \cn{pres}(\cn{DNAdam}) ~@~ w.3 \vdashseq R_2 ~@~ w.u.v}
      }
      { \cn{pres}(\cn{DNAdam})  ~@~ w.2,~ C_3 ~@~ w.2 \vdashseq R_2 ~@~ w.u.v}
   }
  {\cn{abs}(\cn{Mdm2}) \otimes \cn{pres}(\cn{p53}) \otimes \cn{pres}(\cn{DNAdam})  ~@~ w.2,~ \cn{rule}(3) ~@~ w.2
   \vdashseq R_2 ~@~ w.u.v} 
\end{array}$$

$P_{24}$:

let us remind that

$\begin{array}{ll}
      \cn{rule}(4) & \eqdef \cn{inhib}(\cn{vars},\cn{Mdm2},\cn{p53}) \\
        & \eqdef (\cn{pres}(\cn{Mdm2})  
            \oplus (\cn{pres}(\cn{Mdm2}) \otimes \cn{pres}(\cn{p53})) 
            \oplus (\cn{pres}(\cn{Mdm2}) \otimes \cn{abs}(\cn{p53}))) \\
         & \limp \delay{1} (\cn{pres}(\cn{Mdm2}) \otimes \cn{abs}(\cn{p53})) ~ \otimes \downarrow u.~ 
             \cn{unchanged}(\cn{DNAdam},u))
\end{array}$,

and $R_2$ is $state_0 ~\otimes~ \cn{dont\_care}(\cn{DNAdam})$
where $state_p$ is $\cn{abs}(\cn{p53}) \otimes \cn{pres}(\cn{Mdm2})$,

let $H_4$ denote $ \cn{pres}(\cn{Mdm2})  
            \oplus (\cn{pres}(\cn{Mdm2}) \otimes \cn{pres}(\cn{p53})) 
            \oplus (\cn{pres}(\cn{Mdm2}) \otimes \cn{abs}(\cn{p53})) $ (hyp. of $\cn{rule}(4)$),

and $C_4$ denote 
   $\delay{1} (\cn{pres}(\cn{Mdm2}) \otimes \cn{abs}(\cn{p53})) ~ \otimes 
    \downarrow u.~ \cn{unchanged}(\cn{DNAdam},u)) $ (conclusion of $\cn{rule}(4)$) 

in
$$\begin{array}{c}
\hspace{-1.0cm}
\dfrac
  {\dfrac
      { ... \vdashseq \cn{pres}(\cn{Mdm2}) \otimes \cn{pres}(\cn{p53})  ~@~ w.3 } 
      { \cn{pres}(\cn{p53}) \otimes \cn{pres}(\cn{Mdm2}) ~@~ w.3 \vdashseq H_4 ~@~ w.3}
     \quad
   \dfrac
      {\cn{pres}(\cn{Mdm2}) \otimes \cn{abs}(\cn{p53}) 
         \otimes  \cn{pres}(\cn{DNAdam}) ~@~ w.4 \vdashseq R_2 ~@~ w.u.v ~ [v=2]}
      { \cn{pres}(\cn{DNAdam}) ~@~ w.3,~ C_4 ~@~ w.3 \vdashseq R_2 ~@~ w.u.v}
   }
  {\cn{pres}(\cn{p53}) \otimes \cn{pres}(\cn{Mdm2}) \otimes \cn{pres}(\cn{DNAdam}) ~@~ w.3,~ \cn{rule}(4) ~@~ w.3
   \vdashseq R_2 ~@~ w.u.v ~~ [u=2; v=2]} 
\end{array}$$
\end{proof}

\begin{hyllprop}[Property 1, Version 2]
\label{appendix:prop:property1.2}  
For any world $w$, 
there exists two worlds $u$ and $v$ such that both $u$ and $v$ are less than $3$ and the following holds: 
$$
\begin{array}{l}
   \mathop{\dag} system ~@~ 0 ~;~ state_0 ~\otimes~ \cn{pres}(\cn{DNAdam})~@~ w 
      \vdashseq  state_1 ~\otimes~ \cn{dont\_care}(\cn{DNAdam}) ~@~ w.u ~~(1)
~~\textrm{and}~ \\
   \mathop{\dag} system ~@~ 0 ~;~ state_1 ~@~ w.u
      \vdashseq state_0 ~@~ w.u.v ~~(2)
\end{array}
$$
\end{hyllprop}
%

Note.
There are no $\cn{dont\_care}$'s needed in the conclusion of the second sequent 
because only rules 3 and 4 are used, which don't involve DNAdam.

\paragraph{}
Let us first prove the first statement (1):

\begin{proof} 

%
%
$$\begin{array}{c}
\dfrac
   {P_1}
   {state_0 ~\otimes~ \cn{pres}(\cn{DNAdam}) ~@~ w \vdashseq state_1 ~\otimes~ \cn{dont\_care}(\cn{DNAdam})  ~@~ w.u ~~[u=2]}
\end{array}$$
where $P_1$ is the same proof $P_1$ used in the proof of Property 1, Version 1, above.
\end{proof} 

Let us prove the second statement (2):

\begin{proof} 

let us remind that

  $\begin{array}{ll}
     \cn{rule}(3) & \eqdef \cn{active}(\cn{vars},\cn{p53},\cn{Mdm2}) \\
        & \eqdef (\cn{pres}(\cn{p53}) 
            \oplus (\cn{pres}(\cn{p53}) \otimes \cn{pres}(\cn{Mdm2})) 
            \oplus (\cn{pres}(\cn{p53}) \otimes \cn{abs}(\cn{Mdm2}))) \\
        & \quad ~   \limp \delay{1} (\cn{pres}(\cn{p53}) \otimes \cn{pres}(\cn{Mdm2})) ~ \otimes \downarrow u.~ 
           \cn{unchanged}(\cn{DNAdam},u)),
  \end{array}$

let $H_3$ denote $\cn{pres}(\cn{p53}) 
            \oplus (\cn{pres}(\cn{p53}) \otimes \cn{pres}(\cn{Mdm2})) 
            \oplus (\cn{pres}(\cn{p53}) \otimes \cn{abs}(\cn{Mdm2}))$ (hyp. of $\cn{rule}(3)$),

and $C_3$ denote 
   $\delay{1} (\cn{pres}(\cn{p53}) \otimes \cn{pres}(\cn{Mdm2})) ~ \otimes 
    \downarrow u.~ \cn{unchanged}(\cn{DNAdam},u)) $ (conclusion of $\cn{rule}(3)$) 

in
$$\begin{array}{c}
\dfrac
   {\dfrac
  {\dfrac
      { ... \vdashseq \cn{pres}(\cn{p53}) \otimes \cn{abs}(\cn{Mdm2})  ~@~ w.2 } 
      { \cn{abs}(\cn{Mdm2}) \otimes \cn{pres}(\cn{p53}) ~@~ w.2 \vdashseq H_3 ~@~ w.2}
     \qquad
   \dfrac
      {\dfrac
        {P_{24}}
        { \cn{pres}(\cn{p53}) \otimes \cn{pres}(\cn{Mdm2}) ~@~ w.3 
          \vdashseq state_0 ~@~ w.2.v}
      }
      {C_3 ~@~ w.2 \vdashseq state_0 ~@~ w.2.v}
   }
  {\cn{abs}(\cn{Mdm2}) \otimes \cn{pres}(\cn{p53}) ~@~ w.2,~ \cn{rule}(3) ~@~ w.2
   \vdashseq state_0 ~@~ w.2.v} 
   }
   {state_1 ~@~ w.2 \vdashseq state_0 ~@~ w.2.v ~~[v=2]}
\end{array}$$

$P_{24}$:

let us remind that

$\begin{array}{ll}
      \cn{rule}(4) & \eqdef \cn{inhib}(\cn{vars},\cn{Mdm2},\cn{p53}) \\
        & \eqdef (\cn{pres}(\cn{Mdm2})  
            \oplus (\cn{pres}(\cn{Mdm2}) \otimes \cn{pres}(\cn{p53})) 
            \oplus (\cn{pres}(\cn{Mdm2}) \otimes \cn{abs}(\cn{p53}))) \\
         & \limp \delay{1} (\cn{pres}(\cn{Mdm2}) \otimes \cn{abs}(\cn{p53})) ~ \otimes \downarrow u.~ 
             \cn{unchanged}(\cn{DNAdam},u)),
\end{array}$

let $H_4$ denote 
   $\cn{pres}(\cn{Mdm2})  
    \oplus (\cn{pres}(\cn{Mdm2}) \otimes \cn{pres}(\cn{p53})) 
    \oplus (\cn{pres}(\cn{Mdm2}) \otimes \cn{abs}(\cn{p53})) $ (hyp. of $\cn{rule}(4)$),

and $C_4$ denote 
   $\delay{1} (\cn{pres}(\cn{Mdm2}) \otimes \cn{abs}(\cn{p53})) ~ \otimes 
    \downarrow u.~ \cn{unchanged}(\cn{DNAdam},u)) $ (conclusion of $\cn{rule}(4)$) 

in
$$\begin{array}{c}
\dfrac
  {\dfrac
      { ... \vdashseq \cn{pres}(\cn{Mdm2}) \otimes \cn{pres}(\cn{p53})  ~@~ w.3 } 
      { \cn{pres}(\cn{p53}) \otimes \cn{pres}(\cn{Mdm2}) ~@~ w.3 \vdashseq H_4 ~@~ w.3}
     \quad
   \dfrac
      {\cn{pres}(\cn{Mdm2}) \otimes \cn{abs}(\cn{p53}) ~@~ w.4 
       \vdashseq state_0 ~@~ w.2.v ~ [v=2]}
      {C_4 ~@~ w.3 \vdashseq state_0 ~@~ w.2.v}
   }
  {\cn{pres}(\cn{p53}) \otimes \cn{pres}(\cn{Mdm2}) ~@~ w.3,~ \cn{rule}(4) ~@~ w.3
   \vdashseq state_0 ~@~ w.2.v ~~ [v=2]} 
\end{array}$$
\end{proof}

\subsection{Property 2}
\label{appendix:sec.property2}

DNA damage can be quickly recovered.

From $state_0$ and $\cn{pres}(\cn{DNAdam})$ 
we get $\cn{abs}(\cn{p53})$, $\cn{abs}(\cn{Mdm2})$, and $\cn{pres}(\cn{DNAdam})$ by rule $1$.
Then $\cn{pres}(\cn{p53})$ and $\cn{abs}(\cn{Mdm2})$ ($state_1$) and $\cn{pres}(\cn{DNAdam})$ by rule $2$.
Then $\cn{abs}(\cn{p53})$, $\cn{abs}(\cn{Mdm2})$, and $\cn{abs}(\cn{DNAdam})$ by rule $5$,
and finally $\cn{abs}(\cn{p53})$ and $\cn{pres}(\cn{Mdm2})$ ($state_0$) and $\cn{abs}(\cn{DNAdam})$ 
by rule $6$.

\begin{proposition}[Property 2]
\label{appendix:prop.property2}  
For any world $w$, 
there exists a world $u$ such that $u$ is less than $5$ and the following holds:
$$
\mathop{\dag} system ~@~ 0;~ 
    state_0 ~\otimes~ \cn{pres}(\cn{DNAdam}) ~@~ w
    \vdashseq state_0 \otimes \cn{abs}(\cn{DNAdam}) ~@~ w.u
$$
\end{proposition}

\begin{proof}

Let $R$ denote $state_0 \otimes \cn{abs}(\cn{DNAdam}) ~@~ w.4$
where $state_0$ is $\cn{abs}(\cn{p53}) \otimes \cn{pres}(\cn{Mdm2})$
in
$$\begin{array}{c}
\hspace{-0.3cm}
\dfrac
   {\dfrac
         {P_1 ~~
          \dfrac
            {\dfrac
                {P_2 ~~
                 \dfrac
                    {\dfrac
                       {P_5 ~~ 
                        \dfrac
                          {\dfrac
                              {P_6 ~~
                               \dfrac
                                 {state_0 \otimes \cn{abs}(\cn{DNAdam}) ~@~ w.4
                                  \vdashseq  state_0 \otimes \cn{abs}(\cn{DNAdam}) ~@~ w.4 ~~ [hyp]}
                                 {\cn{abs}(\cn{p53}) ~@~ w.3, ~
                                   \cn{pres}(\cn{Mdm2}) \otimes  \cn{abs}(\cn{DNAdam}) ~@~ w.4,~
                                   \cn{unchanged}(\cn{p53},w.3) ~@~ w.3 \vdashseq R }
                               }
                               {\cn{abs}(\cn{p53}) \otimes \cn{abs}(\cn{Mdm2}) \otimes \cn{abs}(\cn{DNAdam}) ~@~ w.3, 
                                \cn{rule}(6) ~@~ w.3 
                                \vdashseq R } 
                          }
                          {\cn{abs}(\cn{Mdm2}) ~@~ w.2,~
                           \cn{abs}(\cn{p53}) \otimes \cn{abs}(\cn{DNAdam}) ~@~ w.3,~
                           \cn{unchanged}(\cn{Mdm2},w.2) ~@~ w.2 \vdashseq R}
                       }
                       {\cn{pres}(\cn{p53}) \otimes \cn{abs}(\cn{Mdm2}) \otimes \cn{pres}(\cn{DNAdam})  ~@~ w.2,~ 
                       \cn{rule}(5) ~@~ w.2 \vdashseq R} 
                   }
                   {\cn{pres}(\cn{DNAdam})  ~@~ w.1, ~
                     \cn{pres}(\cn{p53}) \otimes \cn{abs}(\cn{Mdm2}) ~@~ w.2,~
                     \cn{unchanged}(\cn{DNAdam},w.1)  ~@~ w.1 \vdashseq R}
                }
                {\cn{abs}(\cn{p53}) \otimes \cn{abs}(\cn{Mdm2}) \otimes \cn{pres}(\cn{DNAdam}) ~@~ w.1,~ 
                 \cn{rule}(2) ~@~ w.1 \vdashseq R} 
            }
            {\cn{abs}(\cn{p53})  ~@~ w,~
             \cn{abs}(\cn{Mdm2}) \otimes \cn{pres}(\cn{DNAdam}) ~@~ w.1,~
              \cn{unchanged}(\cn{p53},w) ~@~ w \vdashseq R}
        }
         {state_0 ~\otimes~ \cn{pres}(\cn{DNAdam}) ~@~ w,~ \cn{rule}(1) ~@~ w
             \vdashseq R  
         }  \limp L 
   }
   {
    state_0 ~\otimes~ \cn{pres}(\cn{DNAdam}) ~@~ w
    \vdashseq state_0 \otimes \cn{abs}(\cn{DNAdam}) ~@~ w.4} 
\end{array}$$
where $P_1$ is
$$ \cn{pres}(\cn{Mdm2}) ~@~ w,~ \cn{pres}(\cn{DNAdam}) ~@~ w 
     \vdashseq  \cdots \oplus (\cn{pres}(\cn{Mdm2}) \otimes \cn{pres}(\cn{DNAdam})) \oplus \cdots  ~@~ w,
$$
$P_2$ is
$$ \cn{abs}(\cn{p53}) ~@~ w.1,~ \cn{abs}(\cn{Mdm2}) ~@~ w.1 
     \vdashseq  (\cn{abs}(\cn{p53}) \otimes \cn{abs}(\cn{Mdm2})) \oplus \cdots \oplus \cdots ~@~ w.1,
$$
$P_5$ is
$$ \cn{pres}(\cn{p53}) ~@~ w.2,~ \cn{pres}(\cn{DNAdam}) ~@~ w.2 
     \vdashseq  \cdots \oplus (\cn{pres}(\cn{p53}) \otimes \cn{pres}(\cn{DNAdam})) \oplus \cdots  ~@~ w.2,
$$
and $P_6$ is
$$ \cn{abs}(\cn{Mdm2}) ~@~ w.3,~ \cn{abs}(\cn{DNAdam}) ~@~ w.3 
     \vdashseq  (\cn{abs}(\cn{Mdm2}) \otimes \cn{abs}(\cn{DNAdam})) \oplus \cdots \oplus \cdots ~@~ w.3
$$
\end{proof}



\subsection{Property 3}
\label{appendix:sec.property3}
If there is no DNA damage, the system remains in the initial state.

\begin{hyllprop}[Property 3]
\label{appendix:prop:property3}
Let $\mathcal{P}$ denote the formula $state_0  ~\otimes~ \cn{abs}(\cn{DNAdam})$.
For any world $w$, the following holds:
$$\mathop{\dag} system ~@~ 0;~ \mathcal{P} ~@~ 0~  
  \vdashseq \mathcal{P} ~\at~ 0  ~@~ w;$$
and for any world $w$, for any rule $r$ in the interval $[1..6]$, 
the following holds: \\
$$\begin{array}{l}
\mathop{\dag} system ~@~ 0;~ . \vdashseq 
   \mathcal{P} \limp 
   (\cn{fireable}(r) \mathbin{\&} \delay{1} \mathcal{P})  ~\oplus~
     \cn{not\_fireable}(r)  ~@~ w
\end{array}$$
%
\end{hyllprop}

%
The proof of the second statement proceeds by case analysis on the rules ($r$) of the biological system.
There are only two fireable rules: $\cn{rule}(4)$ and $\cn{rule}(6)$.
 
%
\begin{proof}

Let 
$\mathcal{P}$ denote 
$state_0  ~\otimes~ \cn{abs}(\cn{DNAdam})$
where $state_0$ is $\cn{abs}(\cn{p53}) \otimes \cn{pres}(\cn{Mdm2})$

in

$$\begin{array}{c}
  \dfrac  
      {\dfrac
         {P_1 \quad P_2 \quad P_3 \quad P_4 \quad P_5 \quad P_6}  
         {\mathcal{P}  ~@~ w \vdashseq  \forall r: [1..6].~  
          (\cn{fireable}(r) \mathbin{\&} \delay{1} \mathcal{P})  ~\oplus~ \cn{not\_fireable}(r)  ~@~ w}~ case_r
      }
      { \vdashseq \mathcal{P} \limp \forall r: [1..6].~  
        (\cn{fireable}(r) \mathbin{\&} \delay{1} \mathcal{P})  ~\oplus~ \cn{not\_fireable}(r)  ~@~ w} ~\limp R
\end{array}$$

where $P_1$ is
\ednote[]{$
\cn{fireable}(1) \eqdef ( \cn{pres}(\cn{DNAdam}) 
             \oplus (\cn{pres}(\cn{DNAdam})  \otimes \cn{pres}(\cn{Mdm2}))
             \oplus (\cn{pres}(\cn{DNAdam})  \otimes \cn{abs}(\cn{Mdm2}))) 
             ~\otimes~ \cn{dont\_care}(\cn{p53}) \\
\cn{not\_fireable}(1) \eqdef \cn{abs}(\cn{DNAdam}) ~\otimes~ 
      \cn{dont\_care}(\{\cn{Mdm2},\cn{p53}\}) \\
$}
$$\begin{array}{c}
\hspace{-0.5cm}
\dfrac
   {\dfrac
      {\cn{abs}(\cn{DNAdam}) ~@~ w \vdashseq \cn{abs}(\cn{DNAdam}) ~@~ w 
       \qquad
       \cn{abs}(\cn{p53}) \otimes \cn{pres}(\cn{Mdm2}) ~@~ w \vdashseq \cn{dont\_care}(\{\cn{Mdm2},\cn{p53}\}) ~@~ w 
     }
      {\mathcal{P}  ~@~ w \vdashseq \cn{not\_fireable}(1)  ~@~ w}~ \oplus R
   }
   {\mathcal{P}  ~@~ w \vdashseq (\cn{fireable}(1) \mathbin{\&} \delay{1} \mathcal{P})  ~\oplus~ \cn{not\_fireable}(1)  ~@~ w}
\end{array}$$

$P_2$ is
\ednote[]{$
     \cn{fireable}(2) \eqdef  ( \cn{abs}(\cn{Mdm2}) 
             \oplus (\cn{abs}(\cn{Mdm2}) \otimes \cn{pres}(\cn{p53})) 
             \oplus (\cn{abs}(\cn{Mdm2}) \otimes \cn{abs}(\cn{p53})) )
             ~\otimes~ \cn{dont\_care}(\cn{DNAdam}) \\
   \cn{not\_fireable}(2) \eqdef \cn{pres}(\cn{Mdm2}) ~\otimes~ 
      \cn{dont\_care}(\{\cn{p53},\cn{DNAdam}\}) \\
$}
%
{\small
$$\begin{array}{c}
\hspace{-0.7cm}
\dfrac
   {\dfrac
      {\cn{abs}(\cn{DNAdam}) ~@~ w \vdashseq \cn{dont\_care}(\cn{DNAdam}) ~@~ w 
       \qquad
       \cn{abs}(\cn{p53}) \otimes \cn{pres}(\cn{Mdm2}) ~@~ w \vdashseq \cn{pres}(\cn{Mdm2}) ~\otimes~ \cn{dont\_care}(\cn{p53}) ~@~ w 
     }
     {\mathcal{P}  ~@~ w \vdashseq \cn{not\_fireable}(2)  ~@~ w}~ \oplus R
   }
   {\mathcal{P}  ~@~ w \vdashseq  (\cn{fireable}(2) \mathbin{\&} \delay{1} \mathcal{P})  ~\oplus~ \cn{not\_fireable}(2)  ~@~ w}
\end{array}$$
}

$P_3$ is
\ednote[]{$
     \cn{fireable}(3) \eqdef  ( \cn{pres}(\cn{p53}) 
            \oplus (\cn{pres}(\cn{p53}) \otimes \cn{pres}(\cn{Mdm2})) 
            \oplus (\cn{pres}(\cn{p53}) \otimes \cn{abs}(\cn{Mdm2})) ) 
             ~\otimes~ \cn{dont\_care}(\cn{DNAdam}) \\
   \cn{not\_fireable}(3) \eqdef \cn{abs}(\cn{p53}) ~\otimes~ 
      \cn{dont\_care}(\{\cn{Mdm2},\cn{DNAdam}\}) \\
$}
%
{\small
$$\begin{array}{c}
\hspace{-0.5cm}
\dfrac
   {\dfrac
      {\cn{abs}(\cn{DNAdam}) ~@~ w \vdashseq \cn{dont\_care}(\cn{DNAdam}) ~@~ w 
       \qquad
       \cn{abs}(\cn{p53}) \otimes \cn{pres}(\cn{Mdm2}) ~@~ w \vdashseq \cn{abs}(\cn{p53}) ~\otimes~ \cn{dont\_care}(\cn{Mdm2}) ~@~ w 
     }
     {\mathcal{P}  ~@~ w \vdashseq \cn{not\_fireable}(3)  ~@~ w}~ \oplus R
   }
   {\mathcal{P}  ~@~ w \vdashseq  (\cn{fireable}(3) \mathbin{\&} \delay{1} \mathcal{P})  ~\oplus~ \cn{not\_fireable}(3)  ~@~ w}
\end{array}$$
}

$P_4$ is
\ednote[]{
$\begin{array}{ll}
     \cn{rule}(4) 
         \eqdef & (\cn{pres}(\cn{Mdm2})  
            \oplus (\cn{pres}(\cn{Mdm2}) \otimes \cn{pres}(\cn{p53})) 
            \oplus (\cn{pres}(\cn{Mdm2}) \otimes \cn{abs}(\cn{p53}))) \\
         & \limp \delay{1} (\cn{pres}(\cn{Mdm2}) \otimes \cn{abs}(\cn{p53})) ~ \otimes \downarrow u.~ 
             \cn{unchanged}(\cn{DNAdam},u)) \\
\end{array}$
}
\ednote[]{$\cn{fireable}(4) \eqdef  ( \cn{pres}(\cn{Mdm2})  
            \oplus (\cn{pres}(\cn{Mdm2}) \otimes \cn{pres}(\cn{p53})) 
            \oplus (\cn{pres}(\cn{Mdm2}) \otimes \cn{abs}(\cn{p53}))) 
            \otimes \cn{dont\_care}(\cn{DNAdam}) \\
     \cn{not\_fireable}(4) \eqdef \cn{abs}(\cn{Mdm2}) \otimes \cn{dont\_care}(\cn{\{p53,DNAdam\}})$
}
$$\begin{array}{c}
\hspace{-0.5cm}
\dfrac
   {\dfrac
      {P_{4f} \qquad P_{4r}}
      {\cn{abs}(\cn{p53}) \otimes \cn{pres}(\cn{Mdm2}) \otimes \cn{abs}(\cn{DNAdam})  ~@~ w  
       \vdashseq \cn{fireable}(4) \mathbin{\&} \delay{1} \mathcal{P} ~@~ w
      } \mathbin{\&} R
   }
   {\mathcal{P}  ~@~ w 
    \vdashseq ~(\cn{fireable}(4) \mathbin{\&} \delay{1} \mathcal{P}) \oplus \cn{not\_fireable}(4) ~@~ w} \oplus R_1
\end{array}$$
$P_{4f}$ :
{\small
$$\begin{array}{c}
\hspace{-0.4cm}
\dfrac
   {\cn{abs}(\cn{p53}) ~\otimes~ \cn{pres}(\cn{Mdm2}) ~@~ w   
    \vdashseq \cn{pres}(\cn{Mdm2}) ~\otimes~ \cn{abs}(\cn{p53}) ~@~ w  
    \qquad 
    \cn{abs}(\cn{DNAdam}) ~@~ n \vdashseq \cn{dont\_care}(\cn{DNAdam}) ~@~ w}
   {\cn{abs}(\cn{p53}) \otimes \cn{pres}(\cn{Mdm2}) \otimes \cn{abs}(\cn{DNAdam})  ~@~ n \vdashseq \cn{fireable}(4) ~@~ w}  ~\otimes R      
\end{array}$$
}
$P_{4r}$ :
{\footnotesize
$$\begin{array}{c}
\hspace{-0.5cm}
\dfrac
  {\begin{array}{l} \cn{abs}(\cn{p53}) \otimes \cn{pres}(\cn{Mdm2}) ~@~ w \\
       \vdashseq \cn{abs}(\cn{p53}) \otimes \cn{pres}(\cn{Mdm2}) ~@~ w 
   \end{array}
  ~~ 
   \dfrac
     {\cdots \vdashseq  \cn{abs}(\cn{DNAdam})) ~@~ w+1
     \qquad 
     \cdots \vdashseq \cn{abs}(\cn{p53}) ~\otimes~ \cn{pres}(\cn{Mdm2})  ~@~ w+1
     }
    {\cn{abs}(\cn{DNAdam}) ~@~ w,~
       \cn{pres}(\cn{Mdm2}) ~\otimes~ \cn{abs}(\cn{p53}) ~@~ w+1,~ 
       \cn{unchanged}(\cn{DNAdam} ) ~@~ w 
       \vdashseq \mathcal{P} ~@~ w+1 
    } 
  } 
  {s\_\cn{rule}(4) ~@~ w, 
   ~\cn{abs}(\cn{p53}) \otimes \cn{pres}(\cn{Mdm2}) \otimes \cn{abs}(\cn{DNAdam}) ~@~ w 
  \vdashseq \delay{1} \mathcal{P} ~@~ w
  } \limp L
\end{array}$$
}

$P_5$ is
\ednote[]{$
     \cn{fireable}(5) \eqdef ( \cn{pres}(\cn{p53})  
            \oplus (\cn{pres}(\cn{p53}) \otimes \cn{pres}(\cn{DNAdam})) 
            \oplus (\cn{pres}(\cn{p53}) \otimes \cn{abs}(\cn{DNAdam})) )
             ~\otimes~ \cn{dont\_care}(\cn{Mdm2}) \\
   \cn{not\_fireable}(5) \eqdef \cn{abs}(\cn{p53}) ~\otimes~ 
      \cn{dont\_care}(\{\cn{DNAdam},\cn{Mdm2}\}) \\
$}
%
{\small
$$\begin{array}{c}
\hspace{-0.5cm}
\dfrac
   {\dfrac
      {\cn{abs}(\cn{DNAdam}) ~@~ w \vdashseq \cn{dont\_care}(\cn{DNAdam}) ~@~ w 
       \qquad
       \cn{abs}(\cn{p53}) \otimes \cn{pres}(\cn{Mdm2}) ~@~ w \vdashseq \cn{abs}(\cn{p53}) ~\otimes~ \cn{dont\_care}(\cn{Mdm2}) ~@~ w 
     }
     {\mathcal{P}  ~@~ w \vdashseq \cn{not\_fireable}(5)  ~@~ w}~ \oplus R
   }
   {\mathcal{P}  ~@~ w \vdashseq  (\cn{fireable}(5) \mathbin{\&} \delay{1} \mathcal{P})  ~\oplus~ \cn{not\_fireable}(5)  ~@~ w}
\end{array}$$
}

$P_6$ is
\ednote[]{
$\begin{array}{ll} 
   \cn{rule}(6) \eqdef & 
      (\cn{abs}(\cn{DNAdam})  
      \oplus (\cn{abs}(\cn{DNAdam}) \otimes \cn{pres}(\cn{Mdm2})) 
      \oplus (\cn{abs}(\cn{DNAdam}) \otimes \cn{abs}(\cn{Mdm2}))) \\
      & \limp \delay{1} (\cn{abs}(\cn{DNAdam}) \otimes \cn{pres}(\cn{Mdm2})) ~ \otimes \downarrow u.~ 
             \cn{unchanged}(\cn{p53},u))
   \end{array}$}
\ednote[]{$
     \cn{fireable}(6) \eqdef ( \cn{abs}(\cn{DNAdam})  
             \oplus (\cn{abs}(\cn{DNAdam}) \otimes \cn{pres}(\cn{Mdm2})) 
             \oplus (\cn{abs}(\cn{DNAdam}) \otimes \cn{abs}(\cn{Mdm2})) )
             ~\otimes~ \cn{dont\_care}(\cn{p53})  \\
   \cn{not\_fireable}(6) \eqdef \cn{pres}(\cn{DNAdam}) ~\otimes~ 
      \cn{dont\_care}(\{\cn{Mdm2},\cn{p53}\}) 
$}
$$\begin{array}{c}
\dfrac
   {\dfrac
      {P_{6f} \qquad P_{6r}}
      {\cn{abs}(\cn{p53}) \otimes \cn{pres}(\cn{Mdm2}) \otimes \cn{abs}(\cn{DNAdam})  ~@~ w  
       \vdashseq \cn{fireable}(6) \mathbin{\&} \delay{1} \mathcal{P} ~@~ w
      } \mathbin{\&} R
   }
   {\mathcal{P}  ~@~ w 
    \vdashseq ~(\cn{fireable}(6) \mathbin{\&} \delay{1} \mathcal{P}) \oplus \cn{not\_fireable}(6) ~@~ w} \oplus R_1
\end{array}$$
$P_{6f}$ :
{\small
$$\begin{array}{c}
\dfrac
   {\cn{pres}(\cn{Mdm2}) ~\otimes~ \cn{abs}(\cn{DNAdam}) ~@~ w  
       \vdashseq \cn{abs}(\cn{DNAdam}) ~\otimes~ \cn{pres}(\cn{Mdm2}) ~@~ w   
    \qquad 
    \cn{abs}(\cn{p53}) ~@~ n \vdashseq \cn{dont\_care}(\cn{p53}) ~@~ w}
   {\cn{abs}(\cn{p53}) \otimes \cn{pres}(\cn{Mdm2}) \otimes \cn{abs}(\cn{DNAdam})  ~@~ n \vdashseq \cn{fireable}(6) ~@~ w}  ~\otimes R      
\end{array}$$
}
$P_{6r}$ :
{\footnotesize
$$\begin{array}{c}
\hspace{-0.5cm}
\dfrac
  {\begin{array}{l} \cn{pres}(\cn{Mdm2}) \otimes \cn{abs}(\cn{DNAdam}) ~@~ w \\
       \vdashseq \cn{abs}(\cn{DNAdam}) \otimes \cn{pres}(\cn{Mdm2}) ~@~ w 
   \end{array}
  ~~ 
   \dfrac
     {\cdots \vdashseq  \cn{abs}(\cn{DNAdam})) ~@~ w+1 
     \qquad 
     \cdots \vdashseq \cn{abs}(\cn{p53}) ~\otimes~ \cn{pres}(\cn{Mdm2})  ~@~ w+1
     }
     {\cn{abs}(\cn{p53}) ~@~ n,~
       \cn{abs}(\cn{DNAdam}) \otimes \cn{pres}(\cn{Mdm2}) ~@~ w+1,~ 
      \cn{unchanged}(\cn{p53} ) ~@~ w 
       \vdashseq \mathcal{P} ~@~ w+1 
    } 
  } 
  {s\_\cn{rule}(6) ~@~ w, 
   ~\cn{abs}(\cn{p53}) \otimes \cn{pres}(\cn{Mdm2}) \otimes \cn{abs}(\cn{DNAdam}) ~@~ w 
  \vdashseq \delay{1} \mathcal{P} ~@~ w
  } \limp L
\end{array}$$
}
\end{proof}


\subsection{Property 4}
\label{appendix:sec.property4}
There is no path with two consecutive states where $\cn{p53}$ and $\cn{Mdm2}$ are both present or both absent.
In other words: 
from any state where $\cn{p53}$ and $\cn{Mdm2}$ are both present or both absent, 
we can only go to a state where either $\cn{p53}$ is present and $\cn{Mdm2}$ is absent or 
$\cn{p53}$ is absent and $\cn{Mdm2}$ is present.
This property is only valid for strong rules.  

\begin{proposition}[Property 4]
\label{appendix:prop.property4} 
For any world $n$, the following holds:
$$\mathop{\dag} system ~@~ 0 ;~ . \vdashseq \mathcal{L} \limp 
  \forall r: [1..6].~ (s\_\cn{fireable}(r) \mathbin{\&} \delay{1} \mathcal{R}) \oplus s\_\cn{not\_fireable}(r) ~@~ n $$
\end{proposition}

The proof proceeds by case analysis on the rules ($r$) of the biological system.

Note that we need (and sometimes implicitly use) here the $\cn{well\_defined}$ hypothesis 
$ \forall a \in \cn{vars}.~(\cn{pres}(a) \oplus \cn{abs}(a))$. 
For example, the proof $P_{1}$ below uses the hypothesis
$\cn{pres}(\cn{DNAdam}) \oplus \cn{abs}(\cn{DNAdam})$.

\begin{proof}

Let 
$\mathcal{L}$ denote
$(\cn{pres}(\cn{p53}) ~\otimes~ \cn{pres}(\cn{Mdm2})) ~\oplus~ 
 (\cn{abs}(\cn{p53}) ~\otimes~ \cn{abs}(\cn{Mdm2}))$

and $\mathcal{R}$ be
$( (\cn{pres}(\cn{p53}) ~\otimes~ \cn{abs}(\cn{Mdm2})) ~\oplus~ 
    (\cn{abs}(\cn{p53}) ~\otimes~ \cn{pres}(\cn{Mdm2})) )
 ~\otimes \cn{dont\_care}(\cn{DNAdam})$

in
$$\begin{array}{c}
   \dfrac
      {\dfrac
          { P_1 \quad P_2 }
          {\mathcal{L} ~@~ n \vdashseq \forall r: [1..6]. ~ (s\_\cn{fireable}(r) \mathbin{\&} 
              \delay{1} \mathcal{R}) \oplus s\_\cn{not\_fireable}(r) ~@~ n
          } \oplus L
       }
       {\vdashseq \mathcal{L} \limp 
        \forall r: [1..6]. ~ (s\_\cn{fireable}(r) \mathbin{\&} \delay{1} \mathcal{R}) \oplus s\_\cn{not\_fireable}(r) ~@~ n} \limp R
\end{array}$$
where $P_1$ is
$$\begin{array}{c}
\dfrac
   {P_{11} \quad P_{12} \quad P_{13} \quad P_{14} \quad P_{15} \quad P_{16}}
   {(\cn{pres}(\cn{p53}) ~\otimes~ \cn{pres}(\cn{Mdm2}))  ~@~ n 
    \vdashseq \forall r: [1..6]. ~ (s\_\cn{fireable}(r) \mathbin{\&} \delay{1} \mathcal{R}) \oplus s\_\cn{not\_fireable}(r) ~@~ n} ~ case_r
\end{array}$$
with $P_{11}$ :
\ednote[]{$s\_\cn{rule}(1) \eqdef \cn{pres}(\cn{DNAdam}) \otimes \cn{pres}(\cn{Mdm2}) 
 \limp \delay{1} (\cn{pres}(\cn{DNAdam}) \otimes \cn{abs}(\cn{Mdm2}))
 \otimes \downarrow u. \cn{unchanged}(\cn{p53},u)$}
 \ednote[]{
     $s\_\cn{fireable}(1) \eqdef \cn{pres}(\cn{DNAdam}) \otimes \cn{pres}(\cn{Mdm2}) 
        \otimes \cn{dont\_care}(\cn{p53}) \\
       s\_\cn{not\_fireable}(1) \eqdef \\ 
         ( (\cn{abs}(\cn{DNAdam}) \otimes \cn{pres}(\cn{Mdm2})) 
           ~\oplus~  (\cn{pres}(\cn{DNAdam}) \otimes \cn{abs}(\cn{Mdm2}))
           ~\oplus~  (\cn{abs}(\cn{DNAdam}) \otimes \cn{abs}(\cn{Mdm2})) )
         ~\otimes~ \cn{dont\_care}(\cn{p53})$ }
%
$$\begin{array}{c}
\dfrac
   {\dfrac 
      {P_{11F} \qquad P_{11N}}
      {(\cn{pres}(\cn{p53}) ~\otimes~ \cn{pres}(\cn{Mdm2}))  ~@~ n,
        (\cn{pres}(\cn{DNAdam}) ~\oplus~ \cn{abs}(\cn{DNAdam})) ~@~ n
        \vdashseq 
        \cdots \oplus \cdots ~@~ n} ~ \oplus L
   }
   {(\cn{pres}(\cn{p53}) ~\otimes~ \cn{pres}(\cn{Mdm2}))  ~@~ n 
    \vdashseq ~(s\_\cn{fireable}(1) \mathbin{\&} \delay{1} \mathcal{R}) \oplus s\_\cn{not\_fireable}(1) ~@~ n}
\end{array}$$

$P_{11F}$ :
$$\begin{array}{c}
\hspace{-0.5cm}
\dfrac
      {\dfrac 
          {P_{11Ff} \qquad P_{11Fr}}
          {\cn{pres}(\cn{p53}) ~\otimes~ \cn{pres}(\cn{Mdm2}) 
           ~\otimes~ \cn{pres}(\cn{DNAdam}) ~@~ n
          \vdashseq s\_\cn{fireable}(1) \mathbin{\&} \delay{1} \mathcal{R} ~@~ n}
          ~ \mathbin{\&} R
      }
      {(\cn{pres}(\cn{p53}) ~\otimes~ \cn{pres}(\cn{Mdm2}))  ~@~ n,  
       \cn{pres}(\cn{DNAdam}) ~@~ n \vdashseq 
       (s\_\cn{fireable}(1) \mathbin{\&} \delay{1} \mathcal{R}) \oplus s\_\cn{not\_fireable}(1) ~@~ n
      } ~ \oplus R_1
\end{array}$$
$P_{11Ff}$ :
$$\begin{array}{c}
\hspace{-0.5cm}
\cn{pres}(\cn{p53}) ~\otimes~ \cn{pres}(\cn{Mdm2}) ~\otimes~ \cn{pres}(\cn{DNAdam}) ~@~ n 
   \vdashseq s\_\cn{fireable}(1) ~@~ n ~~ [...; hyp]
\end{array}$$
$P_{11Fr}$ :
{\footnotesize
$$\begin{array}{c}
\hspace{-1.0cm}
\dfrac
  {\begin{array}{l} \cn{pres}(\cn{DNAdam}) \otimes \cn{pres}(\cn{Mdm2}) ~@~ n \\
       \vdashseq \cn{pres}(\cn{DNAdam}) \otimes \cn{pres}(\cn{Mdm2}) ~@~ n 
   \end{array}
   \quad 
   \dfrac
     {\dfrac
         {\cdots  
             \vdashseq \cn{pres}(\cn{p53}) \otimes \cn{abs}(\cn{Mdm2}) ~@~ n+1}
        {\cdots  
             \vdashseq ( \cn{pres}(\cn{p53}) \otimes \cn{abs}(\cn{Mdm2}) ) \oplus \cdots ~@~ n+1}   \oplus R
     \quad 
      \dfrac
         {\cdots  
         \vdashseq \cn{pres}(\cn{DNAdam}) ~@~ n+1} 
         {\cdots  
         \vdashseq \cn{dont\_care}(\cn{DNAdam}) ~@~ n+1}   \oplus R
    }
     {\cn{pres}(\cn{p53}) ~@~ n, ~
       (\cn{pres}(\cn{DNAdam}) ~\otimes~ \cn{abs}(\cn{Mdm2})) ~@~ n+1,~ 
       \cn{unchanged}(\cn{p53} ) ~@~ n 
       \vdashseq \mathcal{R} ~@~ n+1 
    } 
  } 
  {s\_\cn{rule}(1) ~@~ n, 
   ~(\cn{pres}(\cn{p53}) ~\otimes~ \cn{pres}(\cn{Mdm2}) ~\otimes~ \cn{pres}(\cn{DNAdam})) ~@~ n 
   \vdashseq \delay{1} \mathcal{R} ~@~ n
  } \limp L
\end{array}$$
}
$P_{11N}$ :
$$\begin{array}{c}
\hspace{-0.5cm}
    \dfrac
      {\cn{pres}(\cn{p53}) ~@~ n,
       (\cn{abs}(\cn{DNAdam}) ~\otimes~ \cn{pres}(\cn{Mdm2})) ~@~ n
       \vdashseq s\_\cn{not\_fireable}(1) ~@~ n ~~ [...; hyp]}
      {(\cn{pres}(\cn{p53}) ~\otimes~ \cn{pres}(\cn{Mdm2}))  ~@~ n,  
       \cn{abs}(\cn{DNAdam}) ~@~ n \vdashseq 
       (s\_\cn{fireable}(1) \mathbin{\&} \delay{1} \mathcal{R}) \oplus s\_\cn{not\_fireable}(1) ~@~ n
      } ~ \oplus R_2
\end{array}$$

$P_{12}$ :
\ednote[]{$s\_\cn{rule}(2) \eqdef 
   \cn{abs}(\cn{Mdm2}) \otimes \cn{abs}(\cn{p53}) 
   \limp \delay{1} (\cn{abs}(\cn{Mdm2}) \otimes \cn{pres}(\cn{p53})) \otimes \downarrow u.~ \cn{unchanged}(\cn{DNAdam},u))$}
\ednote[]{$s\_\cn{fireable}(2) \eqdef \cn{abs}(\cn{Mdm2})  \otimes \cn{abs}(\cn{p53}) 
        \otimes \cn{dont\_care}(\cn{DNAdam}) \\
     s\_\cn{not\_fireable}(2) \eqdef \\
        ( (\cn{pres}(\cn{Mdm2})  \otimes \cn{abs}(\cn{p53})) 
          ~\oplus~ (\cn{abs}(\cn{Mdm2})  \otimes \cn{pres}(\cn{p53}))
          ~\oplus~ (\cn{pres}(\cn{Mdm2})  \otimes \cn{pres}(\cn{p53})) 
        ~\otimes~ \cn{dont\_care}(\cn{DNAdam})$ }
$$\begin{array}{c}
\dfrac
      {\cn{pres}(\cn{p53}) ~\otimes~ \cn{pres}(\cn{Mdm2})  ~@~ n 
    \vdashseq s\_\cn{not\_fireable}(2) ~@~ n ~~[hyp]}
   {\cn{pres}(\cn{p53}) ~\otimes~ \cn{pres}(\cn{Mdm2})  ~@~ n 
    \vdashseq ~(s\_\cn{fireable}(2) \mathbin{\&} \delay{1} \mathcal{R}) \oplus s\_\cn{not\_fireable}(2) ~@~ n} ~\oplus R_2
\end{array}$$

$P_{13}$ :
\ednote[]{$s\_\cn{rule}(3) \eqdef 
\cn{pres}(\cn{p53}) \otimes \cn{abs}(\cn{Mdm2}) 
            \limp \delay{1} (\cn{pres}(\cn{p53}) \otimes \cn{pres}(\cn{Mdm2})) \otimes \downarrow u.~ \cn{unchanged}(\cn{DNAdam},u))$}
\ednote[]{$s\_\cn{fireable}(3) \eqdef \cn{pres}(\cn{p53}) \otimes \cn{abs}(\cn{Mdm2}) 
        \otimes \cn{dont\_care}(\cn{DNAdam}) \\
     s\_\cn{not\_fireable}(3) \eqdef \\
        ( (\cn{abs}(\cn{p53}) \otimes \cn{abs}(\cn{Mdm2}))
          ~\oplus~ (\cn{pres}(\cn{p53}) \otimes \cn{pres}(\cn{Mdm2}))
          ~\oplus~ (\cn{abs}(\cn{p53}) \otimes \cn{pres}(\cn{Mdm2})) )
        ~\otimes~ \cn{dont\_care}(\cn{DNAdam}) $}
$$\begin{array}{c}
\dfrac
      {\cn{pres}(\cn{p53}) ~\otimes~ \cn{pres}(\cn{Mdm2})  ~@~ n 
    \vdashseq s\_\cn{not\_fireable}(3) ~@~ n ~~[hyp]}
   {\cn{pres}(\cn{p53}) ~\otimes~ \cn{pres}(\cn{Mdm2})  ~@~ n 
    \vdashseq ~(s\_\cn{fireable}(3) \mathbin{\&} \delay{1} \mathcal{R}) \oplus s\_\cn{not\_fireable}(3) ~@~ n} ~\oplus R_2
\end{array}$$

$P_{14}$ :
\ednote[]{$s\_\cn{rule}(4) \eqdef 
    \cn{pres}(\cn{Mdm2}) \otimes \cn{pres}(\cn{p53}) 
    \limp \delay{1} (\cn{pres}(\cn{Mdm2}) \otimes \cn{abs}(\cn{p53})) ~\otimes~ \downarrow u.~ \cn{unchanged}(\cn{DNAdam},u))$}
\ednote[]{$s\_\cn{fireable}(4) \eqdef \cn{pres}(\cn{Mdm2}) \otimes \cn{pres}(\cn{p53}) 
        \otimes \cn{dont\_care}(\cn{DNAdam}) \\
    s\_\cn{not\_fireable}(4) \eqdef \\
        ( (\cn{abs}(\cn{Mdm2}) \otimes \cn{pres}(\cn{p53})) 
          ~\oplus~ (\cn{pres}(\cn{Mdm2}) \otimes \cn{abs}(\cn{p53})) 
          ~\oplus~ (\cn{abs}(\cn{Mdm2}) \otimes \cn{abs}(\cn{p53})) )
        ~\otimes~ \cn{dont\_care}(\cn{DNAdam})$ }
$$\begin{array}{c}
\hspace{-0.5cm}
\dfrac
   {\dfrac
      {P_{14f} \qquad P_{14r}}
      {(\cn{pres}(\cn{p53}) ~\otimes~ \cn{pres}(\cn{Mdm2})) ~@~ n 
       \vdashseq s\_\cn{fireable}(4) \mathbin{\&} \delay{1} \mathcal{R} ~@~ n
      } \mathbin{\&} R
   }
   {(\cn{pres}(\cn{p53}) ~\otimes~ \cn{pres}(\cn{Mdm2}))  ~@~ n 
    \vdashseq ~(s\_\cn{fireable}(4) \mathbin{\&} \delay{1} \mathcal{R}) \oplus s\_\cn{not\_fireable}(4) ~@~ n} \oplus R_1
\end{array}$$
$P_{14f}$ :
$$\begin{array}{c}
\hspace{-0.5cm}
\dfrac
   {\cdots \vdashseq \cn{pres}(\cn{p53}) ~\otimes~ \cn{pres}(\cn{Mdm2}) ~@~ n ~~[hyp]
    \qquad 
    \cn{well\_defined_1}(\cn{DNAdam}) ~@~ n \vdashseq \cn{dont\_care}(\cn{DNAdam}) ~@~ n}
   {\cn{pres}(\cn{p53}) ~\otimes~ \cn{pres}(\cn{Mdm2}) ~@~ n \vdashseq s\_\cn{fireable}(4) ~@~ n}       
\end{array}$$
$P_{14r}$ :
{\small
$$\begin{array}{c}
\hspace{-1.1cm}
\dfrac
  {\begin{array}{l} \cn{pres}(\cn{p53}) \otimes \cn{pres}(\cn{Mdm2}) ~@~ n \\
       \vdashseq \cn{pres}(\cn{p53}) \otimes \cn{pres}(\cn{Mdm2}) ~@~ n 
   \end{array}
  ~~ 
   \dfrac
     {\dfrac
         {\cdots  
             \vdashseq \cn{abs}(\cn{p53}) \otimes \cn{pres}(\cn{Mdm2}) ~@~ n+1}
        {\cdots  
             \vdashseq ( \cn{abs}(\cn{p53}) \otimes \cn{pres}(\cn{Mdm2}) ) \oplus \cdots ~@~ n+1}  
     \quad 
      \dfrac
         { \cn{well\_defined_1}(\cn{DNAdam}) ~@~ n+1
         \vdashseq \cdots } 
         {\vdashseq \cn{dont\_care}(\cn{DNAdam}) ~@~ n+1} 
    }
     {\cn{pres}(\cn{Mdm2}) ~\otimes~ \cn{abs}(\cn{p53}) ~@~ n+1,~ 
       \cn{unchanged}(\cn{DNAdam} ) ~@~ n 
       \vdashseq \mathcal{R} ~@~ n+1 
    } 
  } 
  {s\_\cn{rule}(4) ~@~ n, 
   ~(\cn{pres}(\cn{p53}) ~\otimes~ \cn{pres}(\cn{Mdm2})) ~@~ n 
   \vdashseq \delay{1} \mathcal{R} ~@~ n
  } \limp L
\end{array}$$
}

$P_{15}$ :
\ednote[]{$s\_\cn{rule}(5) \eqdef 
   \cn{pres}(\cn{p53}) \otimes \cn{pres}(\cn{DNAdam}) 
   \limp \delay{1} (\cn{abs}(\cn{p53}) \otimes \cn{abs}(\cn{DNAdam})) 
      \otimes \downarrow u.~ \cn{unchanged}(\cn{Mdm2},u))$}
\ednote[]{$s\_\cn{fireable}(5) \eqdef \cn{pres}(\cn{p53}) \otimes \cn{pres}(\cn{DNAdam}) 
        \otimes \cn{dont\_care}(\cn{Mdm2}) \\
    s\_\cn{not\_fireable}(5) \eqdef \\
        ( (\cn{abs}(\cn{p53}) \otimes \cn{pres}(\cn{DNAdam}))
          ~\oplus~ (\cn{pres}(\cn{p53}) \otimes \cn{abs}(\cn{DNAdam}))
          ~\oplus~ (\cn{abs}(\cn{p53}) \otimes \cn{abs}(\cn{DNAdam})) )
        ~\otimes~ \cn{dont\_care}(\cn{Mdm2})$ }
$$\begin{array}{c}
\dfrac
   {\dfrac 
      {P_{15F} \qquad P_{15N}}
      {(\cn{pres}(\cn{p53}) ~\otimes~ \cn{pres}(\cn{Mdm2}))  ~@~ n,
        (\cn{pres}(\cn{DNAdam}) ~\oplus~ \cn{abs}(\cn{DNAdam})) ~@~ n
        \vdashseq 
        \cdots \oplus \cdots ~@~ n} ~ \oplus L
   }
   {(\cn{pres}(\cn{p53}) ~\otimes~ \cn{pres}(\cn{Mdm2}))  ~@~ n 
    \vdashseq ~(s\_\cn{fireable}(5) \mathbin{\&} \delay{1} \mathcal{R}) \oplus s\_\cn{not\_fireable}(5) ~@~ n}
\end{array}$$
$P_{15F}$ :
$$\begin{array}{c}
\hspace{-0.5cm}
\dfrac
      {\dfrac 
          {P_{15Ff} \qquad P_{15Fr}}
          {\cn{pres}(\cn{p53}) ~\otimes~ \cn{pres}(\cn{Mdm2}) 
           ~\otimes~ \cn{pres}(\cn{DNAdam}) ~@~ n
          \vdashseq s\_\cn{fireable}(5) \mathbin{\&} \delay{1} \mathcal{R} ~@~ n}
          ~ \mathbin{\&} R
      }
      {(\cn{pres}(\cn{p53}) ~\otimes~ \cn{pres}(\cn{Mdm2}))  ~@~ n,  
       \cn{pres}(\cn{DNAdam}) ~@~ n \vdashseq 
       (s\_\cn{fireable}(5) \mathbin{\&} \delay{1} \mathcal{R}) \oplus s\_\cn{not\_fireable}(5) ~@~ n
      } ~ \oplus R_1
\end{array}$$
$P_{15Ff}$ :
$$\begin{array}{c}
\hspace{-0.5cm}
\cn{pres}(\cn{p53}) ~\otimes~ \cn{pres}(\cn{Mdm2}) ~\otimes~ \cn{pres}(\cn{DNAdam}) ~@~ n 
   \vdashseq s\_\cn{fireable}(5) ~@~ n ~~ [...; hyp]
\end{array}$$
$P_{15Fr}$ :
{\footnotesize
$$\begin{array}{c}
\hspace{-1.0cm}
\dfrac
  {\begin{array}{l} \cn{pres}(\cn{p53}) \otimes \cn{pres}(\cn{DNAdam}) ~@~ n \\
       \vdashseq \cn{pres}(\cn{p53}) \otimes \cn{pres}(\cn{DNAdam}) ~@~ n 
   \end{array}
   \quad 
   \dfrac
     {\dfrac
         {\cdots 
             \vdashseq \cn{abs}(\cn{p53}) \otimes \cn{pres}(\cn{Mdm2}) ~@~ n+1}
        {\cdots 
             \vdashseq ( \cn{abs}(\cn{p53}) \otimes \cn{pres}(\cn{Mdm2}) ) \oplus \cdots ~@~ n+1}   \oplus R
     \quad 
      \dfrac
         {\cdots 
         \vdashseq \cn{pres}(\cn{DNAdam}) ~@~ n+1} 
         {\cdots 
         \vdashseq \cn{dont\_care}(\cn{DNAdam}) ~@~ n+1}   \oplus R
    }
     {\cn{pres}(\cn{Mdm2}) ~@~ n, ~
       (\cn{abs}(\cn{p53}) ~\otimes~ \cn{abs}(\cn{DNAdam})) ~@~ n+1,~ 
       \cn{unchanged}(\cn{Mdm2} ) ~@~ n 
       \vdashseq \mathcal{R} ~@~ n+1 
    } 
  } 
  {s\_\cn{rule}(5) ~@~ n, 
   ~(\cn{pres}(\cn{p53}) ~\otimes~ \cn{pres}(\cn{Mdm2}) ~\otimes~ \cn{pres}(\cn{DNAdam})) ~@~ n 
   \vdashseq \delay{1} \mathcal{R} ~@~ n
  } \limp L
\end{array}$$
}
$P_{15N}$ :
$$\begin{array}{c}
\hspace{-0.5cm}
    \dfrac
      {\cn{pres}(\cn{p53}) ~\otimes~ \cn{abs}(\cn{DNAdam}) ~@~ n,~
       \cn{pres}(\cn{Mdm2}) ~@~ n
       \vdashseq s\_\cn{not\_fireable}(5) ~@~ n ~~ [...; hyp]}
      {(\cn{pres}(\cn{p53}) ~\otimes~ \cn{pres}(\cn{Mdm2}))  ~@~ n,  
       \cn{abs}(\cn{DNAdam}) ~@~ n \vdashseq 
       (s\_\cn{fireable}(5) \mathbin{\&} \delay{1} \mathcal{R}) \oplus s\_\cn{not\_fireable}(5) ~@~ n
      } ~ \oplus R_2
\end{array}$$
$P_{16}$ :
\ednote[]{$s\_\cn{rule}(6) \eqdef 
\cn{abs}(\cn{DNAdam}) \otimes \cn{abs}(\cn{Mdm2}) 
            \limp \delay{1} (\cn{abs}(\cn{DNAdam}) \otimes \cn{pres}(\cn{Mdm2})) \otimes \downarrow u.~ \cn{unchanged}(\cn{p53},u))$}
\ednote[]{$s\_\cn{fireable}(6) \eqdef \cn{abs}(\cn{DNAdam}) \otimes \cn{abs}(\cn{Mdm2}) 
        \otimes \cn{dont\_care}(\cn{p53}) \\
   s\_\cn{not\_fireable}(6) \eqdef \\
        ( (\cn{pres}(\cn{DNAdam}) \otimes \cn{abs}(\cn{Mdm2}))
          ~\oplus~  (\cn{abs}(\cn{DNAdam}) \otimes \cn{pres}(\cn{Mdm2}))
          ~\oplus~  (\cn{pres}(\cn{DNAdam}) \otimes \cn{pres}(\cn{Mdm2})) )
        ~\otimes~ \cn{dont\_care}(\cn{p53})$}
$$\begin{array}{c}
\dfrac
      {\cn{pres}(\cn{p53}) ~\otimes~ \cn{pres}(\cn{Mdm2})  ~@~ n 
    \vdashseq s\_\cn{not\_fireable}(6) ~@~ n ~~[hyp]}
   {\cn{pres}(\cn{p53}) ~\otimes~ \cn{pres}(\cn{Mdm2})  ~@~ n 
    \vdashseq ~(s\_\cn{fireable}(6) \mathbin{\&} \delay{1} \mathcal{R}) \oplus s\_\cn{not\_fireable}(6) ~@~ n} ~\oplus R_2
\end{array}$$

and $P_2$ is
$$\begin{array}{c}
\dfrac
   {P_{21} \quad P_{22} \quad P_{23} \quad P_{24} \quad P_{25} \quad P_{26}}
   {(\cn{abs}(\cn{p53}) ~\otimes~ \cn{abs}(\cn{Mdm2}))  ~@~ n 
    \vdashseq \forall r: [1..6]. ~ (s\_\cn{fireable}(r) \mathbin{\&} \delay{1} \mathcal{R}) \oplus s\_\cn{not\_fireable}(r) ~@~ n} ~ case_r
\end{array}$$

with $P_{21}$ :
\ednote[]{$s\_\cn{rule}(1) \eqdef \cn{pres}(\cn{DNAdam}) \otimes \cn{pres}(\cn{Mdm2}) 
 \limp \delay{1} (\cn{pres}(\cn{DNAdam}) \otimes \cn{abs}(\cn{Mdm2})) ~\otimes \downarrow u.~ \cn{unchanged}(\cn{p53},u)$}
\ednote[]{
     $s\_\cn{fireable}(1) \eqdef \cn{pres}(\cn{DNAdam}) \otimes \cn{pres}(\cn{Mdm2}) 
        \otimes \cn{dont\_care}(\cn{p53}) \\
       s\_\cn{not\_fireable}(1) \eqdef \\ 
         ( (\cn{abs}(\cn{DNAdam}) \otimes \cn{pres}(\cn{Mdm2})) 
           ~\oplus~  (\cn{pres}(\cn{DNAdam}) \otimes \cn{abs}(\cn{Mdm2}))
           ~\oplus~  (\cn{abs}(\cn{DNAdam}) \otimes \cn{abs}(\cn{Mdm2})) )
         ~\otimes~ \cn{dont\_care}(\cn{p53})$ }
$$\begin{array}{c}
\dfrac
   {\cn{abs}(\cn{p53}) ~\otimes~ \cn{abs}(\cn{Mdm2})  ~@~ n 
    \vdashseq s\_\cn{not\_fireable}(1) ~@~ n}  
   {\cn{abs}(\cn{p53}) ~\otimes~ \cn{abs}(\cn{Mdm2})  ~@~ n 
    \vdashseq ~(s\_\cn{fireable}(1) \mathbin{\&} \delay{1} \mathcal{R}) \oplus s\_\cn{not\_fireable}(1) ~@~ n} ~\oplus R_2
\end{array}$$

$P_{22}$ :
\ednote[]{$s\_\cn{rule}(2) \eqdef 
   \cn{abs}(\cn{Mdm2}) \otimes \cn{abs}(\cn{p53}) 
   \limp \delay{1} (\cn{abs}(\cn{Mdm2}) \otimes \cn{pres}(\cn{p53})) \otimes \downarrow u.~ \cn{unchanged}(\cn{DNAdam},u))$}
\ednote[]{$s\_\cn{fireable}(2) \eqdef \cn{abs}(\cn{Mdm2})  \otimes \cn{abs}(\cn{p53}) 
        \otimes \cn{dont\_care}(\cn{DNAdam}) \\
     s\_\cn{not\_fireable}(2) \eqdef \\
        ( (\cn{pres}(\cn{Mdm2})  \otimes \cn{abs}(\cn{p53})) 
          ~\oplus~ (\cn{abs}(\cn{Mdm2})  \otimes \cn{pres}(\cn{p53}))
          ~\oplus~ (\cn{pres}(\cn{Mdm2})  \otimes \cn{pres}(\cn{p53})) 
        ~\otimes~ \cn{dont\_care}(\cn{DNAdam})$ }
$$\begin{array}{c}
\hspace{-0.5cm}
\dfrac
   {\dfrac
      {P_{22f} \qquad P_{22r}}
      {(\cn{abs}(\cn{p53}) ~\otimes~ \cn{abs}(\cn{Mdm2})) ~@~ n 
       \vdashseq s\_\cn{fireable}(2) \mathbin{\&} \delay{1} \mathcal{R} ~@~ n
      } \mathbin{\&} R
   }
   {(\cn{abs}(\cn{p53}) ~\otimes~ \cn{abs}(\cn{Mdm2}))  ~@~ n 
    \vdashseq ~(s\_\cn{fireable}(2) \mathbin{\&} \delay{1} \mathcal{R}) \oplus s\_\cn{not\_fireable}(2) ~@~ n} \oplus R_1
\end{array}$$
$P_{22f}$ :
$$\begin{array}{c}
\hspace{-0.5cm}
\dfrac
   {\cdots \vdashseq \cn{abs}(\cn{p53}) ~\otimes~ \cn{abs}(\cn{Mdm2}) ~@~ n ~~[hyp]
    \qquad 
    \cn{well\_defined_1}(\cn{DNAdam}) ~@~ n \vdashseq \cn{dont\_care}(\cn{DNAdam}) ~@~ n}
   {\cn{abs}(\cn{p53}) ~\otimes~ \cn{abs}(\cn{Mdm2}) ~@~ n \vdashseq s\_\cn{fireable}(2) ~@~ n}       
\end{array}$$
$P_{22r}$ :
{\small
$$\begin{array}{c}
\hspace{-1.0cm}
\dfrac
  {\begin{array}{l} \cn{abs}(\cn{p53}) \otimes \cn{abs}(\cn{Mdm2}) ~@~ n \\
       \vdashseq \cn{abs}(\cn{p53}) \otimes \cn{abs}(\cn{Mdm2}) ~@~ n 
   \end{array}
  ~~ 
   \dfrac
     {\dfrac
         {\cdots  
             \vdashseq \cn{pres}(\cn{p53}) \otimes \cn{abs}(\cn{Mdm2}) ~@~ n+1}
        {\cdots  
             \vdashseq ( \cn{pres}(\cn{p53}) \otimes \cn{abs}(\cn{Mdm2}) ) \oplus \cdots ~@~ n+1}  
     \quad 
      \dfrac
         { \cn{well\_defined_1}(\cn{DNAdam}) ~@~ n+1
         \vdashseq \cdots } 
         {\vdashseq \cn{dont\_care}(\cn{DNAdam}) ~@~ n+1} 
    }
     {\cn{abs}(\cn{Mdm2}) ~\otimes~ \cn{pres}(\cn{p53}) ~@~ n+1,~ 
       \cn{unchanged}(\cn{DNAdam} ) ~@~ n 
       \vdashseq \mathcal{R} ~@~ n+1 
    } 
  } 
  {s\_\cn{rule}(2) ~@~ n, 
   ~(\cn{abs}(\cn{p53}) ~\otimes~ \cn{abs}(\cn{Mdm2})) ~@~ n 
   \vdashseq \delay{1} \mathcal{R} ~@~ n
  } \limp L
\end{array}$$
}

$P_{23}$ :
\ednote[]{$s\_\cn{rule}(3) \eqdef 
   \cn{pres}(\cn{p53}) \otimes \cn{abs}(\cn{Mdm2}) 
   \limp \delay{1} (\cn{pres}(\cn{p53}) \otimes \cn{pres}(\cn{Mdm2})) \otimes \downarrow u.~ \cn{unchanged}(\cn{DNAdam},u))$}
\ednote[]{$s\_\cn{fireable}(3) \eqdef \cn{pres}(\cn{p53}) \otimes \cn{abs}(\cn{Mdm2}) 
        \otimes \cn{dont\_care}(\cn{DNAdam}) \\
     s\_\cn{not\_fireable}(3) \eqdef \\
        ( (\cn{abs}(\cn{p53}) \otimes \cn{abs}(\cn{Mdm2}))
          ~\oplus~ (\cn{pres}(\cn{p53}) \otimes \cn{pres}(\cn{Mdm2}))
          ~\oplus~ (\cn{abs}(\cn{p53}) \otimes \cn{pres}(\cn{Mdm2})) )
        ~\otimes~ \cn{dont\_care}(\cn{DNAdam}) $}
$$\begin{array}{c}
\dfrac
   {\cn{abs}(\cn{p53}) ~\otimes~ \cn{abs}(\cn{Mdm2})  ~@~ n 
    \vdashseq s\_\cn{not\_fireable}(3) ~@~ n}  
   {\cn{abs}(\cn{p53}) ~\otimes~ \cn{abs}(\cn{Mdm2})  ~@~ n 
    \vdashseq ~(s\_\cn{fireable}(3) \mathbin{\&} \delay{1} \mathcal{R}) \oplus s\_\cn{not\_fireable}(3) ~@~ n} ~\oplus R_2
\end{array}$$

$P_{24}$ :
\ednote[]{$s\_\cn{rule}(4) \eqdef 
    \cn{pres}(\cn{Mdm2}) \otimes \cn{pres}(\cn{p53}) 
    \limp \delay{1} (\cn{pres}(\cn{Mdm2}) \otimes \cn{abs}(\cn{p53})) ~\otimes~ \downarrow u.~ \cn{unchanged}(\cn{DNAdam},u))$}
\ednote[]{$s\_\cn{fireable}(4) \eqdef \cn{pres}(\cn{Mdm2}) \otimes \cn{pres}(\cn{p53}) 
        \otimes \cn{dont\_care}(\cn{DNAdam}) \\
    s\_\cn{not\_fireable}(4) \eqdef \\
        ( (\cn{abs}(\cn{Mdm2}) \otimes \cn{pres}(\cn{p53})) 
          ~\oplus~ (\cn{pres}(\cn{Mdm2}) \otimes \cn{abs}(\cn{p53})) 
          ~\oplus~ (\cn{abs}(\cn{Mdm2}) \otimes \cn{abs}(\cn{p53})) )
        ~\otimes~ \cn{dont\_care}(\cn{DNAdam})$ }
$$\begin{array}{c}
\dfrac
   {\cn{abs}(\cn{p53}) ~\otimes~ \cn{abs}(\cn{Mdm2})  ~@~ n 
    \vdashseq s\_\cn{not\_fireable}(4) ~@~ n}  
   {\cn{abs}(\cn{p53}) ~\otimes~ \cn{abs}(\cn{Mdm2})  ~@~ n 
    \vdashseq ~(s\_\cn{fireable}(4) \mathbin{\&} \delay{1} \mathcal{R}) \oplus s\_\cn{not\_fireable}(4) ~@~ n} ~\oplus R_2
\end{array}$$

$P_{25}$ :
\ednote[]{$s\_\cn{rule}(5) \eqdef 
   \cn{pres}(\cn{p53}) \otimes \cn{pres}(\cn{DNAdam}) 
   \limp \delay{1} (\cn{abs}(\cn{p53}) \otimes \cn{abs}(\cn{DNAdam})) \otimes \downarrow u.~ \cn{unchanged}(\cn{Mdm2},u))$}
\ednote[]{$s\_\cn{fireable}(5) \eqdef \cn{pres}(\cn{p53}) \otimes \cn{pres}(\cn{DNAdam}) 
        \otimes \cn{dont\_care}(\cn{Mdm2}) \\
    s\_\cn{not\_fireable}(5) \eqdef \\
        ( (\cn{abs}(\cn{p53}) \otimes \cn{pres}(\cn{DNAdam}))
          ~\oplus~ (\cn{pres}(\cn{p53}) \otimes \cn{abs}(\cn{DNAdam}))
          ~\oplus~ (\cn{abs}(\cn{p53}) \otimes \cn{abs}(\cn{DNAdam})) )
        ~\otimes~ \cn{dont\_care}(\cn{Mdm2})$ }
$$\begin{array}{c}
\dfrac
   {\cn{abs}(\cn{p53}) ~\otimes~ \cn{abs}(\cn{Mdm2})  ~@~ n 
    \vdashseq s\_\cn{not\_fireable}(5) ~@~ n}  
   {\cn{abs}(\cn{p53}) ~\otimes~ \cn{abs}(\cn{Mdm2})  ~@~ n 
    \vdashseq ~(s\_\cn{fireable}(5) \mathbin{\&} \delay{1} \mathcal{R}) \oplus s\_\cn{not\_fireable}(5) ~@~ n} ~\oplus R_2
\end{array}$$

$P_{26}$ :
\ednote[]{$s\_\cn{rule}(6) \eqdef 
\cn{abs}(\cn{DNAdam}) \otimes \cn{abs}(\cn{Mdm2}) 
            \limp \delay{1} (\cn{abs}(\cn{DNAdam}) \otimes \cn{pres}(\cn{Mdm2})) \otimes \downarrow u.~ \cn{unchanged}(\cn{p53},u))$}
\ednote[]{$s\_\cn{fireable}(6) \eqdef \cn{abs}(\cn{DNAdam}) \otimes \cn{abs}(\cn{Mdm2}) 
        \otimes \cn{dont\_care}(\cn{p53}) \\
   s\_\cn{not\_fireable}(6) \eqdef \\
        ( (\cn{pres}(\cn{DNAdam}) \otimes \cn{abs}(\cn{Mdm2}))
          ~\oplus~  (\cn{abs}(\cn{DNAdam}) \otimes \cn{pres}(\cn{Mdm2}))
          ~\oplus~  (\cn{pres}(\cn{DNAdam}) \otimes \cn{pres}(\cn{Mdm2})) )
        ~\otimes~ \cn{dont\_care}(\cn{p53})$}
$$\begin{array}{c}
\dfrac
   {\dfrac 
      {P_{26F} \qquad P_{26N}}
      {(\cn{abs}(\cn{p53}) ~\otimes~ \cn{abs}(\cn{Mdm2}))  ~@~ n,
        (\cn{pres}(\cn{DNAdam}) ~\oplus~ \cn{abs}(\cn{DNAdam})) ~@~ n
        \vdashseq 
        \cdots \oplus \cdots ~@~ n} ~ \oplus L
   }
   {(\cn{abs}(\cn{p53}) ~\otimes~ \cn{abs}(\cn{Mdm2}))  ~@~ n 
    \vdashseq ~(s\_\cn{fireable}(6) \mathbin{\&} \delay{1} \mathcal{R}) \oplus s\_\cn{not\_fireable}(6) ~@~ n}
\end{array}$$

$P_{26F}$ :
$$\begin{array}{c}
\hspace{-0.5cm}
\dfrac
      {\dfrac 
          {P_{26Ff} \qquad P_{26Fr}}
          {\cn{abs}(\cn{p53}) ~\otimes~ \cn{abs}(\cn{Mdm2}) 
           ~\otimes~ \cn{abs}(\cn{DNAdam}) ~@~ n
          \vdashseq s\_\cn{fireable}(6) \mathbin{\&} \delay{1} \mathcal{R} ~@~ n}
          ~ \mathbin{\&} R
      }
      {(\cn{abs}(\cn{p53}) ~\otimes~ \cn{abs}(\cn{Mdm2}))  ~@~ n,  
       \cn{abs}(\cn{DNAdam}) ~@~ n \vdashseq 
       (s\_\cn{fireable}(6) \mathbin{\&} \delay{1} \mathcal{R}) \oplus s\_\cn{not\_fireable}(6) ~@~ n
      } ~ \oplus R_1
\end{array}$$
$P_{26Ff}$ :
$$\begin{array}{c}
\hspace{-0.5cm}
\cn{abs}(\cn{p53}) ~\otimes~ \cn{abs}(\cn{Mdm2}) ~\otimes~ \cn{abs}(\cn{DNAdam}) ~@~ n 
   \vdashseq s\_\cn{fireable}(6) ~@~ n ~~ [...; hyp]
\end{array}$$
$P_{26Fr}$ :
{\footnotesize
$$\begin{array}{c}
\hspace{-1.0cm}
\dfrac
  {\begin{array}{l} \cn{abs}(\cn{DNAdam}) \otimes \cn{abs}(\cn{Mdm2}) ~@~ n \\
       \vdashseq \cn{abs}(\cn{DNAdam}) \otimes \cn{abs}(\cn{Mdm2}) ~@~ n 
   \end{array}
   \quad 
   \dfrac
     {\dfrac
         {\cdots 
             \vdashseq \cn{abs}(\cn{p53}) \otimes \cn{pres}(\cn{Mdm2}) ~@~ n+1}
        {\cdots 
             \vdashseq ( \cn{abs}(\cn{p53}) \otimes \cn{pres}(\cn{Mdm2}) ) \oplus \cdots ~@~ n+1}   \oplus R
     \quad 
      \dfrac
         {\cdots
         \vdashseq \cn{abs}(\cn{DNAdam}) ~@~ n+1} 
         {\cdots 
         \vdashseq \cn{dont\_care}(\cn{DNAdam}) ~@~ n+1}   \oplus R
    }
     {\cn{abs}(\cn{p53}) ~@~ n, ~
       (\cn{abs}(\cn{DNAdam}) ~\otimes~ \cn{pres}(\cn{Mdm2})) ~@~ n+1,~ 
       \cn{unchanged}(\cn{p53} ) ~@~ n 
       \vdashseq \mathcal{R} ~@~ n+1 
    } 
  } 
  {s\_\cn{rule}(6) ~@~ n, 
   ~(\cn{abs}(\cn{p53}) ~\otimes~ \cn{abs}(\cn{Mdm2}) ~\otimes~ \cn{abs}(\cn{DNAdam})) ~@~ n 
   \vdashseq \delay{1} \mathcal{R} ~@~ n
  } \limp L
\end{array}$$
}
$P_{26N}$ :
$$\begin{array}{c}
\hspace{-0.5cm}
    \dfrac
      {\cn{pres}(\cn{DNAdam}) ~\otimes~ \cn{abs}(\cn{Mdm2}) ~@~ n,~
       \cn{abs}(\cn{p53}) ~@~ n
       \vdashseq s\_\cn{not\_fireable}(6) ~@~ n ~~ [...; hyp]}
      {(\cn{abs}(\cn{p53}) ~\otimes~ \cn{abs}(\cn{Mdm2}))  ~@~ n,  
       \cn{pres}(\cn{DNAdam}) ~@~ n \vdashseq 
       (s\_\cn{fireable}(6) \mathbin{\&} \delay{1} \mathcal{R}) \oplus s\_\cn{not\_fireable}(6) ~@~ n
      } ~ \oplus R_2
\end{array}$$
\end{proof}

%

\end{document}